\documentclass[letterpaper,11pt,reqno]{amsart}

\makeatletter
\usepackage{amssymb}
\usepackage{latexsym}
\usepackage{amsbsy}
\usepackage{amsfonts}
\usepackage{caption}
\usepackage{subcaption}
\usepackage{hyperref}
\usepackage{graphicx}
\usepackage{enumerate}
\usepackage{enumitem}
\usepackage{color}
\usepackage[round]{natbib} 
\usepackage{comment}
\usepackage{scalerel}
\usepackage{tikz}

\def\marginpar#1{\ignorespaces}

\textheight=600pt \textwidth=440pt \oddsidemargin=10pt \evensidemargin=10pt \topmargin=14pt
\headheight=8pt
\parindent=0pt
\parskip=2pt

\DeclareMathOperator\bet{Beta}
\DeclareMathOperator\Dir{Dir}
\DeclareMathOperator\var{Var}
\DeclareMathOperator\GEM{GEM}
\newtheorem{theorem}{Theorem}[section]
\newtheorem{lemma}[theorem]{Lemma}

\newtheorem{definition}[theorem]{Definition}

\numberwithin{equation}{section}
\makeatother
\begin{document}
\title[Proof of Stake]{Stability of shares in the Proof of Stake Protocol \\ -- Concentration and Phase Transitions}

\author[Wenpin Tang]{{Wenpin} Tang}
\address{Department of Industrial Engineer and Operations Research, Columbia University. Email: 
} \email{wt2319@columbia.edu}

\date{\today} 
\begin{abstract}
This paper is concerned with the stability of shares in a cryptocurrency where the new coins are issued 
according to the Proof of Stake protocol. 
We identify large, medium and small investors under various rewarding schemes, 
and show that the limiting behaviors of these investors are different --
for large investors their shares are stable, 
while for medium to small investors their shares may be volatile or even shrink to zero. 
For instance, with a geometric reward there is chaotic centralization,
where all the shares will eventually concentrate on one investor in a random manner.
This leads to the phase transition phenomenon, and the thresholds for stability are characterized.
In response to the increasing activities in blockchain networks,
we also propose and analyze a dynamical population model for the PoS protocol, which allows the number of investors to grow over the time.
Numerical experiments are provided to corroborate our theory.
\end{abstract}
\maketitle

\textit{Key words}: Blackwell-MacQueen urn, blockchain, concentration/anti-concentration, 
cryptocurrency, P\'olya urn, Proof of Stake.

\maketitle

\setcounter{tocdepth}{1}
\section{Introduction}
\label{sc1}

\quad In the past decade, blockchain technology has evolved tremendously, and is now regarded as an endeavor 
to facilitate the next generation digital exchange platform with a broad range of established or emerging applications including
cryptocurrency (\cite{Naka08, Wood14, HR17}), 
healthcare (\cite{MC19, Do19, TPE20}),
supply chain (\cite{SK19, CTT20}), 
electoral voting (\cite{W18}) and new-born arts as non-fungible tokens (\cite{WL21, Dow22}).

\quad A blockchain is a growing chain of records or transactions, called {\em blocks}, 
which are jointly maintained by a set of {\em miners} or {\em validators} using cryptography.
The idea of blockchain originated from {\em distributed consensus}, where multiple machines in a mission-critical system are required to make consistent decisions. 
The work of \cite{LSP82}, which introduced the famous {\em Byzantine Generals problem}, laid the foundation for distributed consensus. 
From then on, distributed consensus has been deployed in many digital infrastructures such as Google Wallet and Facebook Credit. 
Since 2009, Bitcoin (\cite{Naka08}) and various other cryptocurrencies have come around,
and allowed the secure transfer of assets without an intermediary such as a bank or payment processing network. 
These cryptocurrencies achieved a new breakthrough in distributed consensus
because they showed for the first time that consensus is viable in a decentralized and permissionless environment 
where {\em anyone} is allowed to work. 
This is in contrast with traditional trusted payment processing systems
as well as distributed consensus-based computing infrastructures such as Google Wallet and Facebook Credit, 
where only a small number of permissioned personnels can participate.
This way, Google Wallet and Facebook Credit are regarded as {\em permissioned blockchains},
while Bitcoin and other cryptocurrencies are {\em permissionless blockchains}.

\quad At the core of a blockchain is the {\em consensus protocol} or {\em smart contract},
which specifies a set of voting rules allowing miners to agree on an ever-growing, linearly-ordered log of transactions forming the distributed ledger.
There are several existing blockchain protocols, 
among which the most popular are {\em Proof of Work} (PoW, \cite{Naka08}) and {\em Proof of Stake} (PoS, \cite{KN12, Wood14}). 
\begin{itemize}[itemsep = 3 pt]
\item
In the PoW protocol, miners compete with each other by solving a puzzle, 
and the one who solves the puzzle first is allowed to append a new block to the blockchain.
Thus, the probability of a miner being selected is proportional to the computational power that the miner has.
The PoW coins include Bitcoin, Ethereum, Dogecoin...etc.
\item
In the PoS protocol, the blockchain is updated by a randomly selected miner
where the probability of a miner being drawn is proportional to the stake that the miner owns.
The PoS coins include BNB, Cardano, Solana...etc.
\end{itemize}

\quad As of May 28, 2022, Cryptoslate lists $352$ PoW coins with a total $\$ 800$B ($66 \%$) market capitalization, 
and $279$ PoS coins contributing a $\$ 122$B ($10\%$) market capitalization. 
One disadvantage of the PoW protocol is that competition among miners has led to exploding levels of energy consumption.
For instance, \cite{Mora18} pointed out the unsustainability of PoW-based blockchains,
and \cite{PS21} showed that energy consumption of Bitcoin is at least $10,000$ times higher than PoS-based blockchains. 
\cite{CK17, Saleh19, CHL20} also discussed drawbacks of PoW blockchains from economic perspectives. 
Though PoS-based blockchains are still not as widely used as PoW-based ones, 
there is a strong incentive among blockchain practitioners to switch from a PoW to PoS ecosystem
as is the case for Ethereum $2.0$ (\cite{Wick21}).

\quad A PoS blockchain mitigates the problem of energy efficiency and scalability; however, 
it receives several criticisms. 
\begin{itemize}[itemsep = 3 pt]
\item
{\em Security concern}. 
PoS protocols may suffer from the {\em Nothing-at-Stake} problem, 
where it is effortless for (malicious) miners to copy every outdated history and participate in all of them.
As shown in \cite{BD19}, any miner with more than $1/(1+e) \approx 27 \%$ of the total stake can launch a {\em double-spending attack},
which seems to be less robust than the $51\%$ attack to a PoW-based blockchain.
\cite{BN19} also pointed out difficulties in detecting Nothing-at-Stake for incentive-compatible PoS protocols.
One way to address this challenge is to come up with a clever block rewarding scheme,
as discussed in \cite{KR17, DPS19, Saleh21}.
\item
{\em Centralization concern}.
Critics have argued that the PoS protocol induces wealth concentration, thus leading to centralization (\cite{AF18, IR18, FK19}).
As one key feature of a permissionless blockchain is decentralization, 
the problem of centralization may put questions to the necessity of using a permissionless blockchain
since it yields concentration in voting mechanism as does a permissioned blockchain.
Previous works (\cite{AC20, AW22}) showed that the PoW protocol may generate concentration in mining power, and thus centralization in various settings. 
At a conceptual level, the effect of centralization or decentralization in both PoW and PoS blockchains arises from randomness. 
In a PoW blockchain, the randomness is external to the cryptocurrency,
while in a PoS blockchain, the randomness comes from the cryptocurrency itself.
\end{itemize}

\quad In this paper we consider the problem of wealth stability
for different types of miners of a PoS blockchain.
In the sequel, a PoS blockchain miner is also called an {\em investor}.
The prominent work of \cite{RS21} took the first step to study the long term evolution of large investor
shares in a PoS cryptocurrency via a {\em P\'olya urn model}. 
They proved under various reward assumptions that the shares in the long run do not deviate much from the initial ones 
as the initial coin offerings are large;
that is,
\begin{equation*}
[\mbox{share at time } \infty] - [\mbox{share at time } 0] \approx 0 \quad \mbox{for each investor}.
\end{equation*}
That is to say, the {\em rich-get-richer phenomenon} does not occur when there is a small number of investors each having a large proportion of initial coins. 
Statistical analysis from (\cite{FKP19, TGT20}) suggested that there be increasing levels of activities from smaller investors in blockchain networks.
Moreover, online platforms such as Robinhood Crypto allow for fractional trading. 
Thus, it is indispensable to understand the evolution of small investor shares as well.
Typically for these investors, 
$[\mbox{share at time 0}] \approx 0$
so the above relation holds trivially as `$0 - 0 = 0$'.
This prompts us to consider a more meaningful measure of evolution --
the {\em ratio} $[\mbox{share at time } \infty] / [\mbox{share at time } 0]$,
where `$0/0$' is indeterminate for small investors.

\quad The first contribution of this work is to provide a systematic study of the aforementioned ratio
under various rewarding schemes and for different types of investors
in the setting of \cite{RS21}.
The investors are categorized into {\em large}, {\em medium} and {\em small} ones in terms of their initial endowment of coins.
The key observation is that for investors whose initial coins scale differently to the total initial coin offering, 
the ratio exhibits different asymptotic behaviors such as concentration and anti-concentration.
This is a {\em phase transition} phenomenon, which has yet appeared in the literature on the economics of the PoS protocol.
For instance, we prove that under a constant rewarding scheme, 
the ratio of evolution of shares concentrates at one for large investors, 
and converges to a Gamma random variable for medium investors,
and decays towards zero for small investors (Theorem \ref{thm:1}).
Similar but slightly weaker results are established under a decreasing rewarding scheme (Theorem \ref{thm:2})
and an increasing rewarding scheme (Theorem \ref{thm:3}).
In the case of a geometric reward, our result shows that with probability one, 
all the shares will eventually go to one investor in a completely random way.
Such a behavior, which we call {\em chaotic centralization}, indicates that 
the geometric rewarding scheme induces long-run instability in a PoS blockchain.
This, however, is not contradictory to \cite{FK19} where it was shown that a geometric reward is optimal in a {\em fixed} time horizon. 

\quad As is clear from the growing popularity of blockchain technology, it may not be realistic to assume a {\em fixed} finite number of investors throughout.
The second contribution of this paper is to propose a dynamical approach, which starts with a finite number of investors and evolves to an infinite number of investors.
Our model can be viewed as an infinite population version of the P\'olya urn model, 
and the idea comes from the {\em Blackwell-MacQueen urn model} in Bayesian nonparametrics, as well as {\em species sampling} in computational biology.
In this dynamical population setting, 
we study the ratio of evolution of shares under various rewarding schemes (Theorem \ref{thm:6}). 
Our result shows that under a decreasing rewarding scheme with a suitably large decay,
the shares of the initial capitalists will be diluted in the long run. 
This observation is consistent with several existing practice (e.g. Bitcoin),
and it may also provide guidance on the choice of rewarding schemes 
so that decentralization is incorporated in a PoS blockchain. 

\quad Let us also mention some relevant works.
\cite{TF20} studied the stake-based voting for crowd-sourcing on a blockchain,
and showed that asymmetric information may lead to suboptimal outcomes. 
\cite{BFT21} considered a committee-based protocol, and highlighted its advantage over other PoS protocols.
 To the best of our knowledge, it is the first time in this work that heterogeneity of investors is taken into account, 
leading to a phase transition in centralization vs decentralization.
This paper also proposes and studies for the first time a dynamical model in which the number of investors is not fixed 
for the PoS protocol.
Our paper thus contributes to both the literature on the decentralization of blockchains
and that on the economics of the PoS protocol.

\quad The remainder of the paper is organized as follows.
In Section \ref{sc2}, we adopt the P\'olya urn to model the PoS protocol, and study the ratio of evolution of shares under various rewarding schemes.
In Section \ref{sc3}, we propose a dynamical population model
which allows the number of investors to increase over the time.
There we also provide analysis on the ratio of evolution of shares. 
We conclude with Section \ref{sc4}.
All the proofs are given in Appendix.

\section{Finite population model}
\label{sc2}

\quad In this section, we follow \cite{RS21} to use a time-dependent P\'olya urn model to describe the PoS protocol. 
Below we collect the notations that will be used throughout this paper.
\begin{itemize}[itemsep = 3 pt]
\item
$\mathbb{N}_{+}$ denotes the set of positive integers, and $\mathbb{R}_{+}$ denotes the set of positive real numbers.
\item
The symbol $a = \mathcal{O}(b)$ means that $a/b$ is bounded from above as $b \to \infty$;
the symbol $a = \Theta(b)$ means that $a/b$ is bounded from below and above as $b \to \infty$;
and the symbol $a = o(b)$ or $b \gg a$ means that $a/b$ decays towards zero as $b \to \infty$.
\end{itemize}

\quad Let $K \in \mathbb{N}_{+}$ be the number of investors, 
and $N \in \mathbb{R}_{+}$ be the number of initial coins/tokens in a PoS blockchain.
The investors are indexed by $[K]: = \{1, \ldots, K\}$, and investor $k$'s initial endowment of coins is $n_{k,0}$ with $\sum_{k = 1}^K n_{k,0} = N$.
We define the {\em investor share} as the fraction of coins each investor owns. 
So the initial investor shares $(\pi_{k, 0}, \, k \in [K])$ are given by
\begin{equation}
\label{eq:share0}
\pi_{k, 0}: = \frac{n_{k,0}}{N}, \quad k \in [K].
\end{equation}
Similarly, we denote by $n_{k,t}$ the number of coins owned by investor $k$ at time $t \in \mathbb{N}_{+}$, and the corresponding share is
\begin{equation}
\label{eq:sharet}
\pi_{k,t}:= \frac{n_{k,t}}{N_t}, \quad k \in [K], \quad \mbox{with } N_t:= \sum_{k=1}^K n_{k,t}.
\end{equation}
Here $N_t$ is the total number of coins at time $t$, and thus $N_0 = N$.
Clearly, for each $t \ge 0$ $(\pi_{k, t},\, k \in [K])$ forms a probability distribution on $[K]$.

\quad Now we provide a formal description of the PoS dynamics. 
At time $t \in \mathbb{N}_{+}$, investor $k$ is selected at random among $K$ investors with probability $\pi_{k,t-1}$.
Once selected, the investor receives a deterministic reward of $R_t \in \mathbb{R}_{+}$ coins.
Let $S_{k,t}$ be the random event that investor $k$ is selected at time $t$. 
Thus, the number of coins owned by each investor evolves as  
\begin{equation}
\label{eq:TPolya}
n_{k,t} = n_{k-1, t} + R_t 1_{S_{k,t}}, \quad k \in [K].
\end{equation}
It was shown in \cite[Proposition 5]{RS21} that investors have no incentive to trade their coins under some risk neural conditions. 
Without loss of generality, we exclude the possibility of exchanges among investors. 
Note that the total number of coins satisfies $N_t = N_{t-1} + R_t$. 
Combining \eqref{eq:sharet} and \eqref{eq:TPolya} yields a recursion of the investor shares:
\begin{equation}
\label{eq:Dshare}
\pi_{k,t} = \frac{N_{t-1}}{N_t} \pi_{k, t-1} + \frac{R_t}{N_t} 1_{S_{k,t}}, \quad k \in [K].
\end{equation}
Let $\mathcal{F}_t$ be the filtration as the $\sigma$-algebra generated by the random events $(S_{k, r}: k \in [K], r \le t)$.
It is easily seen that for each $k \in [K]$, the process of investor $k$'s share $(\pi_{k,t}, \, t \ge 0)$ is an $\mathcal{F}_t$-martingale.
By the martingale convergence theorem (see e.g. \cite{Durrett}), 
\begin{equation}
(\pi_{1,t}, \ldots, \pi_{K,t}) \longrightarrow (\pi_{1,\infty}, \ldots, \pi_{K,\infty}) \quad \mbox{as } t \rightarrow \infty \mbox{ with probability }1,
\end{equation}
where $(\pi_{1,\infty}, \ldots, \pi_{K,\infty})$ is some random probability distribution on $[K]$.

\quad We consider the evolution of the investor shares $(\pi_{k,t}, \, k \in [K])$, as well as its long time limit $(\pi_{k,\infty}, \, k \in [K])$.
One major problem is to know whether the PoS protocol triggers the rich-get-richer, or concentration phenomenon by comparing the investor shares at the initial time and in the long time limit.
As briefly discussed in the introduction, 
\cite{RS21} were concerned with large investors, i.e. $\pi_{k, 0} = \Theta(1)$ or equivalently $n_{k,0} = \Theta(N)$.
They proved under various reward assumptions that 
\begin{equation}
\label{eq:RSres}
\lim_{N \to \infty} \mathbb{P}(|\pi_{k, \infty} - \pi_{k,0}| > \varepsilon) = 0 \quad \mbox{for each } k \in [K] \mbox{ and each fixed } \varepsilon > 0.
\end{equation}
That is, the rich-get-richer phenomenon does not occur typically when there is a small number $K = \Theta(1)$ of investors each having a significant proportion $\pi_{k, 0} = \Theta(1)$ of initial coins. 
However, there may also be medium or even small investors whose initial shares $\pi_{k,0} = o(1)$ is relatively small.
For instance, 
\begin{itemize}[itemsep = 3 pt]
\item
When the number of investors $K \approx N$, there are less rich investors with initial endowment of coins $n_{k, 0} = f(N)$ such that $f(N) \to \infty$ but $f(N)/N \to 0$, and small investors with initial number of coins $n_{k,0} = \Theta(1)$.
\item
When the number of investors $K \gg N$, there may also exist small investors who own fractional number of coins $n_{k,0} = o(1)$.
\end{itemize}
In these cases, we have $\pi_{k,0} = o(1)$,
so for each fixed $\varepsilon > 0$ the probability $\mathbb{P}(|\pi_{k, \infty} - \pi_{k,0}| > \varepsilon)$ always goes to $0$ as $N \to \infty$.
Thus, it is more reasonable to consider the ratios $\pi_{k, t}/\pi_{k,0}$ or $\pi_{k, \infty}/\pi_{k,0}$ instead of the differences.  
In the following subsections,
we study the ratio of evolution of shares under various rewarding schemes and for different types of investors, 
encompassing all the above instances.
In particular, we observe phase transitions for different types of investors in terms of wealth stability.

\subsection{Constant reward}
We start with the constant reward $R_t \equiv R > 0$ (e.g. Blackcoin).
Let $\Gamma(z):=\int_0^{\infty} x^{z-1} e^{-x} dx$ be the Gamma function.
Recall that the Dirichlet distribution with parameters $(a_1, \ldots, a_K)$, which we simply denote $\Dir(a_1, \ldots, a_K)$, has support on the standard simplex $\{(x_1, \ldots, x_K) \in \mathbb{R}_{+}^K: \sum_{k = 1}^K x_k = 1\}$ and has density
\begin{equation}
\label{eq:Dirichlet}
f(x_1, \ldots, x_K) = \frac{\Gamma\left(\sum_{k=1}^K a_k \right)}{\prod_{k=1}^K \Gamma(a_k)} \prod_{k=1}^K x_k^{a_k-1}.
\end{equation}
For $K=2$, the Dirichlet distribution reduces to the beta distribution which is denoted by $\bet(a_1, a_2)$.
It is easily seen that if $(x_1, \ldots,x_K) \stackrel{d}{=} \Dir(a_1, \ldots, a_K)$ then for each $k \in [K]$
$x_k \stackrel{d}{=} \bet(a_k, \sum_{j \ne k} a_j)$.
The following result is concerned with the evolution of shares in a PoS protocol with a constant reward.
\begin{theorem}
\label{thm:1}
Assume that the coin reward is $R_t \equiv R > 0$.
Then the investor shares have a limiting distribution
\begin{equation}
\label{eq:limDirichlet}
(\pi_{1,\infty}, \ldots, \pi_{K,\infty}) \stackrel{d}{=} \Dir\left(\frac{n_{1,0}}{R}, \ldots,\frac{n_{K,0}}{R}\right).
\end{equation}
Moreover, 
\begin{enumerate}[itemsep = 3 pt]
\item[(i)]
For $n_{k,0} = f(N)$ such that $f(N) \to \infty$ as $N \to \infty$ (i.e. $\pi_{k,0} \gg 1/N$), we have for each $\varepsilon > 0$ and each $t \ge 1$ or $t = \infty$:
\begin{equation}
\label{eq:coninter}
\mathbb{P}\left(\left|\frac{\pi_{k, t}}{\pi_{k,0}} - 1\right| > \varepsilon \right) \le \frac{5 R}{4 \varepsilon^2 f(N)},
\end{equation}
which converges to $0$ as $N \to \infty$.
\item[(ii)]
For $n_{k,0} = \Theta(1)$ (i.e. $\pi_{k,0} = \Theta(1/N)$), we have the convergence in distribution:
\begin{equation}
\label{eq:consmall}
\frac{\pi_{k,\infty}}{\pi_{k,0}} \stackrel{d}{\longrightarrow} \frac{R}{n_{k,0}} \gamma\left( \frac{n_{k,0}}{R}\right) \quad \mbox{as } N \to \infty,
\end{equation}
where $\gamma\left(\frac{n_{k,0}}{R}\right)$ is a Gamma random variable with density $x^{\frac{n_{k,0}}{R}-1} e^{-x} 1_{x > 0}/\Gamma\left(\frac{n_{k,0}}{R}\right)$.
\item[(iii)]
For $n_{k,0} = o(1)$ (i.e. $\pi_{k,0} = o(1/N)$), we have $\var(\pi_{k,\infty}/\pi_{k,0}) \to \infty$ as $N \to \infty$.
Moreover, for each $\varepsilon > 0$:
\begin{equation}
\label{eq:continy}
\mathbb{P}\left(\frac{\pi_{k,\infty}}{\pi_{k,0}} < \varepsilon \right) \to 1 \quad \mbox{as } N \to \infty.
\end{equation}
\end{enumerate}
\end{theorem}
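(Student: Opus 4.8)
The plan is to establish the Dirichlet limiting law first and then read off the three regimes from the marginal Beta distributions, together with one second-moment estimate.

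\emph{Step 1 (Dirichlet limit).} I would identify the sequence of selected investors as a P\'olya / Blackwell--MacQueen urn. Writing $c_{k,t}$ for the number of times investor $k$ has been selected in the first $t$ rounds, \eqref{eq:TPolya} gives $n_{k,t}=n_{k,0}+Rc_{k,t}$ and $N_t=N+Rt$, so the selection probability is
\[
\mathbb{P}(S_{k,t+1}\mid\mathcal{F}_t)=\pi_{k,t}=\frac{(n_{k,0}/R)+c_{k,t}}{(N/R)+t}.
\]
This is exactly the predictive rule of a Dirichlet--multinomial scheme with pseudo-counts $a_k=n_{k,0}/R$; hence the selection sequence is exchangeable with directing measure $\Dir(n_{1,0}/R,\dots,n_{K,0}/R)$, and the shares $\pi_{k,t}$ (which are smoothed empirical frequencies) converge to a draw from this law, proving \eqref{eq:limDirichlet}. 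In particular each marginal satisfies $\pi_{k,\infty}\stackrel{d}{=}\bet(n_{k,0}/R,(N-n_{k,0})/R)$.

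\emph{Step 2 (part (i)).} Since $(\pi_{k,t})$ is a martingale, $\mathbb{E}[\pi_{k,t}]=\pi_{k,0}$ and Chebyshev reduces the claim to a variance bound. From \eqref{eq:Dshare} and $1_{S_{k,t}}^2=1_{S_{k,t}}$ I would compute the conditional second moment
\[
\mathbb{E}[\pi_{k,t}^2\mid\mathcal{F}_{t-1}]=\pi_{k,t-1}^2+\frac{R^2}{N_t^2}\,\pi_{k,t-1}(1-\pi_{k,t-1}),
\]
take expectations, bound $\pi_{k,t-1}(1-\pi_{k,t-1})\le\pi_{k,t-1}$, and telescope using $\sum_{s\ge1}(N+Rs)^{-2}\le 1/(RN)$ (a slightly cruder estimate of the tail sum produces the stated factor $5/4$). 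This yields $\var(\pi_{k,t})\le Rn_{k,0}/N^2$ uniformly in $t\le\infty$, hence $\var(\pi_{k,t})/\pi_{k,0}^2\le R/f(N)$, and Chebyshev gives \eqref{eq:coninter}.

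\emph{Step 3 (parts (ii) and (iii)).} Both follow from the Beta marginal with $a=n_{k,0}/R$, $b=(N-n_{k,0})/R$, writing $\pi_{k,\infty}/\pi_{k,0}=(N/n_{k,0})\pi_{k,\infty}=\tfrac{R}{n_{k,0}}\cdot\tfrac{N}{R}\pi_{k,\infty}$. For (ii), $a$ is fixed and $b\to\infty$, so I would invoke the classical Beta-to-Gamma scaling: the density of $(a+b)\pi_{k,\infty}=(N/R)\pi_{k,\infty}$ converges, via $\Gamma(a+b)/(\Gamma(b)(a+b)^a)\to1$ and $(1-y/(a+b))^{a+b}\to e^{-y}$, to the $\gamma(a)$ density, giving \eqref{eq:consmall} after the factor $R/n_{k,0}$. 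For (iii) the divergence of the variance is immediate from the Beta variance formula, $\var(\pi_{k,\infty})/\pi_{k,0}^2=R(N-n_{k,0})/(n_{k,0}(N+R))\sim R/n_{k,0}\to\infty$. For \eqref{eq:continy} I would use $\pi_{k,\infty}\stackrel{d}{=}\gamma_a/(\gamma_a+\gamma_b)$ with independent Gamma variables; since $(\gamma_a+\gamma_b)/(a+b)\to1$ in probability, the event $\{\pi_{k,\infty}/\pi_{k,0}<\varepsilon\}$ reduces to $\{\gamma_a/a<\varepsilon\}$, and a direct lower-tail estimate gives $\mathbb{P}(\gamma_a<\varepsilon a)\gtrsim(\varepsilon a)^a=e^{a\log(\varepsilon a)}\to1$, because $a\log a\to0$ as $a\to0$.

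\emph{Main obstacle.} The delicate point is \eqref{eq:continy}: the ratio has mean $1$ and, by the first half of (iii), \emph{infinite} limiting variance, yet almost all of its mass must collapse below any $\varepsilon$. Making this precise requires the asymptotic $a\log a\to0$ together with uniform control of the Gamma ratio and of the factor $(1-\delta)^{b-1}$ at the scale $\delta=\varepsilon n_{k,0}/N\asymp\varepsilon a/b$; the companion heavy upper tail is exactly what pins the mean at $1$, and cleanly separating these two competing effects is the crux of the small-investor phase.
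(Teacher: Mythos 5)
Your proposal is correct and follows essentially the same route as the paper: the Dirichlet limiting law with Beta marginals $\pi_{k,\infty}\stackrel{d}{=}\bet(n_{k,0}/R,(N-n_{k,0})/R)$, Chebyshev with a uniform-in-$t$ variance bound for (i), the Beta-to-Gamma scaling for (ii), and the Gamma lower-tail estimate $\mathbb{P}(\gamma_a<\varepsilon a)\gtrsim(\varepsilon a)^a\to 1$ via $a\log a\to 0$ for (iii). If anything, your treatment is slightly more careful in two spots: you derive the variance bound from the martingale second-moment recursion and telescoping (the paper instead quotes the exact variance formula from the literature), and in (iii) you work with $\pi_{k,\infty}\stackrel{d}{=}\gamma_a/(\gamma_a+\gamma_b)$ and control the denominator by the law of large numbers, whereas the paper asserts the identity $\pi_{k,\infty}/\pi_{k,0}\stackrel{d}{=}(R/n_{k,0})\gamma(n_{k,0}/R)$ as exact when it is only a limiting approximation.
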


\quad The proof of Theorem \ref{thm:1} is given in Appendix \ref{scA}.
Let us make a few comments. 
The theorem reveals a phase transition of shares in the long run between large, medium and small investors. 
Part ($i$) shows that for large investors, their shares are stable in the sense that the ratio between the share at a later time $t$ and the initial share, i.e. $\pi_{k,t}/\pi_{k,0}$, converges in probability to $1$ as the initial coin offerings $N \to \infty$.
In particular, for extremely large investors with initial coins $n_{k,0} = \Theta(N)$ 
this is equivalent to the stability as defined by \eqref{eq:RSres}.
The stability also holds for less rich large investors with $n_{k,0} \gg 1$ but $n_{k,0} = o(N)$.
Note that the concentration bound \eqref{eq:coninter} is uniform in time $t$, implying that the rich-get-richer phenomenon does not occur at any time.
See Figure \ref{fig:1} for an illustration of the bound.

\begin{figure}[htb]
\includegraphics[width=0.45\columnwidth]{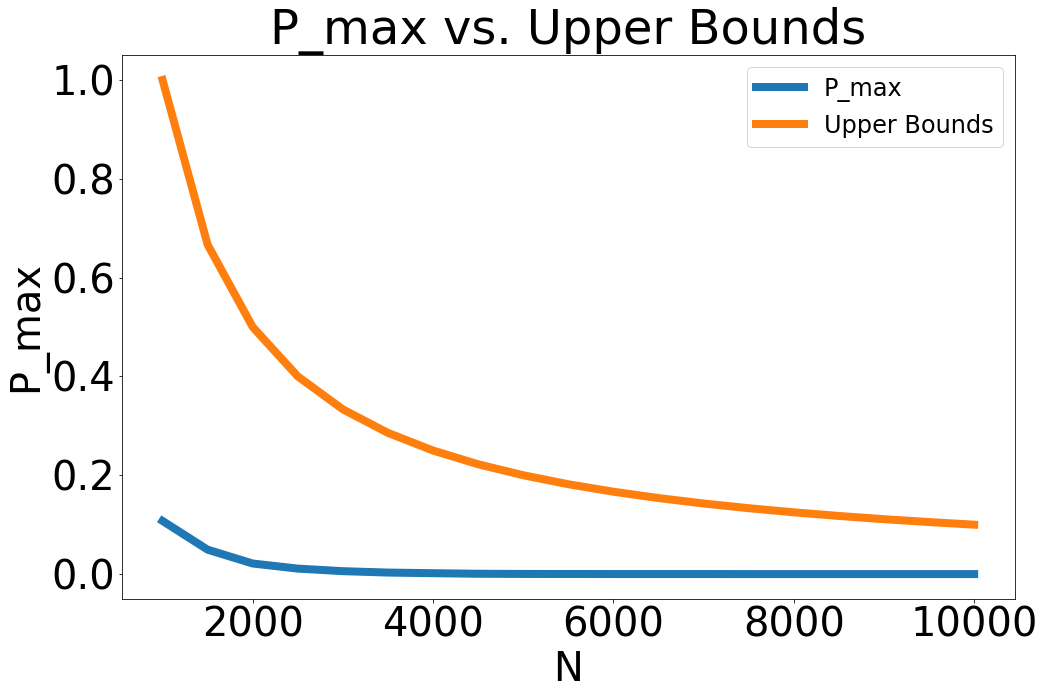}
\caption{Constant reward: stability of $\pi_{k,t}/\pi_{k,0}$ for large investors.
Blue curve: $P_{\max}$ is a MC estimate of $\max_{1 \le t \le 50000} \mathbb{P}\left(\left| \frac{\pi_{k,t}}{\pi_{k,0}}\right| > 0.05 \right)$.
Orange curve: right side upper bound in \eqref{eq:coninter} with $R = 1$, $\varepsilon = 0.05$, $n_{k,0} = N/2$ and $N \in \{1000, 1500, 2000, \ldots, 10000\}$.}
\label{fig:1}
\end{figure}

\quad On the other hand, the evolution of shares for medium or small investors has very different limiting behaviors. 
Part ($ii$) shows that medium investor's shares are volatile in such a way that the ratio $\pi_{k,\infty}/\pi_{k,0}$ is approximated by a gamma distribution independent of the initial coin offerings, and hence $\var(\pi_{k,\infty}/\pi_{k,0}) \approx \frac{1}{n_{k,0}}$.
For instance, if $n_{k,0} = R = 1$ the limiting distribution of the ratio $\pi_{k,\infty}/\pi_{k,0}$ reduces to the exponential distribution with parameter $1$. 
See Figure \ref{fig:2A} for an illustration of this approximation.
In this case, we have
\begin{equation*}
\mathbb{P}\left(\frac{\pi_{k,\infty}}{\pi_{k,0}} >  \theta \right) \approx e^{-\theta} \quad \mbox{as } N \to \infty.
\end{equation*}
So with probability $e^{-2} \approx 0.135$ a medium investor's share will double, and with probability $1 - e^{-0.5} \approx 0.393$ this investor's share will be halved. 
Part ($iii$) shows that for small investors, their shares will be shrinking along the time, and the ratio $\pi_{k,\infty}/\pi_{k,0}$ converges to $0$ in probability as $N \to \infty$.
This is indeed the {\em poor-get-poorer} phenomenon as illustrated in Figure \ref{fig:2B}.

\begin{figure}[htb]
    \centering
\begin{subfigure}{0.45\textwidth}
  \includegraphics[width=\linewidth]{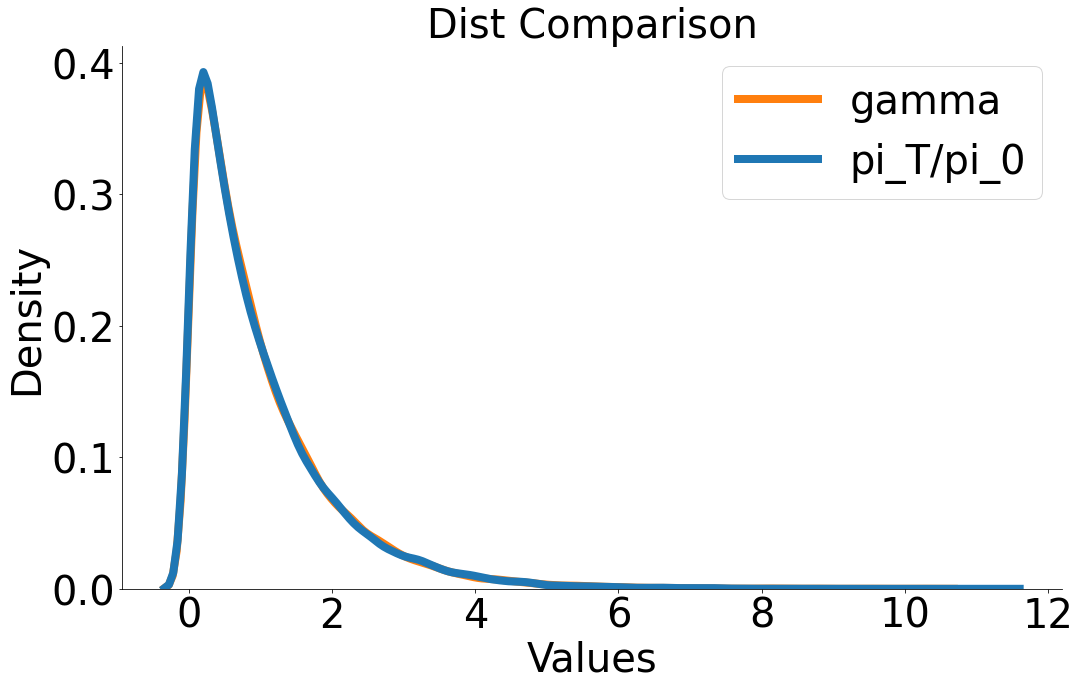}
  \caption{Blue curve: histogram of $\pi_{k,50000}/\pi_{k,0}$ with $n_{k,0} = R = 1$ and $N = 100$.
  Orange curve: Gamma distribution.}
  \label{fig:2A}
\end{subfigure}\hfil
\begin{subfigure}{0.45\textwidth}
  \includegraphics[width=\linewidth]{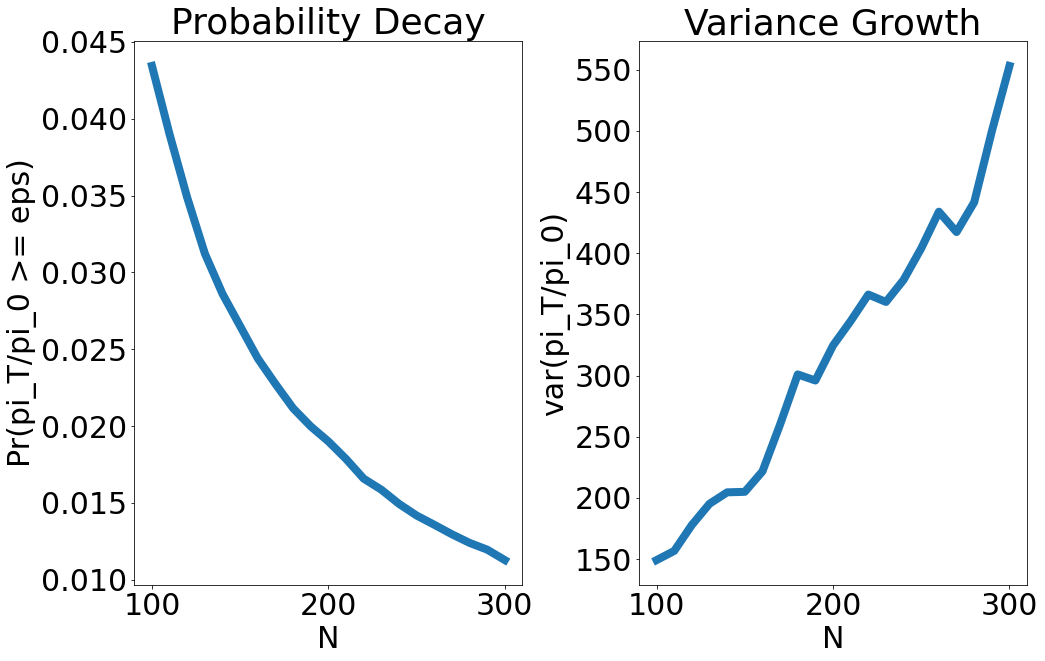}
  \caption{MC estimates of $\mathbb{P}\left(\pi_{k,50000}/\pi_{k,0} > 0.05 \right)$ and $\var\left(\pi_{k,50000}/\pi_{k,0} \right)$ with $R = 1$, $n_{k,0} = N^{-1.1}$ and $N \in \{100, 110, \ldots, 300\}$}
  \label{fig:2B}
\end{subfigure}\hfil
\caption{Constant reward: instability of $\pi_{k,t}/\pi_{k,0}$ for medium and small investors.}
\end{figure}

\quad Finally, we note that the results in Theorem \ref{thm:1} hold jointly for a finite number of investors belonging to the same category. 
That is,
\begin{itemize}[itemsep = 3 pt]
\item
For $n_{k_1, 0}, \ldots, n_{k_\ell, 0}$ with $k_1, \ldots, k_\ell \in [K]$ satisfying the conditions in ($i$),
\begin{equation*}
\left(\frac{\pi_{k_1,t}}{\pi_{k_1,0}}, \ldots, \frac{\pi_{k_\ell,t}}{\pi_{k_\ell,0}} \right) \rightarrow 1 \quad \mbox{in probability as } N \to \infty.
\end{equation*}
\item
For $n_{k_1, 0}, \ldots, n_{k_\ell, 0}$ with $k_1, \ldots, k_\ell \in [K]$ satisfying the condition in ($ii$),
\begin{equation*}
\left(\frac{\pi_{k_1,\infty}}{\pi_{k_1,0}}, \ldots, \frac{\pi_{k_\ell,\infty}}{\pi_{k_\ell,0}} \right) \stackrel{d}{\longrightarrow} \left(\frac{R}{n_{k_1,0}} \gamma\left(\frac{n_{k_1,0}}{R}\right), \ldots, \frac{R}{n_{k_\ell,0}} \gamma\left(\frac{n_{k_{\ell},0}}{R}\right) \right) \quad \mbox{as } N \to \infty,
\end{equation*}
where $\gamma(n_{k_1,0}/R), \ldots, \gamma(n_{k_{\ell},0}/R)$ are independent Gamma random variables. 
\item
For $n_{k_1, 0}, \ldots, n_{k_\ell, 0}$ with $k_1, \ldots, k_\ell \in [K]$ satisfying the conditions in ($iii$),
\begin{equation*}
\left(\frac{\pi_{k_1,t}}{\pi_{k_1,0}}, \ldots, \frac{\pi_{k_\ell,t}}{\pi_{k_\ell,0}} \right) \rightarrow 0 \quad \mbox{in probability as } N \to \infty.
\end{equation*}
\end{itemize}

\subsection{Decreasing reward}

We consider the decreasing reward such that $R_t \ge R_{t+1}$ for each $t \ge 0$ (e.g. Bitcoin).
The following theorem shows that the ratio of evolution of shares is more complicated 
in the PoS protocol with a decreasing reward scheme, and the phase transition may depend on how the reward function decreases over the time.

\begin{theorem}
\label{thm:2}
Assume that the coin reward is $R_t$ with $R_t \ge R_{t+1}$ for each $t \ge 0$.
\begin{enumerate}[itemsep = 3 pt]
\item
If $R_t$ is bounded away from $0$, i.e. $\lim_{t \ge 0} R_t = \underline{R} > 0$, then
\begin{enumerate}[itemsep = 3 pt]
\item[(i)]
For $n_{k,0} = f(N)$ such that $f(N) \to \infty$ as $N \to \infty$ (i.e. $\pi_{k,0} \gg 1/N$), we have for each $\varepsilon > 0$ and each $t \ge 1$ or $t = \infty$:
\begin{equation}
\label{eq:decinter1}
\mathbb{P}\left(\left|\frac{\pi_{k, t}}{\pi_{k,0}} - 1\right| > \varepsilon \right) \le \frac{R_1}{\varepsilon^2 f(N)},
\end{equation}
which converges to $0$ as $N \to \infty$.
\item[(ii)]
For $n_{k,0} = \Theta(1)$ (i.e. $\pi_{k,0} = \Theta(1/N)$), we have $\var \left(\frac{\pi_{k, \infty}}{\pi_{k,0}}\right) = \Theta(1)$.
Moreover, there is $c > 0$ independent of $N$ such that for $\varepsilon > 0$ sufficiently small:
\begin{equation}
\label{eq:decsmall1}
 \mathbb{P}\left(\left|\frac{\pi_{k, \infty}}{\pi_{k,0}} - 1 \right| > \varepsilon\right) \ge c.
\end{equation}
\item[(iii)]
For $n_{k,0} = o(1)$ (i.e. $\pi_{k,0} = o(1/N)$), we have $\var \left(\frac{\pi_{k, \infty}}{\pi_{k,0}}\right) \to \infty$ as $N \to \infty$.
\end{enumerate}
\item
If $R_t = \Theta(t^{-\alpha})$ for $\alpha > \frac{1}{2}$, then
\begin{enumerate}[itemsep = 3 pt]
\item[(i)]
For $n_{k,0} > 0$ such that $N n_{k,0} \to \infty$ as $N \to \infty$ (i.e. $N^2 \pi_{k,0} \to 0$), we have for each $\varepsilon > 0$ and each $t \ge 1$ or $t = \infty$:
\begin{equation}
\label{eq:decinter2}
\mathbb{P}\left(\left|\frac{\pi_{k, t}}{\pi_{k,0}} - 1\right| > \varepsilon \right) \le \frac{\sum_{t \ge 1} R_t^2}{\varepsilon^2 N n_{k,0}},
\end{equation}
which converges to $0$ as $N \to \infty$.
\item[(ii)]
For $n_{k,0} = \Theta(1/N)$ (i.e. $\pi_{k,0} = \Theta(1/N^2)$), we have $\var \left(\frac{\pi_{k, \infty}}{\pi_{k,0}}\right) = \Theta(1)$.
\item[(iii)]
For $n_{k,0} = o(1/N)$ (i.e. $\pi_{k,0} = o(1/N^2)$), we have $\var \left(\frac{\pi_{k, \infty}}{\pi_{k,0}}\right) \to \infty$ as $N \to \infty$.
\end{enumerate}
\item
If $R_t = \Theta(t^{-\alpha})$ for $\alpha < \frac{1}{2}$, then
\begin{enumerate}[itemsep = 3 pt]
\item[(i)]
For $n_{k,0} > 0$ such that $N^{\frac{\alpha}{1-\alpha}} n_{k,0} \to \infty$ as $N \to \infty$ (i.e. $N^{\frac{1}{1-\alpha}} \pi_{k,0} \to 0$), 
there is $C > 0$ independent of $t$ and $N$ such that for each $\varepsilon > 0$ and each $t \ge 1$ or $t = \infty$:
\begin{equation}
\label{eq:decinter3}
\mathbb{P}\left(\left|\frac{\pi_{k, t}}{\pi_{k,0}} - 1\right| > \varepsilon \right) \le \frac{C}{N^{\frac{\alpha}{1-\alpha}} n_{k,0}},
\end{equation}
which converges to $0$ as $N \to \infty$.
\item[(ii)]
For $n_{k,0} = \Theta(N^{-\frac{\alpha}{1-\alpha}})$ (i.e. $\pi_{k,0} = \Theta(N^{-\frac{1}{1-\alpha}})$), we have $\var \left(\frac{\pi_{k, \infty}}{\pi_{k,0}}\right) = \Theta(1)$.
Moreover, there is $c > 0$ independent of $N$ such that for $\varepsilon > 0$ sufficiently small:
\begin{equation}
\label{eq:decsmall3}
 \mathbb{P}\left(\left|\frac{\pi_{k, \infty}}{\pi_{k,0}} - 1 \right| > \varepsilon\right) \ge c.
\end{equation}
\item[(iii)]
For $n_{k,0} = o(N^{-\frac{\alpha}{1-\alpha}})$ (i.e. $\pi_{k,0} = o(N^{-\frac{1}{1-\alpha}})$), we have $\var \left(\frac{\pi_{k, \infty}}{\pi_{k,0}}\right) \to \infty$ as $N \to \infty$.
\end{enumerate}
\end{enumerate}
\end{theorem}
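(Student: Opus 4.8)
The plan is to reduce the entire theorem to one exact variance identity for the share martingale and then to read off each statement from the order of magnitude of a single series in $N$. First I would compute $\var(\pi_{k,t})$ exactly. Writing $a_s := R_s^2/N_s^2$, the recursion \eqref{eq:Dshare} gives the increment $\pi_{k,t}-\pi_{k,t-1} = \frac{R_t}{N_t}(1_{S_{k,t}} - \pi_{k,t-1})$, whose conditional second moment is $a_t\,\pi_{k,t-1}(1-\pi_{k,t-1})$ by the Bernoulli variance. Using orthogonality of the martingale increments together with $\mathbb{E}\pi_{k,s-1} = \pi_{k,0}$ and $\mathbb{E}\pi_{k,s-1}^2 = v_{s-1}+\pi_{k,0}^2$ (where $v_s := \var(\pi_{k,s})$), this becomes the linear recursion $v_s = (1-a_s)v_{s-1} + a_s\,\pi_{k,0}(1-\pi_{k,0})$ with $v_0 = 0$, which solves to
\begin{equation*}
\var\!\left(\frac{\pi_{k,t}}{\pi_{k,0}}\right) = \frac{1-\pi_{k,0}}{\pi_{k,0}}\left[1 - \prod_{s=1}^t\!\left(1-\frac{R_s^2}{N_s^2}\right)\right].
\end{equation*}
Since $(\pi_{k,t})$ is bounded and converges in $L^2$, the $t=\infty$ case is the same formula with the infinite product, and since the bracket is nondecreasing in $t$, every bound proved at $t=\infty$ is automatically uniform in $t$ -- exactly the form of the claims.

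Next I would set $\Sigma_N := \sum_{s\ge1} R_s^2/N_s^2$ and use the elementary sandwich $1 - \prod_s(1-a_s)\le\Sigma_N$ together with $1-\prod_s(1-a_s)\ge 1-e^{-\Sigma_N}\asymp\Sigma_N$ when $\Sigma_N$ is small, so that the whole theorem is governed by the order of $\Sigma_N$. The three parts correspond to three estimates. In part (1), where $\sum R_s^2$ diverges, I would use the telescoping bound $\sum_s R_s/N_s^2\le\sum_s R_s/(N_{s-1}N_s) = 1/N$, giving $\Sigma_N\le R_1/N$ and (from $N_s = \Theta(N+s)$) a matching $\Sigma_N = \Theta(1/N)$. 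In part (2), where $\alpha>\tfrac12$ makes $\sum R_s^2$ convergent, the crude bound $N_s\ge N$ gives $\Sigma_N\le N^{-2}\sum_s R_s^2$, reproducing \eqref{eq:decinter2}, with $\Sigma_N = \Theta(N^{-2})$. In part (3), where $\alpha<\tfrac12$, neither bound suffices and I would estimate $\sum_s s^{-2\alpha}/(N+s^{1-\alpha})^2$ by splitting at the crossover $s\asymp N^{1/(1-\alpha)}$, obtaining $\Sigma_N = \Theta(N^{-1/(1-\alpha)})$. Substituting these and $\frac{1-\pi_{k,0}}{\pi_{k,0}}\asymp N/n_{k,0}$ into the displayed formula gives the variance orders $\Theta(n_{k,0}^{-1})$, $\Theta((Nn_{k,0})^{-1})$ and $\Theta((N^{\alpha/(1-\alpha)}n_{k,0})^{-1})$; Chebyshev then yields the concentration bounds in each part~(i), while matching the threshold exponent of $n_{k,0}$ gives the stated $\var = \Theta(1)$ and $\var\to\infty$ in each (ii) and (iii).

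The only assertions not implied by the variance alone are the anti-concentration lower bounds \eqref{eq:decsmall1} and \eqref{eq:decsmall3}. For these I would compute the fourth moment by the same device: expanding $(\pi_{k,t}-\pi_{k,0})^4$, taking conditional expectations and invoking the Bernoulli central moments of orders $2,3,4$ produces a linear recursion for $\mathbb{E}(\pi_{k,t}-\pi_{k,0})^4$ driven by the already-controlled second moment; bounding it in the $\Theta(1)$-variance regime shows $\mathbb{E}(\pi_{k,\infty}/\pi_{k,0}-1)^4 = \mathcal{O}(1)$. Paley--Zygmund applied to $Y := (\pi_{k,\infty}/\pi_{k,0}-1)^2$ then gives $\mathbb{P}(|\pi_{k,\infty}/\pi_{k,0}-1|>\varepsilon)\ge c$ for all sufficiently small $\varepsilon$, with $c>0$ independent of $N$.

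I expect two genuine obstacles. The first is the crossover estimate of $\Sigma_N$ in part (3): recovering the sharp exponent $-1/(1-\alpha)$ (rather than the loose $1/N$ that telescoping would give) requires carefully matching the two regimes by integral comparison, and it is precisely this sharpness that produces the correct threshold $N^{-\alpha/(1-\alpha)}$ for $n_{k,0}$. The second is the uniform-in-$N$ fourth-moment bound: one must show that the inhomogeneous terms of the fourth-moment recursion contribute only at the order of the squared variance, so that the normalized fourth moment stays bounded as $N\to\infty$.
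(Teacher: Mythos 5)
Your proposal is correct and follows essentially the same route as the paper: your product formula $\var(\pi_{k,t}/\pi_{k,0})=\frac{1-\pi_{k,0}}{\pi_{k,0}}\bigl[1-\prod_{s\le t}(1-R_s^2/N_s^2)\bigr]$ is exactly the closed form of the paper's recursion $a_{t+1}=a_t+(R_{t+1}/N_{t+1})^2(1-a_t)$, your three regime estimates of $\Sigma_N$ (telescoping, $N_s\ge N$ with $\sum R_s^2<\infty$, and the integral split at $s\asymp N^{1/(1-\alpha)}$) mirror the paper's Lemmas \ref{lem:asympnozero}--\ref{lem:smallerhalf}, and your fourth-moment recursion plus Paley--Zygmund is the same second-moment-method argument the paper carries out via Cauchy--Schwarz in \eqref{eq:128}. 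No gaps.
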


\quad The proof of Theorem \ref{thm:2} is given in Appendix \ref{scB}.
The theorem distinguishes three ways that the reward function decreases,
leading to different thresholds in the phase transition to categorize large, medium and small investors.
Part ($1$) assumes that the reward function decreases to a nonzero value.
In this case, the threshold to identify large, medium and small investors is $n_{k,0} = \Theta(1)$, 
which is the same as that of the PoS protocol with a constant reward.
This may not be surprising,
since the underlying dynamics is not much different from the one with a constant reward in the long run. 
For large investors, the ratio $\pi_{k, \infty}/\pi_{k,0}$ concentrates at $1$;
while for medium investors there is the {\em anti-concentration} bound \eqref{eq:decsmall1},
indicating that the evolution of a medium investor's share is no longer stable, and may be volatile. 
For small investors, we know that the variance of $\pi_{k, \infty}/\pi_{k,0}$ is large.

\quad Part ($2$) considers a fast decreasing reward $R_t = \Theta(t^{-\alpha})$ with $\alpha > 1/2$.
In this case, the threshold to identify different types of investors is $n_{k,0} = 1/N$, 
which is independent of the exact decreasing rate of $R_t$.
For large investors, the ratio $\pi_{k, \infty}/\pi_{k,0}$ concentrates at $1$;
while for medium (resp. small) investors, 
the variance of $\pi_{k, \infty}/\pi_{k,0}$ is bounded (resp. tends to infinity).

\quad Finally, part ($3$) deals with a slow decreasing reward $R_t = \Theta(t^{-\alpha})$ with $\alpha < 1/2$.
In contrast with the previous cases, 
the threshold $n_{k,0} = \Theta(N^{-\frac{\alpha}{1 - \alpha}})$ to identify different types of investors 
depends on the decreasing rate of $R_t$.
When $\alpha \to 0$, the threshold becomes $\Theta(1)$ which is consistent with that in part ($1$), 
and when $\alpha \to 1/2$, the threshold becomes $\Theta(1/N)$  which agrees with that in part ($2$).
For large investors, the ratio $\pi_{k, \infty}/\pi_{k,0}$ concentrates at $1$;
while for medium investors there is the anti-concentration bound \eqref{eq:decsmall3}.
For small investors, the variance of $\pi_{k, \infty}/\pi_{k,0}$ is large.

\quad Note that the statements for medium and small investors in all three cases
are weaker than those in Theorem \ref{thm:1}.
This is due to the fact the exact distributions of $\pi_{k,\infty}$ are not available for general rewarding schemes.
See Appendix \ref{scE1} for numerical illustrations concerning Theorem \ref{thm:2}.

\subsection{Increasing reward}

We consider the increasing reward of form $R_t = \rho N_{t-1}^{\gamma}$ for some $\rho > 0$ and $\gamma > 0$ (e.g. EOS).
The following theorem shows that the shares evolve very differently in the PoS protocol 
with a geometric reward vs a sub-geometric one. 

\begin{theorem}
\label{thm:3}
Assume that the reward $R_t = \rho N_{t-1}^{\gamma}$ for some $\rho > 0$ and $\gamma > 0$.
\begin{enumerate}[itemsep = 3 pt]
\item
If $\gamma > 1$, 
then $\pi_{k, \infty} \in \{0,1\}$ almost surely with
\begin{equation}
\label{eq:incextreme}
\mathbb{P}(\pi_{k, \infty} = 1) = \pi_{k,0}, \quad \mathbb{P}(\pi_{k, \infty} = 0) = 1 - \pi_{k,0}
\end{equation}
\item
If $\gamma < 1$, then
\begin{enumerate}[itemsep = 3 pt]
\item[(i)]
For $n_{k,0} = f(N)$ such that $f(N)/N^{\gamma} \to \infty$ as $N \to \infty$ (i.e. $\pi_{k,0})/N^{\gamma - 1} \to \infty$), we have for each $\varepsilon > 0$ and each $t \ge 1$ or $t = \infty$:
\begin{equation}
\label{eq:incinter}
\mathbb{P}\left(\left|\frac{\pi_{k, t}}{\pi_{k,0}} - 1\right| > \varepsilon \right) \le \frac{\rho N^{\gamma}}{(1 - \gamma) n_{k,0} \varepsilon^2},
\end{equation}
which converges to $0$ as $N \to \infty$.
\item[(ii)]
For $n_{k,0} = \Theta(N^{\gamma})$ (i.e. $\pi_{k,0} = \Theta(N^{\gamma-1})$), we have $\var \left(\frac{\pi_{k, \infty}}{\pi_{k,0}}\right) = \Theta(1)$.
Moreover, there exists $c > 0$ independent of $N$ such that for $\varepsilon > 0$ sufficiently small:
\begin{equation}
\label{eq:incsmall}
 \mathbb{P}\left(\left|\frac{\pi_{k, \infty}}{\pi_{k,0}} - 1 \right| > \varepsilon\right) \ge c.
\end{equation}
\item[(iii)]
For $n_{k,0} = o(N^{\gamma})$ (i.e. $\pi_{k,0} = o(N^{\gamma-1})$), we have $\var \left(\frac{\pi_{k, \infty}}{\pi_{k,0}}\right) \to \infty$ as $N \to \infty$.
\end{enumerate}
\end{enumerate}
\end{theorem}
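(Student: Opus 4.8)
The plan is to exploit that, because $R_t=\rho N_{t-1}^{\gamma}$ depends only on the total $N_{t-1}$, the sequence $N_t$ is itself \emph{deterministic}; hence in the share recursion \eqref{eq:Dshare} the coefficients $a_t:=N_{t-1}/N_t$ and $b_t:=R_t/N_t=1-a_t$ are non-random. Writing the martingale increment as $\pi_{k,t}-\pi_{k,t-1}=b_t(1_{S_{k,t}}-\pi_{k,t-1})$ and using $\mathbb{P}(S_{k,t}\mid\mathcal{F}_{t-1})=\pi_{k,t-1}$, a direct conditional computation gives the exact moment recursion $\mathbb{E}[\pi_{k,t}^m\mid\mathcal{F}_{t-1}]=a_t^m\pi_{k,t-1}^m(1-\pi_{k,t-1})+\pi_{k,t-1}(a_t\pi_{k,t-1}+b_t)^m$. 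For $m=2$ this collapses, after the martingale identity $\mathbb{E}[\pi_{k,t}]=\pi_{k,0}$, to $V_t=(1-b_t^2)V_{t-1}$ for $V_t:=\mathbb{E}[\pi_{k,t}(1-\pi_{k,t})]$, so that $V_\infty=\pi_{k,0}(1-\pi_{k,0})\prod_{s\ge1}(1-b_s^2)$ and $\var(\pi_{k,\infty})=\pi_{k,0}(1-\pi_{k,0})\bigl(1-\prod_{s\ge1}(1-b_s^2)\bigr)$. This single identity drives everything.

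For part $(1)$ ($\gamma>1$) I would first note $N_t\ge N_0+\rho N_0^{\gamma}t\to\infty$, so $b_t=\rho N_{t-1}^{\gamma-1}/(1+\rho N_{t-1}^{\gamma-1})\to1$; since the factors $1-b_t^2\to0$, the infinite product vanishes and $V_\infty=0$. By bounded convergence $\mathbb{E}[\pi_{k,\infty}(1-\pi_{k,\infty})]=0$, and as $\pi_{k,\infty}\in[0,1]$ this forces $\pi_{k,\infty}\in\{0,1\}$ almost surely; the martingale mean then gives $\mathbb{P}(\pi_{k,\infty}=1)=\mathbb{E}[\pi_{k,\infty}]=\pi_{k,0}$, which is \eqref{eq:incextreme} (and, summing over $k$, shows exactly one investor captures everything).

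For part $(2)$ ($\gamma<1$) the whole behaviour is governed by the deterministic sum $\sum_t b_t^2$. Comparing $N_t-N_{t-1}=\rho N_{t-1}^{\gamma}$ with the increments of the convex decreasing function $N\mapsto N^{\gamma-1}$ gives, by a one-line telescoping, the exact bound $b_t^2\le\frac{\rho}{1-\gamma}\bigl(N_{t-1}^{\gamma-1}-N_t^{\gamma-1}\bigr)$, hence $\sum_{t\ge1}b_t^2\le\frac{\rho}{1-\gamma}N^{\gamma-1}$, together with a matching lower bound of the same order, so $\sum_{t\ge1}b_t^2=\Theta(N^{\gamma-1})$. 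Since $1-\prod_{s\ge1}(1-b_s^2)$ lies between $\tfrac12\sum_s b_s^2$ and $\sum_s b_s^2$ once $N$ is large, the identity above yields $\var(\pi_{k,\infty}/\pi_{k,0})=\tfrac{1-\pi_{k,0}}{\pi_{k,0}}\bigl(1-\prod_{s\ge1}(1-b_s^2)\bigr)=\Theta(N^{\gamma}/n_{k,0})$. This immediately gives (iii) (variance $\to\infty$ when $n_{k,0}=o(N^{\gamma})$) and the variance claim in (ii) ($=\Theta(1)$ when $n_{k,0}=\Theta(N^{\gamma})$); for (i), bounding $1-\pi_{k,0}\le1$ and feeding the exact upper bound into Chebyshev produces precisely \eqref{eq:incinter}. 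Monotonicity of $\var(\pi_{k,t})$ in $t$ (the product over $s\le t$ only shrinks) shows the same bound holds for every finite $t$, covering the ``$t\ge1$ or $t=\infty$'' claim.

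The hard part is the anti-concentration bound \eqref{eq:incsmall} in (ii), where a variance lower bound alone does not suffice. My plan is to combine $\var(\pi_{k,\infty}/\pi_{k,0})\ge v_0>0$ with a matching fourth-moment bound $\mathbb{E}[(\pi_{k,\infty}-\pi_{k,0})^4]=\mathcal{O}(\pi_{k,0}^4)$, then apply Cauchy--Schwarz in the form $\var(X)-\varepsilon^2\le\mathbb{E}[(X-1)^2 1_{|X-1|>\varepsilon}]\le\sqrt{\mathbb{E}[(X-1)^4]}\,\sqrt{\mathbb{P}(|X-1|>\varepsilon)}$ with $X=\pi_{k,\infty}/\pi_{k,0}$, giving $\mathbb{P}(|X-1|>\varepsilon)\ge(v_0-\varepsilon^2)^2/\mathbb{E}[(X-1)^4]\ge c$ for $\varepsilon$ small. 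Establishing the fourth-moment bound is the main obstacle: expanding $\mathbb{E}\bigl[(\sum_t\Delta_t)^4\bigr]=\sum_t\mathbb{E}[\Delta_t^4]+6\sum_{s<t}\mathbb{E}[\Delta_s^2\Delta_t^2]$ for the martingale differences $\Delta_t=b_t(1_{S_{k,t}}-\pi_{k,t-1})$, the diagonal sum is readily $\mathcal{O}\bigl(\pi_{k,0}\sum_t b_t^4\bigr)=\mathcal{O}(\pi_{k,0}^4)$, but the cross terms require the sharp estimate $\mathbb{E}[\Delta_s^2\Delta_t^2]=\mathcal{O}\bigl(b_s^2 b_t^2\,\pi_{k,0}^2\bigr)$ --- the extra factor $\pi_{k,0}^2$ (rather than the naive $\pi_{k,0}$) being exactly what makes $\sum_{s<t}\mathbb{E}[\Delta_s^2\Delta_t^2]=\mathcal{O}\bigl(\pi_{k,0}^2(\sum_t b_t^2)^2\bigr)=\mathcal{O}(\pi_{k,0}^4)$. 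I would recover this factor either by step-by-step conditioning that keeps the product $\pi_{k,s}(1-\pi_{k,s})$ inside the expectation (using that it is a supermartingale with factors $1-b_r^2$), or, more systematically, by solving the exact $m=3,4$ moment recursions displayed above together with the already-established $\mathbb{E}[\pi_{k,s}^2]=\mathcal{O}(\pi_{k,0}^2)$.
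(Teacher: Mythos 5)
Your overall strategy is the same as the paper's: part (1) is proved by showing the second moment converges so that $\mathbb{E}[\pi_{k,\infty}(1-\pi_{k,\infty})]=0$, and part (2) rests on the exact variance formula plus Chebyshev for (i), two-sided variance estimates for (ii)/(iii), and the Cauchy--Schwarz scheme $c\le \var(X)\le \sqrt{\mathbb{E}[(X-1)^4]\,\mathbb{P}(|X-1|>\varepsilon)}+\varepsilon^2$ for the anti-concentration bound \eqref{eq:incsmall}, which is exactly the paper's inequality \eqref{eq:128} recycled from Theorem \ref{thm:2}. Where you differ is in execution, and two of your devices are arguably cleaner than the paper's: the closed-form identity $\var(\pi_{k,t})=\pi_{k,0}(1-\pi_{k,0})\bigl(1-\prod_{s\le t}(1-b_s^2)\bigr)$ (the paper carries the same quantity through the recursion $a_{t+1}=a_t+b_{t+1}^2(1-a_t)$, citing \cite{RS21}), and the convexity/telescoping bound $b_t^2\le\frac{\rho}{1-\gamma}\bigl(N_{t-1}^{\gamma-1}-N_t^{\gamma-1}\bigr)$, which sums exactly to $\frac{\rho}{1-\gamma}N^{\gamma-1}$ and so reproduces the precise constant in \eqref{eq:incinter}; the paper instead obtains the matching upper and lower bounds on $a_t$ by a sum-integral comparison (Lemma \ref{lem:polyat}). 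Parts (1), (2)(i), (2)(iii) and the variance claim in (2)(ii) are correct as you present them.

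The one genuine flaw is in your primary route to the fourth-moment bound. The identity $\mathbb{E}\bigl[(\sum_t\Delta_t)^4\bigr]=\sum_t\mathbb{E}[\Delta_t^4]+6\sum_{s<t}\mathbb{E}[\Delta_s^2\Delta_t^2]$ is false for these martingale differences: the omitted terms $12\sum_{r<s<t}\mathbb{E}[\Delta_r\Delta_s\Delta_t^2]$ and $4\sum_{s<t}\mathbb{E}[\Delta_s\Delta_t^3]$ do not vanish, because $\mathbb{E}[\Delta_t^2\mid\mathcal{F}_{t-1}]=b_t^2\pi_{k,t-1}(1-\pi_{k,t-1})$ and $\mathbb{E}[\Delta_t^3\mid\mathcal{F}_{t-1}]$ are random, so conditioning on $\mathcal{F}_{t-1}$ does not annihilate them (the conditional law of $\Delta_t$ is neither symmetric nor of deterministic conditional variance). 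These terms are not harmless either: bounding $|\Delta_r|\le b_r$ crudely gives $|\mathbb{E}[\Delta_r\Delta_s\Delta_t^2]|=\mathcal{O}(b_rb_sb_t^2\pi_{k,0}^2)$, and since $\sum_s b_s$ diverges logarithmically, the triple sum then lands at $\mathcal{O}(\pi_{k,0}^3)$ rather than the needed $\mathcal{O}(\pi_{k,0}^4)$; one must retain the extra share factor from $\mathbb{E}[|\Delta_r|\mid\mathcal{F}_{r-1}]\le 2b_r\pi_{k,r-1}$ to recover $\mathcal{O}(\pi_{k,0}^4)$, i.e. apply your ``keep the product inside the expectation'' technique to terms your expansion does not even list. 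Your backup route---bootstrapping the exact $m=3$ and then $m=4$ central-moment recursions from the established second-moment bound---is precisely what the paper does (Lemmas \ref{lem:generalmoment} and \ref{lem:polyat}(2), with the one-step recursion automatically accounting for all cross terms), and it closes the gap; so the proposal is repairable by discarding the combinatorial expansion and committing to that route.
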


\quad The proof of Theorem \ref{thm:3} is given in Appendix \ref{scC}.
The theorem deals with two increasing reward schemes: a geometric reward and a sub-geometric one.
Part ($1$) considers a geometric reward, 
and shows that with probability one, all the shares will eventually go to one investor in such a way that
\begin{equation*}
\mathbb{P}(\pi_k = 1 \mbox{ and } \pi_j = 0 \mbox{ for all } j \ne k) = \pi_{k,0}, \quad k \in [K].
\end{equation*}
We call this chaotic centralization because 
the underlying dynamics will lead to the dictatorship, but the dictator is selected in a completely random manner.
The investor with the largest initial coins will have the highest chance to control the PoS blockchain with a geometric reward.

\quad Part ($2$) considers a polynomial reward $R_t = \Theta(t^{\frac{1}{1-\gamma}})$ for $\gamma < 1$.
In this case, the threshold to distinguish large, medium and small investors is $n_{k,0} = \Theta(N^\gamma)$.
For large investors, the ratio $\pi_{k, \infty}/\pi_{k,0}$ concentrates at $1$;
while for medium investors there is the anti-concentration bound \eqref{eq:incsmall}.
For small investors, the variance of $\pi_{k, \infty}/\pi_{k,0}$ explodes.
See Appendix \ref{scE2} for numerical illustrations related to Theorem \ref{thm:3}.

\section{Infinite population model}
\label{sc3}

\quad This section is concerned with modeling the PoS protocol which allows for an ``infinite'' number of investors in response to  the growing popularity of blockchains.
In Section \ref{sc31}, we start with a discrete infinite population model as a direct extension to the P\'olya urn studied in Section \ref{sc2}.
In Section \ref{sc32}, we consider an infinite population model sampled from a continuous space.
Combining the ideas from these two subsections, 
we propose and analyze a dynamical population model for the PoS protocol in Section \ref{sc33}.

\subsection{Discrete infinite population}
\label{sc31}

In Section \ref{sc2}, we consider the P\'olya urn model with a finite number of investors, i.e. $K < \infty$, 
and study the limiting behavior of the ratio $\pi_{k,t}$ or $\pi_{k,\infty}$ as the initial coin offerings $N \to \infty$.
Here we work directly with $K = \infty$ investors to model the PoS protocol.

\quad We give a formal description of the model. 
The P\'olya urn model with an infinite population is just as described in Section \ref{sc2} but for $K = \infty$.
More precisely, there are a countably infinite number of investors indexed by $\mathbb{N}$.
Let $n_{k, t}$, $k \in \mathbb{N}$ be investor $k$'s endowment of coins at time $t$
with $\sum_{k = 1}^{\infty} n_{k,t} = N_t$.
At time $t+1$, investor $k$ is selected at random among all investors with probability $\pi_{k,t} := n_{k,t}/N_t$.
Once selected, the investor receives a deterministic reward of $R_t \ge 0$.
This way, the equations \eqref{eq:share0}--\eqref{eq:Dshare} hold for $K = \infty$,
and there are infinitely many small investors whose initial coins are $n_{k,0} = o(1)$ (i.e. $\pi_{k,0} = o(1/N)$).
The investor shares are given by a vector of infinite length 
$(\pi_{k,t}, \, k \in \mathbb{N}_+)$
Again by the martingale convergence theorem,
\begin{equation}
(\pi_{1,t}, \pi_{2,t}, \ldots) \to (\pi_{1,\infty}, \pi_{2,\infty}, \ldots) \quad \mbox{as } t \to \infty,
\end{equation}
where $(\pi_{k,\infty}, \, k \in \mathbb{N}_+)$ is random probability measure on $\mathbb{N}_+$. 

\quad We have seen in Theorem \ref{thm:1} that for a finite number of $K$ investors, if the reward $R_t$ is constant, the limiting investor shares $(\pi_{1,\infty}, \ldots, \pi_{K,\infty})$ have the Dirichlet distribution.
For the P\'olya urn model with an infinite population, 
one important problem is to understand limiting shares $(\pi_{1,\infty}, \pi_{2, \infty}, \ldots)$ 
under constant or more general rewarding schemes.
To this end, we recall the definition of the {\em Dirichlet-Ferguson measure}, or simply {\em Dirichlet measure} 
which was introduced by \cite{Fer73, BM73} in the context of nonparametric Bayesian analysis.

\begin{definition}
\label{def:Dirichlet}
Let $S$ be a Polish space with Borel $\sigma$-field $\mathcal{S}$, 
and let $\mu$ be a positive measure on $(S, \mathcal{S})$ with $0 < \mu(S) < \infty$.
We say that $F$ has $\Dir(\mu)$ distribution if $F$ is a random distribution on $S$
such that for every measurable partition $B_1, \ldots, B_k$ of $S$, the random vector $(F(B_1), \ldots, F(B_k))$ has $\Dir(\mu(B_1), \ldots, \mu(B_k))$ distribution.
\end{definition}

\quad The following theorem studies the evolution of shares in a PoS protocol modeled by the P\'olya urn with an infinite population.
\begin{theorem}
\label{thm:4} 
Assume that the coin reward $R_t \equiv R > 0$. Then the investor shares have a limiting distribution
\begin{equation}
\label{eq:limDirinf}
(\pi_{1, \infty}, \pi_{2, \infty}, \ldots) \stackrel{d}{=} \Dir(\mu),
\end{equation}
where $\mu$ is a positive measure on $\mathbb{N}$ with $\mu(\{k\}) = \frac{n_{k,0}}{R}$, $k \in \mathbb{N}$.
Moreover, for each investor or a finite number of investors in the same category,
the results in Theorems \ref{thm:1}, \ref{thm:2} and \ref{thm:3} hold under the corresponding rewarding schemes.
\end{theorem}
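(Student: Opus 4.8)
The plan is to reduce everything to the already-established finite-population results by exploiting the \emph{lumping} (aggregation) property of the time-dependent P\'olya urn. The central observation is that, for any set $B \subseteq \mathbb{N}$, the aggregated count $M_{B,t} := \sum_{k \in B} n_{k,t}$ increases by exactly $R_t$ precisely when some investor in $B$ is selected, an event of conditional probability $\pi_t(B) := \sum_{k \in B} \pi_{k,t} = M_{B,t}/N_t$, while $N_t = N_{t-1} + R_t$ evolves deterministically. A direct computation then shows that $\pi_t(B)$ is a bounded $\mathcal{F}_t$-martingale, so $\pi_\infty(B) := \lim_{t \to \infty} \pi_t(B)$ exists almost surely. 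More importantly, for any finite measurable partition $B_1, \ldots, B_m$ of $\mathbb{N}$, the vector of aggregated counts $(M_{B_1, t}, \ldots, M_{B_m, t})$ is itself a finite P\'olya urn on $m$ ``super-investors'' driven by the \emph{same} reward sequence $(R_t)$, with initial counts $M_{B_j, 0} = \sum_{k \in B_j} n_{k,0}$.

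For the first assertion I would take $R_t \equiv R$ and fix a finite measurable partition $B_1, \ldots, B_m$ of $\mathbb{N}$. By the lumping property, $(M_{B_1, t}, \ldots, M_{B_m, t})$ is a finite P\'olya urn with constant reward $R$ and initial counts $R\,\mu(B_j) = \sum_{k \in B_j} n_{k,0}$, so Theorem \ref{thm:1}, and in particular the exact distributional identity \eqref{eq:limDirichlet}, applies to give
\begin{equation*}
(\pi_\infty(B_1), \ldots, \pi_\infty(B_m)) \stackrel{d}{=} \Dir\!\left(\frac{M_{B_1,0}}{R}, \ldots, \frac{M_{B_m,0}}{R}\right) = \Dir(\mu(B_1), \ldots, \mu(B_m)).
\end{equation*}
Since this holds for every finite measurable partition and $\mu(\mathbb{N}) = N/R < \infty$, the random measure $\pi_\infty$ satisfies precisely the defining property in Definition \ref{def:Dirichlet}, whence $(\pi_{1,\infty}, \pi_{2,\infty}, \ldots) \stackrel{d}{=} \Dir(\mu)$. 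Consistency of these finite-dimensional laws across refinements of partitions is automatic from the aggregation property of the Dirichlet distribution, so no separate extension argument is needed.

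For the second assertion I would fix any finite collection of investors $\{k_1, \ldots, k_\ell\}$ and apply the lumping property to the partition made of the singletons $\{k_1\}, \ldots, \{k_\ell\}$ together with the remainder $B_0 := \mathbb{N} \setminus \{k_1, \ldots, k_\ell\}$. This realizes the coordinates of interest as $\ell$ of the $\ell + 1$ components of a \emph{finite} P\'olya urn on $\ell + 1$ super-investors driven by the same reward scheme, where the lumped investor $B_0$ carries the bulk $N - \sum_i n_{k_i,0}$ of the initial coins. Consequently the joint asymptotics of $(\pi_{k_1,t}/\pi_{k_1,0}, \ldots, \pi_{k_\ell,t}/\pi_{k_\ell,0})$ as $N \to \infty$ are governed verbatim by the finite-population Theorems \ref{thm:1}, \ref{thm:2} and \ref{thm:3} (and their joint versions), according to whether the reward is constant, decreasing or increasing and to which category each $n_{k_i,0}$ belongs relative to the corresponding threshold.

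The main obstacle is the measure-theoretic bookkeeping in the first assertion rather than any new probabilistic estimate: one must verify that the partition-indexed family $(\pi_\infty(B_1), \ldots, \pi_\infty(B_m))$ is genuinely consistent and that no mass escapes to infinity, so that $\pi_\infty$ is almost surely a bona fide probability measure on $\mathbb{N}$ matching Definition \ref{def:Dirichlet}. Both points follow from the finiteness of $\mu(\mathbb{N})$ and the simplex-preserving nature of each finite aggregated urn, but they must be stated with care; once the lumping property is in place, the remainder is an immediate appeal to the already-established finite-population theorems.
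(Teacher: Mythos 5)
Your proof is correct, but it takes a genuinely different route from the paper's. The paper proves Theorem \ref{thm:4} by observing that under a constant reward the share vector is \emph{exactly} the Blackwell--MacQueen predictive distribution, $\pi_{k,t} = F_t(\{k\})$ with $\mu(\{k\}) = n_{k,0}/R$, so Lemma \ref{lem:BMscheme} immediately gives almost sure total-variation convergence to $F \stackrel{d}{=} \Dir(\mu)$; the ``moreover'' part is dispatched by noting $\pi_{k,\infty} \stackrel{d}{=} \bet(n_{k,0}/R, (N-n_{k,0})/R)$ and remarking that the proofs of Theorems \ref{thm:2} and \ref{thm:3} are per-investor moment arguments that never use finiteness of $K$. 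Your lumping argument instead verifies Definition \ref{def:Dirichlet} directly from the finite-population Theorem \ref{thm:1}, and for the second assertion it transfers the \emph{statements} of Theorems \ref{thm:1}--\ref{thm:3} rather than their proofs: realizing $(\pi_{k_1,t}, \ldots, \pi_{k_\ell,t})$ as coordinates of an $(\ell+1)$-investor urn with the same $N$ and the same reward stream makes the reduction literal, which is arguably cleaner than the paper's remark that the finite-$K$ proofs carry over. What your route buys is self-containedness (no appeal to the Blackwell--MacQueen machinery); what it costs is the mass-conservation point you flag, which does need one short argument: taking the partition $\{1\}, \ldots, \{m\}, \{k > m\}$, your own Dirichlet identity gives $\pi_\infty(\{k>m\}) \stackrel{d}{=} \bet\bigl(\sum_{k>m} n_{k,0}/R, \ \sum_{k \le m} n_{k,0}/R\bigr)$, whose first parameter tends to $0$ as $m \to \infty$ because $\mu(\mathbb{N}) < \infty$; monotonicity of $m \mapsto \sum_{k \le m}\pi_{k,\infty} = 1 - \pi_\infty(\{k>m\})$ then yields $\sum_k \pi_{k,\infty} = 1$ almost surely, and Fatou's lemma applied to an infinite block $B$ and its complement upgrades the resulting inequalities to the identity $\pi_\infty(B) = \sum_{k \in B}\pi_{k,\infty}$ a.s., which is exactly what Definition \ref{def:Dirichlet} requires of the limiting random measure. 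With that one-line patch your argument is complete; the paper's route obtains the same point for free from the total-variation convergence of $F_t$.
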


\quad The proof of Theorem \ref{thm:4} is given in Appendix \ref{scD1}. 
Note that if there are a finite number of $K$ investors, 
the measure $\mu$ is then supported on $[K]$.
This recovers the identity in distribution \eqref{eq:limDirichlet} in Theorem \ref{thm:1}.

\subsection{Infinite population from continuum}
\label{sc32}

We consider an urn model with an infinite population which is sampled from a continuous space -- 
we call it a {\em PoS feature model}.
The motivation comes from understanding the influence induced by common features among investors in the PoS protocol.
The influence of a particular feature is measured by the total shares that investors having this feature own.
In many generic cases, features are represented or approximated by elements in a continuous sample space, 
e.g. geolocation of an investor, market experience of an investor measured in time, index assessing the level of risk aversion of an investor, and so on.
Without loss of generality, we abstract the feature space as the unit interval $S = [0,1]$.

\quad The PoS feature model is inspired from the Blackwell-MacQueen construction of a P\'olya urn on general state spaces as described in Lemma \ref{lem:BMscheme}.
At each time $t \ge 1$, an investor with some feature $X_t \in S = [0,1]$ is selected to receive a deterministic reward $R_t \ge 0$.
Now we specify the selection rule over the time. 
At time $t = 0$, the initial coin offering is $N_0 = N$, and these coins are distributed among the investors with features in $S = [0,1]$ according to a diffuse probability measure $\nu$ on $S = [0,1]$.
That is, the number of coins owned by investors with features in $[x, x+dx]$ is $N \nu(x) dx$.
At time $t = 1$, an investor with feature $X_1$ is selected by the rule
\begin{equation}
\label{eq:selection1}
\mathbb{P}(X_1 \in \cdot) = \nu(\cdot),
\end{equation}
and then receives a reward $R_1$.
So the total number of coins becomes $N_1 = N + R_1$. 
At time $t = 2$, an investor with feature $X_2$ is selected with probability $\mathbb{P}(X_2 \in \cdot) = (R_1 \delta_{X_1}(\cdot) + N \nu(\cdot))/N_1$. 
More generally, at time $t \ge 2$ an investor with feature $X_t$ is selected by the rule:
\begin{equation}
\label{eq:selectiont}
\mathbb{P}(X_t \in \cdot|X_1, \ldots, X_{t-1}) = \frac{\sum_{n = 1}^{t-1}R_n \delta_{X_n}(\cdot)}{N + \sum_{n = 1}^{t-1} R_n} + \frac{N \nu(\cdot)}{N + \sum_{n = 1}^{t-1} R_n}.
\end{equation}

\quad The main difference between the PoS feature model \eqref{eq:selection1}--\eqref{eq:selectiont} and the urn models discussed in the previous sections is that
there are uncountably many features of the investors for selection,
but there are only a countable number of investors selected at time $t = 1,2, \ldots$
Two questions arise naturally:
($1$). How to label the features of the investors selected by the feature model \eqref{eq:selection1}--\eqref{eq:selectiont}?
($2$). What are the limiting shares corresponding to these features?

\quad These problems are closely related to the problem of species sampling and exchangeable partitions studied in \cite{Pitman95, Pitman96}.
Let us spell out in the PoS setting as follows.
For (1) the simplest way to label the features among the selected investors is by their order of appearance. 
For $j \ge 1$, denote $\widetilde{X}_j$ as the $j^{th}$ feature to appear in the sequence of $X_1, X_2, \ldots$
Let $M_1:= 1$ and 
$M_j: = \inf\{n: n > M_{j-1}, X_n \notin \{X_1, \ldots, X_{n-1}\}\}$ 
for $j \ge 2$,
with the convention $\inf \emptyset = \infty$. 
So $M_j$ is the index at which the $j^{th}$ feature appears for the first time, and $\widetilde{X}_j = X_{M_j}$ on the event $\{M_j < \infty\}$.
For instance, if 
$(X_1, X_2, \ldots) = (0.1, 0.1, 0.3, 0.2, 0.2, 0.3, 0.1, 0.4, \ldots)$,
then $M_1 =1$, $M_2 = 3$, $M_3 = 4$, $M_4 = 8, \ldots$ and $\widetilde{X}_1 = 0.1$, $\widetilde{X}_2 = 0.3$,
$\widetilde{X}_3 = 0.2$, $\widetilde{X}_4 = 0.4, \ldots$
For general rewards $R_t$, it seems challenging to specify the limiting distribution of shares $(\pi_{\widetilde{X}_1, \infty}, \pi_{\widetilde{X}_2, \infty}, \ldots)$.
One exception is for the constant reward $R_t \equiv  R$ as stated in the following theorem.

\begin{theorem}
\label{thm:5}
Assume that the coin reward $R_t \equiv  R > 0$. 
Let $\widetilde{X}_1, \widetilde{X}_2, \ldots$ be the features appearing in the order of appearance 
of the PoS feature model specified by \eqref{eq:selection1}--\eqref{eq:selectiont}. 
Then there is the stick-breaking representation for the limiting shares:
\begin{equation}
\label{eq:stickbreaking}
\pi_{\widetilde{X}_j} = \left[\prod_{i = 1}^{j-1} (1 - W_i)\right] W_j \quad \mbox{for } j \ge 1,
\end{equation}
where $W_1, W_2, \ldots$ are independent and identically distributed as $\bet(1, N/R)$.
Moreover, let $K_t: = \sup\{j: M_j \le t\}$ be the number of features appeared among the first $t$ selected investors. 
Then $K_t/\log t \to N/R$ almost surely.
\end{theorem}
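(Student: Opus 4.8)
The plan is to recognize the selection rule \eqref{eq:selection1}--\eqref{eq:selectiont} under the constant reward $R_t \equiv R$ as the Blackwell--MacQueen urn with diffuse base measure $\nu$ and concentration parameter $\theta := N/R$: dividing numerator and denominator of \eqref{eq:selectiont} by $R$ gives $\mathbb{P}(X_t \in \cdot \mid X_1, \ldots, X_{t-1}) = (\sum_{n < t} \delta_{X_n} + \theta\,\nu)/(\theta + t - 1)$. The limiting shares $(\pi_{\widetilde{X}_j})$ are then the masses, listed in order of appearance, of the atoms of the de Finetti directing measure $F$, which by \cite{BM73} (consistently with Theorem \ref{thm:4}) is $\Dir(\mu)$ with $\mu = \theta\,\nu$.

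For the stick-breaking representation I would first treat $\widetilde{X}_1 = X_1$ by collapsing the urn to two colors, ``feature $\widetilde{X}_1$'' versus ``everything else''. Since $\nu$ is diffuse, no later draw from the reservoir ever lands on $\widetilde{X}_1$, so after time $1$ this is an ordinary two-color P\'olya urn that starts from $R$ coins on $\widetilde{X}_1$ and $N$ coins elsewhere and adds $R$ at each step. The case $K = 2$ of Theorem \ref{thm:1} then identifies its limiting share as $\Dir(R/R, N/R) = \bet(1, \theta)$, so $\pi_{\widetilde{X}_1} = W_1 \sim \bet(1, N/R)$. The factors $W_j$ for $j \ge 2$ would come from the regenerative structure of the urn: conditionally on $\widetilde{X}_1$ and its entire trajectory, the subsequence of selections that avoid $\widetilde{X}_1$, suitably re-normalized, is again a Blackwell--MacQueen urn with the same concentration $\theta$ and diffuse base measure (the conditional law of a fresh feature is still proportional to $\nu$). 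The residual mass $1 - W_1$ is therefore partitioned among $\widetilde{X}_2, \widetilde{X}_3, \ldots$ according to an independent copy of the same scheme, which iterated yields $\pi_{\widetilde{X}_j} = \big[\prod_{i < j}(1 - W_i)\big] W_j$ with $W_1, W_2, \ldots$ i.i.d. $\bet(1, N/R)$. Conceptually this is the statement that the atoms of $\Dir(\theta\nu)$ taken in order of appearance form a size-biased permutation, and that the size-biased weights of a Dirichlet--Ferguson measure are $\GEM(\theta)$-distributed, i.e. stick-breaking with i.i.d. $\bet(1, \theta)$ factors (Sethuraman's construction together with invariance of $\GEM(\theta)$ under size-biased permutation).

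For the growth rate of $K_t$ I would write $K_t = \sum_{n=1}^t I_n$ with $I_n := 1\{X_n \notin \{X_1, \ldots, X_{n-1}\}\}$. A new feature can arise only from the base-measure term $\theta\nu/(\theta + n - 1)$, and since $\nu$ is diffuse a fresh $\nu$-draw is a.s. distinct from all past values; hence $\mathbb{P}(I_n = 1 \mid \mathcal{F}_{n-1}) = \theta/(\theta + n - 1) =: p_n$, which is deterministic, so $I_1, I_2, \ldots$ are independent. Consequently $\mathbb{E} K_t = \theta \sum_{n=1}^t 1/(\theta + n - 1) = \theta \log t + \mathcal{O}(1)$. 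Since $\var(I_n) \le p_n = \mathcal{O}(1/n)$, one has $\sum_n \var(I_n)/(\log n)^2 < \infty$, so Kolmogorov's one-series theorem makes $\sum_n (I_n - p_n)/\log n$ converge a.s.; Kronecker's lemma then gives $(K_t - \mathbb{E} K_t)/\log t \to 0$ a.s., and combining with $\mathbb{E} K_t / \log t \to \theta$ yields $K_t / \log t \to N/R$ almost surely.

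The main obstacle I expect is the rigorous justification of the regeneration step for $j \ge 2$ -- namely that the $W_j$ are genuinely independent and each exactly $\bet(1, N/R)$, rather than merely having the right marginal for $W_1$. The two-color reduction is clean for the first feature, but propagating it requires either a careful conditioning argument showing the post-$\widetilde{X}_1$ urn is an exact copy of the original (so that the split is a residual allocation with independent stick proportions), or an appeal to the size-biased-permutation invariance of $\GEM(\theta)$. I would favor the latter, since it delivers both the independence and the exact $\bet(1, \theta)$ law in one step once the directing measure is identified as $\Dir(\theta\nu)$.
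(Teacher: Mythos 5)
Your proposal is correct, and its core identification --- that with $R_t \equiv R$ the selection rule \eqref{eq:selection1}--\eqref{eq:selectiont} is the Blackwell--MacQueen urn with diffuse base $\nu$ and concentration parameter $\theta = N/R$ --- is exactly how the paper proceeds, via Lemma \ref{lem:BMscheme}. For the stick-breaking representation, the route you say you favor (the atoms of $\Dir(\theta\nu)$ listed in order of appearance form a size-biased permutation of its weights, and the size-biased weights of a Dirichlet measure are $\GEM(\theta)$, i.e.\ stick-breaking with i.i.d.\ $\bet(1,\theta)$ factors) is precisely the ``circle of ideas'' the paper invokes by citing the Ferguson and McCloskey constructions and Pitman's review; your instinct that the two-color P\'olya-urn regeneration argument is the delicate step is sound, and the paper sidesteps it exactly as you propose to. Where you genuinely depart from the paper is the last claim: the paper simply reads $K_t/\log t \to N/R$ off Korwar--Hollander, whereas you prove it from scratch --- the new-feature indicators $I_n$ are independent Bernoulli variables with deterministic success probabilities $\theta/(\theta + n - 1)$ (diffuseness of $\nu$ is what makes the conditional probability deterministic, which in turn yields the independence), so Kolmogorov's series theorem plus Kronecker's lemma give the almost sure limit. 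That self-contained argument is correct (modulo starting the weighted series at $n \ge 2$ to avoid dividing by $\log 1 = 0$) and is in essence the standard proof underlying the cited result, so on this point your write-up delivers strictly more detail than the paper's citation-only proof.
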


\quad The proof of Theorem \ref{thm:5} is given in Appendix \ref{scD2}.
Theorem \ref{thm:5} shows that whatever the initial distribution of features is, 
the PoS feature model \eqref{eq:selection1}--\eqref{eq:selectiont} with a constant reward yields a limiting share distribution on a countable number of features.
This distribution, which only depends on the initial coin offerings $N$ and the reward $R$, is known as the {\em Griths-Engen-McCloskey} (GEM) distribution (\cite{Ewens90}).

\quad We may also consider more general PoS feature models.
One such instance is when the selection rule relies on the history of the features of previously selected investors.
For $j \ge 1$, let
$N_{jt}:=\sum_{n = 1}^t 1(X_n = \widetilde{X}_j, M_j < \infty)$
be the number of times that the investors with feature $j$ (in the order of appearance) are selected up to time $t$,
and $\pmb{N}_t:= (N_{1t}, N_{2t}, \ldots)$ be the vector of counts of various features of the investors up to time $t$.
We can regroup the investors according to their features, and rewrite the selection rule \eqref{eq:selectiont} as
\begin{equation}
\label{eq:selectiont2}
\mathbb{P}(X_t \in \cdot|X_1, \ldots, X_{t-1}, K_{t-1} = k) = \sum_{j = 1}^k \frac{N_{jt-1}}{N/R + t- 1} 1(\widetilde{X}_j \in \cdot) + \frac{N/R}{N/R +t-1} \nu(\cdot).
\end{equation}
Here we look for general selection rules of form
$\mathbb{P}(X_1 \in \cdot) = \nu(\cdot)$ and for $t \ge 2$,
\begin{equation}
\label{eq:selectiontgen}
\mathbb{P}(X_t \in \cdot|X_1, \ldots, X_{t-1}) = \sum_{j = 1}^k p_j(\pmb{N}_{t-1}) 1(\widetilde{X}_j \in \cdot) + p_{k+1}(\pmb{N}_{t-1}) \nu(\cdot),
\end{equation}
for some functions $p_j$, $j = 1,2, \ldots$ defined on $\cup_{k =1}^{\infty} \mathbb{N}_+^k$.
The meaning of the selection rule \eqref{eq:selectiontgen} is as follows: given the histogram $\pmb{N}_{t-1}$ of $k$ features of investors selected from time $1$ to time $t-1$, an investor with feature $j$ is selected with probability $p_j(\pmb{N}_{t-1})$ for $1 \le j \le k$, and an investor with a new feature $k+1$ is selected with probability $p_{k+1}(\pmb{N}_{t-1})$.
It is easily seen the selection rule \eqref{eq:selectiont2} is a special case of the general rule \eqref{eq:selectiontgen} with
$p_j(n_1, \ldots, n_k) = \frac{n_j}{N/R + t-1} 1(1 \le j \le k) + \frac{N/R}{N/R +t-1} 1(j = k-1)$,
with $\sum_{j=1}^k n_j = t-1$.
A closely related selection rule is defined by the functions
$p_j(n_1, \ldots, n_k) = \frac{n_j - \alpha}{N/R + t-1} 1(1 \le j \le k) + \frac{N/R + k \alpha}{N/R +t-1} 1(j = k-1)$,
for some $\alpha \ge 0$.
In this case, the limiting share distribution also has the stick-breaking representation \eqref{eq:stickbreaking} 
with $W_1, W_2, \ldots$ independent, and $W_k$ distributed as $\bet(1-\alpha, N/R + k \alpha)$.
This is known as the {\em Pitman-Yor distribution} (\cite{Pitman96, PY97}). 

\quad In general, we need the following condition on $p_j$ to define the selection rule \eqref{eq:selectiontgen}:
$p_j(\pmb{n}) \ge 0$ and $ \sum_{j = 1}^{|\pmb{n}|_0 + 1} p_j(\pmb{n}) = 1$ for $ \pmb{n} \in \cup_{k =1}^{\infty} \mathbb{N}_+^k$,
where $|\pmb{n}|_0$ is the number of nonzero entries in $\pmb{n}$.
\cite{Pitman96} provides an additional condition on $p_j$ in terms of the {\em exchangeable partition probability functions} (EPPFs) so that 
the sequence $X_1, X_2, \ldots$ specified by the selection rule \eqref{eq:selection1}--\eqref{eq:selectiontgen} is exchangeable, and thus the limiting share distribution is well-defined. 
Such a sequence $X_1, X_2, \ldots$ is called the {\em species sampling sequence}, see also \cite{HP00} for related discussions.

\subsection{From finite to infinite population}
\label{sc33}

In this final subsection, we propose a dynamical approach to model the PoS protocol.
We start with a finite number of investors, and then at each time a new investor may come into the market,
which evolves to an infinite number of investors.
This combines the ideas from previous sections, especially the Blackwell-MacQueen urn model.

\quad We proceed to the description of the model.
At time $t = 0$ there are $K$ investors indexed by $[K]$.
These investors are initial capitalists in the blockchain network, so they play a very important role in the PoS protocol.
For $k \in [K]$, let $n_{k,0}$ be the initial endowment of coins of investor $k$, and 
$\pi_{k,0}: = n_{k,0}/N$ with $N = \sum_{k=1}^K n_{k,0}$ be the corresponding shares. 
At time $t = 1$, there are two possibilities: either one of these $K$ initial investors are selected, or a new investor is selected from the population. 
This is realistic since many cryptocurrencies are initially owned by a handful of coin miners or venture capitalists,
and then their shares will be diluted by new investors over the time. 
Since the population is large, we approximate the population space by the unit interval $S = [0,1]$, and a new investor is selected from $S = [0,1]$ by a diffuse probability measure $\nu$ as in Section \ref{sc32}.
We also introduce a {\em dilution parameter} $\theta > 0$, which is the weight that a new investor is introduced to the blockchain network.
More precisely,
\begin{itemize}[itemsep = 3 pt]
\item
For each $k \in [K]$, investor $k$ is selected with probability $\frac{n_{k,0}}{N + \theta}$.
\item
A new investor with ``index' in $(0,1)$ is selected with probability $\frac{\theta \nu(\cdot)}{N+\theta}$.
\end{itemize}
By letting $X_1$ be the index of the investor selected at time $1$, we have
\begin{equation}
\label{eq:selection1f}
\mathbb{P}(X_1 \in \cdot) = \sum_{k = 1}^K \frac{n_{k,0}}{N+\theta} \delta_k(\cdot) + \frac{\theta \nu(\cdot)}{N + \theta},
\end{equation}
and the selected investor receives a deterministic reward $R_1 > 0$.
More generally, at time $t$ the selection rule is given by 
\begin{equation}
\label{eq:selectiontf}
\mathbb{P}(X_t \in \cdot |X_1, \ldots, X_{t-1}) =  \sum_{k = 1}^K \frac{n_{k,t-1}}{N_{t-1} + \theta} \delta_k(\cdot) + \sum_{X_n \in (0,1)} \frac{n_{X_n, t-1}}{N_{t-1} + \theta} \delta_{X_n}(\cdot) + \frac{\theta \nu(\cdot)}{N_{t-1} + \theta},
\end{equation}
where $n_{k,t}$ is the number of coins that investor $k$ owns at time $t$, 
$n_{X_n, t}$ is the number of coins that a new investor with index $X_n \in (0,1)$ owns at time $t$, and 
$N_{t} = N + \sum_{n = 1}^t R_n$ is the total number of coins up to time $t$.
There are three terms on the right side of the selection rule \eqref{eq:selectiontf}:
the first one comes from the $K$ initial investors, 
the second term is from the new investors previously entering the market, 
and the third term is the probability that a new investor is introduced.
The following theorem studies how the shares of the initial capitalists are diluted over the time.

\begin{theorem}
\label{thm:6}
For $k \in [K]$, let $\pi_{k,t}$ be the shares of investor $k$ at time $t$ under the PoS protocol \eqref{eq:selection1f}--\eqref{eq:selectiontf}.
Then $(\pi_{k,t}, \, t \ge 0)$ is a supermartingale. 
Consequently, the shares of initial investors $(\pi_{1,t}, \ldots, \pi_{K,t}) \to (\pi_{1,\infty}, \ldots, \pi_{K,\infty})$ almost surely for some
random sub-probability distribution $(\pi_{1,\infty}, \ldots, \pi_{K,\infty})$.
\begin{enumerate}[itemsep = 3 pt]
\item
Assume that the coin reward $R_t$ is decreasing with $R_t \ge R_{t+1}$ for each $t \ge 0$.
\begin{enumerate}[itemsep = 3 pt]
\item[(i)]
If $\lim_{t \to \infty} R_t = R > 0$ or $R_t = \Theta(t^{-\alpha})$ for $\alpha < 1$, we have
$0 < \mathbb{E}\left( \frac{\pi_{k, \infty}}{\pi_{k,0}}\right) < 1$,
and $\lim_{N \to \infty} \mathbb{E}\left( \frac{\pi_{k, \infty}}{\pi_{k,0}}\right) = 1$.
\item[(ii)] 
If $R_t = \Theta(t^{-\alpha})$ for $\alpha > 1$, we have $\pi_{k, \infty} = 0$ almost surely.
\end{enumerate}
\item
Assume that the coin reward $R_t = \rho N_{t-1}^{\gamma}$ for $\rho, \gamma > 0$.
We have $0 < \mathbb{E}\left( \frac{\pi_{k, \infty}}{\pi_{k,0}}\right) < 1$,
and $\lim_{N \to \infty} \mathbb{E}\left( \frac{\pi_{k, \infty}}{\pi_{k,0}}\right) = 1$.
\item
Assume that the coin reward $R_t = R > 0$. 
We have $\pi_{k, \infty} \stackrel{d}{=} \bet\left(\frac{n_{k,0}}{R}, \frac{N + \theta - n_{k,0}}{R}\right)$.
Consequently, the results in Theorem \ref{thm:1} hold.
\end{enumerate}
\end{theorem}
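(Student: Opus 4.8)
The plan is to reduce the whole theorem to one deterministic recursion for the conditional expectation. Fix an initial investor $k\in[K]$ and let $S_{k,t}$ be the event that $k$ is selected at time $t$, so that $n_{k,t}=n_{k,t-1}+R_t\mathbf{1}_{S_{k,t}}$ with $\mathbb P(S_{k,t}\mid\mathcal F_{t-1})=n_{k,t-1}/(N_{t-1}+\theta)$ by \eqref{eq:selectiontf}. Since a reward is always handed out, $N_t=N_{t-1}+R_t$ is deterministic in every regime (including $R_t=\rho N_{t-1}^\gamma$, as $N_{t-1}$ is). A one-line computation then gives
\[
\mathbb E[\pi_{k,t}\mid\mathcal F_{t-1}]=\rho_t\,\pi_{k,t-1},\qquad
\rho_t=\frac{N_{t-1}(N_{t-1}+\theta+R_t)}{(N_{t-1}+R_t)(N_{t-1}+\theta)}=1-\frac{\theta R_t}{(N_{t-1}+R_t)(N_{t-1}+\theta)}.
\]
Because $\theta>0$ makes each $\rho_t<1$, the process $(\pi_{k,t})$ is a nonnegative bounded supermartingale, hence converges almost surely and in $L^1$ to a limit $\pi_{k,\infty}$; this proves the opening assertion, with the limit a sub-probability vector. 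Since the $\rho_s$ are deterministic, iterating gives $\mathbb E[\pi_{k,t}]=\pi_{k,0}\prod_{s\le t}\rho_s$, and $L^1$-convergence yields $\mathbb E[\pi_{k,\infty}]/\pi_{k,0}=\prod_{s\ge1}\rho_s$.

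The key simplification I would exploit is that this product telescopes. Setting $a_s:=N_{s-1}/(N_{s-1}+\theta)$ and using $N_{s-1}+R_s=N_s$ one checks $\rho_s=a_s/a_{s+1}$, so $\prod_{s=1}^T\rho_s=a_1/a_{T+1}=\frac{N/(N+\theta)}{N_T/(N_T+\theta)}$. With $N_\infty:=N+\sum_{t\ge1}R_t$ this gives the closed form
\[
\frac{\mathbb E[\pi_{k,\infty}]}{\pi_{k,0}}=\frac{N/(N+\theta)}{N_\infty/(N_\infty+\theta)}.
\]
Whenever $\sum_t R_t=\infty$ --- that is, for $R_t\to\underline R>0$, for $R_t=\Theta(t^{-\alpha})$ with $\alpha<1$, and for the power reward $R_t=\rho N_{t-1}^\gamma$ (where $N_t\to\infty$ is immediate) --- the denominator tends to $1$ and the ratio equals $N/(N+\theta)\in(0,1)$, which moreover converges to $1$ as $N\to\infty$. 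This settles parts (1)(i) and (2) simultaneously, together with their $N\to\infty$ limits, and requires no fine asymptotics of $N_t$.

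The step I expect to be the genuine obstacle is part (1)(ii). The same telescoping shows that as soon as the rewards are summable --- precisely the fast-decay regime $R_t=\Theta(t^{-\alpha})$ with $\alpha>1$, where $N_\infty<\infty$ --- the product equals $\frac{N(N_\infty+\theta)}{(N+\theta)N_\infty}$, which is strictly positive, so $\mathbb E[\pi_{k,\infty}]>0$. Indeed, since $n_{k,t}$ is nondecreasing one even has the deterministic bound $\pi_{k,\infty}\ge n_{k,0}/N_\infty>0$. Consequently the expectation route cannot produce an almost-sure collapse in this regime, and establishing $\pi_{k,\infty}=0$ would demand a different mechanism; pinning down the correct summability threshold for dilution is where I anticipate the argument to be most delicate, and I would re-examine the hypotheses of (1)(ii) at this point.

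For the constant reward $R_t\equiv R$ in part (3) I would argue structurally rather than by moments. After rescaling coin counts by $R$, the selection rule \eqref{eq:selection1f}--\eqref{eq:selectiontf} is exactly a Blackwell--MacQueen urn (Lemma \ref{lem:BMscheme}) with base measure $\mu=\sum_{k=1}^K(n_{k,0}/R)\,\delta_k+(\theta/R)\,\nu$ of total mass $(N+\theta)/R$; the selected indices form an exchangeable sequence, so the Dirichlet--Ferguson limit used for Theorem \ref{thm:4} gives a limiting empirical measure $\Dir(\mu)$ in the sense of Definition \ref{def:Dirichlet}. Marginalizing onto the partition $\{k\}$ versus its complement yields $\pi_{k,\infty}\stackrel{d}{=}\bet(n_{k,0}/R,(N+\theta-n_{k,0})/R)$. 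The large/medium/small trichotomy of Theorem \ref{thm:1} then transfers directly: as $N\to\infty$ the second Beta parameter diverges, and the standard scaling $b\cdot\bet(a,b)\Rightarrow\mathrm{Gamma}(a)$ reproduces the concentration, Gamma-limit, and collapse phases exactly as there.
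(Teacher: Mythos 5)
Your proposal follows the same skeleton as the paper's proof --- the one-step computation $\mathbb{E}(\pi_{k,t}\mid\mathcal{F}_{t-1})=\rho_t\,\pi_{k,t-1}$ giving the supermartingale property, the product formula \eqref{eq:infprod} for $\mathbb{E}(\pi_{k,\infty})$, and the Blackwell--MacQueen identification with base measure $(\sum_k n_{k,0}\delta_k+\theta\nu)/R$ for part (3) --- but your telescoping observation is a genuine refinement. The paper estimates the summability of $\tfrac{\theta R_{t+1}}{N_{t+1}(N_t+\theta)}$ case by case ($\mathcal{O}(t^{-2})$ when $R_t\to R>0$, $\mathcal{O}(t^{-2+\alpha})$ when $R_t=\Theta(t^{-\alpha})$ with $\alpha<1$, $\mathcal{O}(N_t^{-1})$ in the geometric case) and concludes only that the infinite product lies in $(0,1)$. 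Writing each factor as $\frac{N_{t}/(N_{t}+\theta)}{N_{t+1}/(N_{t+1}+\theta)}$ collapses the product to the closed form $\mathbb{E}\left(\frac{\pi_{k,\infty}}{\pi_{k,0}}\right)=\frac{N/(N+\theta)}{N_\infty/(N_\infty+\theta)}$, which equals $N/(N+\theta)$ whenever $N_t\to\infty$; this settles (1)(i) and (2), including the $N\to\infty$ limits, in one stroke and with the exact constant rather than just the bounds $0<\cdot<1$. (It also matches the paper's remark that the new investors' limiting expected total share is $\theta/(N+\theta)$.) Your part (3) is the paper's argument.

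Your suspicion about part (1)(ii) is correct, and you should trust your computation: the claim $\pi_{k,\infty}=0$ a.s.\ for $R_t=\Theta(t^{-\alpha})$, $\alpha>1$, is false as stated, and your deterministic bound is a complete disproof --- since $n_{k,t}\ge n_{k,0}$ and $N_t\uparrow N_\infty=N+\sum_{t\ge1}R_t<\infty$ when the rewards are summable, one has $\pi_{k,\infty}\ge n_{k,0}/N_\infty>0$ surely. The paper's own proof of this part asserts $\frac{R_{t+1}}{N_{t+1}(N_t+\theta)}\ge Ct^{-1}$ for large $t$, hence divergence of the sum and vanishing of the product; but for $\alpha>1$ the denominators are bounded (precisely because $N_t\le N_\infty<\infty$), so the term is $\Theta(t^{-\alpha})=o(t^{-1})$ and the sum converges. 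In fact your telescoping identity shows the product is always at least $N/(N+\theta)>0$, for \emph{every} deterministic reward sequence, so no rewarding scheme can make $\mathbb{E}(\pi_{k,\infty})$ vanish in this model: the dilution-to-zero conclusion would require changing the model (e.g.\ forcing entry of new investors at a positive rate), not a sharper argument. So your proof is correct on the opening claim and on parts (1)(i), (2) and (3), and where it stops short, the defect lies in the paper's statement and proof of (1)(ii), not in your argument.
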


\quad The proof of Theorem \ref{thm:6} is given in Appendix \ref{scD3}. 
The theorem shows that for the dynamical PoS model \eqref{eq:selection1f}--\eqref{eq:selectiontf}, the shares of initial investors will decrease over the time. 
If the coin reward does not decay too fast, i.e. $R_t \gg t^{-1}$, the expectation of the ratio $\pi_{k, \infty}/\pi_{k,0}$ tends to $1$ as the initial coin offering $N$ is large.
On the contrary, if the coin reward decays very fast, i.e. $R_t \ll t^{-1}$, the initial investors' shares will be eventually diluted to zero.
More is known if the coin reward is constant. 
As is clear in the proof, the selection rule \eqref{eq:selectiontf} as a probability measure converges 
with probability one to a random discrete probability distribution $F \stackrel{d}{=} \Dir ((\sum_{k = 1}^K n_{k,0} \delta_k + \theta \nu)/R)$, and given $F$ the indices of selected investors, i.e. $X_1, X_2, \ldots$ are independent and identically distributed as $F$.
Recall that $F$ has the representation $\sum_{t =1}^{\infty} P_t \delta_{Z_t}$ where $(P_1, P_2, \ldots)$ has $\GEM(\frac{N + \theta}{R})$ distribution, and $Z_1, Z_2, \ldots$ are independent and identically distributed as $(\sum_{k = 1}^K n_{k,0} \delta_k + \theta \nu)/(N + \theta)$ and also independent of $(P_1, P_2, \ldots)$.
Thus, the indices of new investors (in order of their appearance) are just independent and identically distributed as $\nu$, and the limiting expectation of their total shares is $\frac{\theta}{N + \theta}$.
So the larger the initial coin offerings $N$ are, the less influence of new investors have.

\section{Conclusion}
\label{sc4}

\quad In this paper, we study the evolution of investor shares in a PoS blockchain under various rewarding schemes and for different types of investors.
In contrast with the previous works where only large investors are considered, 
we take the heterogeneity of investors into account, 
and show that medium to small investors may suffer from share instability 
-- their shares may be volatile, or even shrink to zero in the long run.
This leads to the phase transition phenomenon, 
where thresholds for stability vs instability are characterized under different rewarding schemes.
In particular, for the PoS protocol with a geometric reward we observe chaotic centralization; 
that is, all the shares will go to one investor in a random manner.
In response to the increasing activities in blockchain networks, 
we also propose and analyze a dynamical population model for the PoS protocol which allows the number of investors to grow over the time. 
Our quantitative analysis also provides guidance on the choice of rewarding schemes so that decentralization is indeed implemented in a PoS blockchain.

\quad There are a few directions to extend this work.
For instance, one can study the trading incentive in the dynamical population setting.
This requires incorporating a game-theoretic component to the analysis,
and formulating a suitable reward optimization problem.
Another problem is to consider other types of urn dynamics to model the voting rule in the PoS protocol, 
e.g. square voting rule (\cite{Pen46}),
and study the problems of long time stability and reward incentives.

\bigskip
{\bf Acknowledgement:} 
We thank Alex Y. Wu for the help with the numerical experiments and the careful reading of the manuscript. 
We thank David Yao for helpful discussions, and Agostino Capponi and Jing Huang for various pointers to the literature.
We gratefully acknowledges financial support through an NSF grant DMS-2113779 and through a start- up grant at Columbia University.

\appendix
\section{Proof of Theorem \ref{thm:1}}
\label{scA}
\begin{proof}
The fact that the investor shares $(\pi_{1,t}, \ldots, \pi_{K,t})$ converges almost surely as $t \to \infty$ to $(\pi_{1,\infty}, \ldots, \pi_{K,\infty})$ with the Dirichlet distribution \eqref{eq:limDirichlet} is a classical result of P\'olya urn, see e.g.\cite[Section 6.3]{JK77}.

($i$) Since $(\pi_{k,t}, \, t \ge 0)$ is a martingale, we have $\mathbb{E}(\pi_{k,t}) = \pi_{k,0}$.
It follows from  \cite[Corollary 3.1]{M09} or Lemma 2.2 of \cite[Lemma 2.2]{GR13} that 
\begin{equation}
\label{eq:vart}
\var(\pi_{k,t}) = \frac{R^2}{(Rt+N)^2}\left(\frac{R}{N+R}t^2 + \frac{N}{N+R} t \right) \pi_{k,0}(1-\pi_{k,0}).
\end{equation}
Specializing \eqref{eq:vart} to $t = \infty$, we obtain $\var(\pi_{k,\infty}) = \frac{R}{N+R} \pi_{k,0}(1-\pi_{k,0})$ which recovers Corollary 1 of \cite{RS21}.
By applying Chebyshev's inequality, we get
\begin{align}
\mathbb{P}\left(\left|\frac{\pi_{k, t}}{\pi_{k,0}} - 1\right| > \varepsilon \right) & \le \frac{\var(\pi_{k,t})}{\varepsilon^2 \pi_{k,0}^2} \notag \\
& = \frac{1 - \pi_{k,0}}{\varepsilon^2 \pi_{k,0}} \left(\frac{R^3 t^2}{(Rt + N)^2(N+R)} + \frac{R^2 N t}{(Rt+N)^2 (N+R)} \right)  \notag \\
& \le  \frac{1 - \pi_{k,0}}{\varepsilon^2 \pi_{k,0}} \left(\frac{R}{N+R} + \frac{R}{4(N+R)} \right), \label{eq:varineq}
\end{align}
which leads to \eqref{eq:coninter} by noting that $\pi_{k,0} = \frac{f(N)}{N}$ and $1 - \pi_{k,0} \le 1$.

($ii$) Since $(\pi_{1,\infty}, \ldots, \pi_{K,\infty}) \stackrel{d}{=} \Dir\left(\frac{n_{1,0}}{R}, \ldots,\frac{n_{K,0}}{R}\right)$, we have 
$\pi_{k, \infty} \stackrel{d}{=} \bet(\frac{n_{k,0}}{R}. \frac{N - n_{k,0}}{R})$.
Now $n_{k,0} = \Theta(1)$, by standard limit theorem we get $N \pi_{k,\infty}/R \stackrel{d}{\longrightarrow} \gamma\left( \frac{n_{k,0}}{R}\right)$.
This implies the convergence in distribution \eqref{eq:consmall}.

($iii$) By ($ii$) we have $\frac{\pi_{k, \infty}}{\pi_{k,0}} \stackrel{d}{=} \frac{R}{n_{k,0}} \gamma\left(\frac{n_{k,0}}{R}\right)$.
Let $a = \frac{n_{k,0}}{R}$ so $a \to 0$ as $N \to \infty$ by hypothesis.
Then
\begin{equation*}
\mathbb{P}\left(\frac{\pi_{k,\infty}}{\pi_{k,0}} < \varepsilon \right) = \mathbb{P} \left( \frac{\gamma(a)}{a} \le \varepsilon \right) 
= \frac{1}{\Gamma(a)} \int_0^{a\varepsilon} x^{a-1}e^{-x}dx.
\end{equation*}
Note that $\int_0^{a\varepsilon} x^{a-1}e^{-x}dx \ge e^{-a \varepsilon} \frac{(a \varepsilon)^a}{a}$, and $a \Gamma(a) \to 1$ as $a \to 0$.
As a result, there exists a constant $C > 0$ (independent of $a$) such that
\begin{equation*}
\mathbb{P}\left(\frac{\pi_{k,\infty}}{\pi_{k,0}} < \varepsilon \right) \ge Ce^{-a \varepsilon} (a \varepsilon)^a = C \exp\left( -a \varepsilon + a \log a + a \log \varepsilon \right),
\end{equation*}
which converges to $1$ as $a \to 0$. This yields the convergence \eqref{eq:continy}.
\end{proof}

\section{Proof of Theorem \ref{thm:2}}
\label{scB}

\quad To prove Theorem \ref{thm:2}, we need a series of lemmas.
\begin{lemma}
\label{lem:generalmoment}
The variance of investor $k$'s share is
\begin{equation}
\label{eq:generalvart}
\var(\pi_{k,t}) = a_t \pi_{k,0}(1 - \pi_{k,0}),
\end{equation}
where 
\begin{equation}
\label{eq:at}
a_1 = \left(\frac{R_1}{N_1}\right)^2, \quad a_{t+1} = a_t + \left(\frac{R_{t+1}}{N_{t+1}}\right)^2(1-a_t).
\end{equation}
The third central moment of investor $k$'s share satisfy the relation
\begin{align}
\label{eq:generalthird}
\mu_3(\pi_{k,t+1}) = \mu_3(\pi_{k,t}) & + \frac{3 R_{t+1}^2}{N_{t+1}^2}\mathbb{E}[(\pi_{k,t} - \pi_{k,0}) \pi_{k,t}(1 - \pi_{k,t})] \notag \\
& + \frac{R_{t+1}^3}{N_{t+1}^3}\mathbb{E}[\pi_{k,t}(1 - \pi_{k,t})(1 - 2 \pi_{k,t})],
\end{align}
and the fourth central moment of investor $k$'s share satisfy the relation
\begin{align}
\label{eq:generalfourth}
\mu_4(\pi_{k,t+1}) = \mu_4(\pi_{k,t}) & + \frac{6 R_{t+1}^2}{N_{t+1}^2}\mathbb{E}[(\pi_{k,t} - \pi_{k,0})^2 \pi_{k,t}(1 - \pi_{k,t})] \notag \\
& + \frac{4 R_{t+1}^3}{N_{t+1}^3}\mathbb{E}[(\pi_{k,t} - \pi_{k,0})\pi_{k,t}(1 - \pi_{k,t})(1 - 2 \pi_{k,t})]  \notag \\
& + \frac{R_{t+1}^4}{N_{t+1}^4}\mathbb{E}[\pi_{k,t}(1 - \pi_{k,t})(1 - 3 \pi_{k,t}+ 3 \pi_t^2)].
\end{align}
\end{lemma}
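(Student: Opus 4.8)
The plan is to reduce everything to the one-step martingale increment together with the elementary central moments of a Bernoulli variable. Setting $c_{t+1} := R_{t+1}/N_{t+1}$ and using $N_{t+1} = N_t + R_{t+1}$, the recursion \eqref{eq:Dshare} reads $\pi_{k,t+1} = (1 - c_{t+1})\pi_{k,t} + c_{t+1} 1_{S_{k,t+1}}$. Writing $\xi_{t+1} := 1_{S_{k,t+1}} - \pi_{k,t}$ and $D_t := \pi_{k,t} - \pi_{k,0}$, this becomes the clean additive form $D_{t+1} = D_t + c_{t+1}\xi_{t+1}$. Conditionally on $\mathcal{F}_t$ the indicator $1_{S_{k,t+1}}$ is Bernoulli with parameter $\pi_{k,t}$, so I would first record once and for all the conditional central moments
\begin{align*}
\mathbb{E}[\xi_{t+1} \mid \mathcal{F}_t] &= 0, \\
\mathbb{E}[\xi_{t+1}^2 \mid \mathcal{F}_t] &= \pi_{k,t}(1-\pi_{k,t}), \\
\mathbb{E}[\xi_{t+1}^3 \mid \mathcal{F}_t] &= \pi_{k,t}(1-\pi_{k,t})(1-2\pi_{k,t}), \\
\mathbb{E}[\xi_{t+1}^4 \mid \mathcal{F}_t] &= \pi_{k,t}(1-\pi_{k,t})(1-3\pi_{k,t}+3\pi_{k,t}^2).
\end{align*}

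Next I would expand the $m$-th central moment $\mu_m(\pi_{k,t+1}) := \mathbb{E}[D_{t+1}^m]$ by the binomial theorem and condition on $\mathcal{F}_t$, using that $D_t$ is $\mathcal{F}_t$-measurable:
\begin{equation*}
\mu_m(\pi_{k,t+1}) = \sum_{j=0}^m \binom{m}{j} c_{t+1}^j\, \mathbb{E}\!\left[D_t^{m-j}\, \mathbb{E}[\xi_{t+1}^j \mid \mathcal{F}_t]\right].
\end{equation*}
The $j=0$ term reproduces $\mu_m(\pi_{k,t})$ and the $j=1$ term vanishes by the martingale property $\mathbb{E}[\xi_{t+1}\mid\mathcal{F}_t]=0$. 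Substituting the Bernoulli moments above for $j = 2,3,4$ and recalling $D_t = \pi_{k,t} - \pi_{k,0}$ yields \eqref{eq:generalthird} for $m=3$ and \eqref{eq:generalfourth} for $m=4$ directly; these are left in recursive form, so nothing further is required.

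For the variance ($m=2$, where $\mu_2 = \var$) I would then close the recursion explicitly. The expansion gives $\mu_2(\pi_{k,t+1}) = \mu_2(\pi_{k,t}) + c_{t+1}^2\,\mathbb{E}[\pi_{k,t}(1-\pi_{k,t})]$, and since $\mathbb{E}[\pi_{k,t}] = \pi_{k,0}$ I would rewrite $\mathbb{E}[\pi_{k,t}(1-\pi_{k,t})] = \pi_{k,0}(1-\pi_{k,0}) - \mu_2(\pi_{k,t})$. Feeding in the ansatz $\mu_2(\pi_{k,t}) = a_t\,\pi_{k,0}(1-\pi_{k,0})$ converts this into the scalar recursion $a_{t+1} = a_t + c_{t+1}^2(1 - a_t)$, and the base case $a_1 = c_1^2 = (R_1/N_1)^2$ follows from $D_1 = c_1(1_{S_{k,1}} - \pi_{k,0})$; this is exactly \eqref{eq:generalvart}--\eqref{eq:at}. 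There is no deep obstacle in this lemma: the only points needing care are getting the third and fourth Bernoulli central moments right and tracking the binomial coefficients, after which each identity drops out of the martingale-increment decomposition.
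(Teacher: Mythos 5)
Your proposal is correct and follows essentially the same route as the paper: the paper's proof likewise writes $\pi_{k,t+1} - \pi_{k,0} = (\pi_{k,t} - \pi_{k,0}) + \frac{R_{t+1}}{N_{t+1}}(1_{S_{k,t+1}} - \pi_{k,t})$ and obtains \eqref{eq:generalthird}--\eqref{eq:generalfourth} by binomial expansion and conditioning on $\mathcal{F}_t$, exactly as you do. The only difference is that the paper imports the variance formula \eqref{eq:generalvart}--\eqref{eq:at} from \cite[Lemma A.2]{RS21}, whereas you re-derive it as the $m=2$ case of the same expansion together with the ansatz $\mu_2(\pi_{k,t}) = a_t\,\pi_{k,0}(1-\pi_{k,0})$, which is a harmless and more self-contained substitute.
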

\begin{proof}
The formula \eqref{eq:generalvart}--\eqref{eq:at} for the variance of investor's share is read from \cite[Lemma A.2]{RS21}.
It is easily seen from \eqref{eq:Dshare} that
\begin{equation}
\label{eq:centmoment}
\pi_{k,t+1} - \pi_{k,0} = (\pi_{k,t} - \pi_{k,0}) + \frac{R_{t+1}}{N_{t+1}}(1_{S_{k, t+1}} - \pi_{k,t}). 
\end{equation}
Taking the third and the fourth moment on both sides of \eqref{eq:centmoment}, and using the binomial expansion yields the formulas \eqref{eq:generalthird}--\eqref{eq:generalfourth}.
\end{proof}

\begin{lemma}
\label{lem:asympnozero}
Assume that the reward $R_t$ is decreasing, i.e. $R_t \ge R_{t+1}$ for each $t \ge 0$, and that $\lim_{t \to \infty} R_t = \underline{R} > 0$.
\begin{enumerate}[itemsep = 3 pt]
\item
Let $a_t$ be defined by \eqref{eq:at}. We have
\begin{equation}
\label{eq:boundat}
\frac{(N - R_1)\underline{R}^2 t}{N(N + R_1)(N + R_1(1+t))} \le a_t \le \frac{R_1}{N}, \quad \mbox{for each } t \ge 1.
\end{equation}
\item
Let $\mu_3(\pi_{k,t})$ and $\mu_4(\pi_{k,t})$ be the third and the fourth central moment of investor $k$'s share satisfying \eqref{eq:generalthird}, \eqref{eq:generalfourth} respectively.
If $\pi_{k,0} = \mathcal{O}(1/N)$, there exist $C_3, C_4 > 0$ independent of $t$ and $N$ such that
\begin{equation}
\label{eq:bound34}
\mu_3(\pi_{k,t}) \le \frac{C_3 \pi_{k,0}}{N^2},  \quad \mu_4(\pi_{k,t}) \le \frac{C_4 \pi_{k,0}}{N^3} \quad \mbox{for each } t \ge 1.
\end{equation}
\end{enumerate}
\end{lemma}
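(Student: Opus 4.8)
The plan is to exploit the multiplicative structure hidden in the recursion \eqref{eq:at}. Writing $b_n := R_n/N_n$, I would first rearrange \eqref{eq:at} as $1 - a_{t+1} = (1-a_t)(1 - b_{t+1}^2)$, which telescopes to $1 - a_t = \prod_{n=1}^t (1-b_n^2)$, i.e. $a_t = 1 - \prod_{n=1}^t(1-b_n^2)$. For the upper bound I would combine the elementary inequality $1 - \prod_{n=1}^t(1-b_n^2) \le \sum_{n=1}^t b_n^2$ with the telescoping estimate $b_n^2 = R_n^2/N_n^2 \le R_1 R_n/(N_{n-1}N_n) = R_1\bigl(1/N_{n-1} - 1/N_n\bigr)$, whose partial sum is at most $R_1/N_0 = R_1/N$. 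This gives $a_t \le R_1/N$, the right half of \eqref{eq:boundat}, and also records the bound $\sum_{n=1}^t b_n^2 \le R_1/N$ that I will reuse below.

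For the lower bound I would feed the upper bound back into \eqref{eq:at}: since $1 - a_t \ge (N-R_1)/N$ in the relevant regime $N > R_1$, the increment satisfies $a_{t+1}-a_t = b_{t+1}^2(1-a_t) \ge \frac{N-R_1}{N}\,b_{t+1}^2$, so $a_t \ge \frac{N-R_1}{N}\sum_{n=1}^t b_n^2$. Using $R_n \ge \underline{R}$ and $N_n \le N + R_1 n$ together with the telescoping inequality $\frac{1}{(N+R_1 n)^2}\ge \frac{1}{(N+R_1 n)(N+R_1(n+1))} = \frac{1}{R_1}\bigl(\frac{1}{N+R_1 n}-\frac{1}{N+R_1(n+1)}\bigr)$ yields $\sum_{n=1}^t (N+R_1 n)^{-2}\ge \frac{t}{(N+R_1)(N+R_1(1+t))}$, which assembles into exactly the claimed lower bound in \eqref{eq:boundat}.

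For part (2), set $p:=\pi_{k,0}=\mathcal{O}(1/N)$. I would run the recursions \eqref{eq:generalthird} and \eqref{eq:generalfourth} forward from $\mu_3(\pi_{k,0})=\mu_4(\pi_{k,0})=0$, controlling each increment with three inputs: the variance bound $\var(\pi_{k,t}) = a_t\,p(1-p)\le \frac{R_1}{N}p$ from \eqref{eq:generalvart} and part (1); the second moment $\mathbb{E}(\pi_{k,t}^2)=\var(\pi_{k,t})+p^2 = \mathcal{O}(p/N)$ (using $p=\mathcal{O}(1/N)$); and the summation bounds $\sum_n b_n^2 \le R_1/N$, $\sum_n b_n^3 \le R_1^2/N^2$, $\sum_n b_n^4 \le R_1^3/N^3$ (each from $b_n \le R_1/N$ and the previous sum). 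For $\mu_3$ I would bound the $b_{t+1}^2$-term by Cauchy--Schwarz, $|\mathbb{E}[(\pi_{k,t}-p)\pi_{k,t}(1-\pi_{k,t})]| \le \sqrt{\var(\pi_{k,t})}\sqrt{\mathbb{E}(\pi_{k,t}^2)} = \mathcal{O}(p/N)$, and the $b_{t+1}^3$-term by $|\mathbb{E}[\pi_{k,t}(1-\pi_{k,t})(1-2\pi_{k,t})]|\le \mathbb{E}(\pi_{k,t}) = p$ since $|(1-x)(1-2x)|\le 1$ on $[0,1]$; summing gives $|\mu_3(\pi_{k,t})| \le \mathcal{O}(p/N)\cdot R_1/N + p\cdot R_1^2/N^2 = \mathcal{O}(p/N^2)$, which is the first claim in \eqref{eq:bound34}.

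The hard part will be the leading $b_{t+1}^2$-term of \eqref{eq:generalfourth}, namely $\mathbb{E}[(\pi_{k,t}-p)^2\pi_{k,t}(1-\pi_{k,t})]$: bounding it crudely by $\var(\pi_{k,t})=\mathcal{O}(p/N)$ only yields $\mathcal{O}(p/N^2)$ after summation, one factor of $N$ short of the target $\mathcal{O}(p/N^3)$. To recover the missing factor I would write $\mathbb{E}[(\pi_{k,t}-p)^2\pi_{k,t}]=\mu_3(\pi_{k,t})+p\,\var(\pi_{k,t})$ and invoke the already-established bound $\mu_3(\pi_{k,t})=\mathcal{O}(p/N^2)$ (together with $p\,\var(\pi_{k,t})=\mathcal{O}(p/N^2)$), so this term is $\mathcal{O}(p/N^2)$ and contributes $\mathcal{O}(p/N^3)$ after multiplying by $\sum_n b_n^2$. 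The remaining $b_{t+1}^3$- and $b_{t+1}^4$-terms are handled exactly as in the $\mu_3$ step, now using $\sum_n b_n^3$ and $\sum_n b_n^4$, each contributing $\mathcal{O}(p/N^3)$. This staged argument --- proving the $\mu_3$ estimate first and reusing it inside the $\mu_4$ recursion --- is precisely what forces the constants $C_3,C_4$ to be independent of $t$ and $N$.
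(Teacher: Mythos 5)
Your proof is correct and follows essentially the same route as the paper's: both rest on the recursion \eqref{eq:at}, the uniform summation bounds $\sum_{n \le t} (R_n/N_n)^k \le (R_1/N)^{k-1}$, and then feed these into the moment recursions \eqref{eq:generalthird}--\eqref{eq:generalfourth}. The differences are in execution, and mostly to your credit. In part (1) you prove the upper bound $a_t \le R_1/N$ via the telescoping estimate $R_n^2/N_n^2 \le R_1(1/N_{n-1} - 1/N_n)$, where the paper simply cites a lemma of Ro\c{s}u--Saleh, and your discrete telescoping for the lower bound replaces the paper's sum--integral comparison; both yield exactly \eqref{eq:boundat}. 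In part (2) you use Cauchy--Schwarz where the paper uses the one-sided pointwise inequality $(\pi_{k,t}-\pi_{k,0})\pi_{k,t}(1-\pi_{k,t}) \le \pi_{k,t}^2$; either works. Most notably, your staged treatment of the $\mu_4$ recursion --- bounding $\mathbb{E}\left[(\pi_{k,t}-\pi_{k,0})^2\pi_{k,t}(1-\pi_{k,t})\right] \le \mu_3(\pi_{k,t}) + \pi_{k,0}\var(\pi_{k,t}) = \mathcal{O}(\pi_{k,0}/N^2)$ via the already-established $\mu_3$ bound --- makes explicit precisely the step the paper glosses over: its display for $\mu_4$ asserts this increment is of order $\pi_{k,0}/N^2$ ``using the same argument,'' yet a crude variance bound gives only $\mathcal{O}(\pi_{k,0}/N)$, so the third-moment input you identify is genuinely needed to reach $\mathcal{O}(\pi_{k,0}/N^3)$ after summation.
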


\begin{proof}
(1) The upper bound $a_t \le \frac{R_1}{N}$ follows from \cite[Lemma A.4]{RS21}.
By \eqref{eq:at}, we get $a_{t + 1} \ge a_t + \left(1 - \frac{R_1}{N}\right) \left(\frac{R_{t+1}}{N_{t+1}}\right)^2$.
This implies that for $t \ge 1$,
\begin{align}
a_t & \ge \left(1 - \frac{R_1}{N}\right)\sum_{n = 1}^t \left(\frac{R_n}{N + \sum_{k=1}^n R_k} \right)^2 \notag\\
& \ge \left(1 - \frac{R_1}{N}\right)\sum_{n = 1}^t \left(\frac{R_n}{N + nR_1}\right)^2, \label{eq:atlow}
\end{align}
where the second inequality is due to the fact that $R_t$ is decreasing.
By abuse of language, let $(R_s, \,s \in [1, \infty))$ be the linear interpolation of $(R_t, \, t = 1,2, \ldots)$.
Since $s \to \frac{R_s}{N + sR_1}$ is decreasing, we deduce from \eqref{eq:atlow} that
\begin{equation*}
a_t  \ge \left(1 - \frac{R_1}{N}\right) \int_1^{t+1} \frac{R_s^2}{(N + sR_1)^2} ds  \ge \left(1 - \frac{R_1}{N}\right) \int_1^{t+1} \frac{\underline{R}^2}{(N + sR_1)^2} ds,
\end{equation*}
which yields the lower bound in \eqref{eq:boundat}.

(2) Note that $(\pi_{k,t} - \pi_{k,0}) \pi_{k,t}(1 - \pi_{k,t}) \le \pi_{k,t}^2(1 - \pi_{k,t}) \le \pi_{k,t}^2$.
Then for each $t \ge 1$,
\begin{align}
\label{eq:125}
\mathbb{E}[(\pi_{k,t} - \pi_{k,0}) \pi_{k,t}(1 - \pi_{k,t})]  \le \mathbb{E}(\pi_{k,t}^2) &= \pi_{k,0}^2 + \var(\pi_{k,t}) \notag \\
& \le \pi_{k,0}^2 + \frac{R_1 \pi_{k,0}}{N} \notag \\
& \le \frac{C \pi_{k,0}}{N} \quad \mbox{for some } C > 0,
\end{align}
where the second inequality follows from \eqref{eq:generalvart} and the upper bound in \eqref{eq:boundat}, and the last inequality is due to the fact that $\pi_{k,0} = \mathcal{O}(1/N)$.
Also for each $t \ge 1$,
\begin{equation}
\label{eq:126}
\mathbb{E}[\pi_{k,t}(1 - \pi_{k,t})(1 - 2 \pi_{k,t})] \le \mathbb{E}(\pi_{k,t}) = \pi_{k,0}.
\end{equation}
Combining \eqref{eq:generalthird}, \eqref{eq:125} and \eqref{eq:126}, we get
\begin{equation}
\label{eq:u3upper}
\mu_3(\pi_{k,t}) \le \frac{3C \pi_{k,0}}{N} \sum_{n = 1}^t \left(\frac{R_n}{N + \sum_{k=1}^n R_k} \right)^2 + \pi_{k,0} \sum_{n = 1}^t \left(\frac{R_n}{N + \sum_{k=1}^n R_k} \right)^3.
\end{equation}
Using the sum-integral trick as in \cite[Lemma A.4]{RS21} we deduce that 
$\sum_{n = 1}^t \left(\frac{R_n}{N + \sum_{k=1}^n R_k} \right)^k \le (R_1/N)^{k-1}$ for $k = 2, 3, \ldots$
So the bound \eqref{eq:u3upper} leads to
\begin{equation*}
\mu_3(\pi_{k,t}) \le 3CR_1 \frac{\pi_{k,0}}{N^2} + R_1^2 \frac{\pi_{k,0}}{N^2} \le \frac{C_3 \pi_{k,0}}{N^2} \quad \mbox{for some } C_3 > 0.
\end{equation*}
Similarly, we get from \eqref{eq:generalfourth} that
\begin{align*}
\mu_4(\pi_{k,t}) & \le \frac{C \pi_{k,0}}{N^2} \sum_{n = 1}^t \left(\frac{R_n}{N + \sum_{k=1}^n R_k} \right)^2 + \frac{C' \pi_{k,0}}{N}  \sum_{n = 1}^t \left(\frac{R_n}{N + \sum_{k=1}^n R_k} \right)^3  \\
& \qquad \qquad \qquad \qquad \qquad \qquad \qquad+ C''\pi_{k,0}  \sum_{n = 1}^t \left(\frac{R_n}{N + \sum_{k=1}^n R_k} \right)^4 \\
& \le C R_1 \frac{\pi_{k,0}}{N^3} + C'R_1^2 \frac{\pi_{k,0}}{N^3} + C'' R_1^3 \frac{\pi_{k,0}}{N^3} 
\le \frac{C_4 \pi_{k,0}}{N^3} \quad \mbox{for some } C_4 > 0.
\end{align*}
\end{proof}

\begin{lemma}
\label{lem:largerhalf}
Assume that the reward $R_t$ is decreasing, i.e. $R_t \ge R_{t+1}$ for each $t \ge 0$, and that $R_t = \Theta(t^{-\alpha})$ for $\alpha > \frac{1}{2}$.
\begin{enumerate}[itemsep = 3 pt]
\item
Let $a_t$ be defined by \eqref{eq:at}. We have
\begin{equation}
\label{eq:boundat2}
\frac{R_1^2}{(N+R_1)^2} \le a_t \le \frac{\sum_{t \ge 1} R_t^2}{N^2}, \quad \mbox{for each } t \ge 1.
\end{equation}
\item
Let $\mu_3(\pi_{k,t})$ and $\mu_4(\pi_{k,t})$ be the third and the fourth central moment of investor $k$'s share satisfying \eqref{eq:generalthird}, \eqref{eq:generalfourth} respectively.
If $\pi_{k,0} = \mathcal{O}(1/N^2)$, there exist $C_3, C_4 > 0$ independent of $t$ and $N$ such that
\begin{equation}
\label{eq:bound342}
\mu_3(\pi_{k,t}) \le \frac{C_3 \pi_{k,0}}{N^3},  \quad \mu_4(\pi_{k,t}) \le \frac{C_4 \pi_{k,0}}{N^4} \quad \mbox{for each } t \ge 1.
\end{equation}
\end{enumerate}
\end{lemma}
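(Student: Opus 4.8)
For Part (1), the plan is to read both bounds directly off the recursion \eqref{eq:at}. First I would check by induction that $0 \le a_t \le 1$: we have $a_1 = (R_1/N_1)^2 \in [0,1]$, and $a_{t+1} = a_t\bigl(1-(R_{t+1}/N_{t+1})^2\bigr) + (R_{t+1}/N_{t+1})^2$ is a convex combination of $a_t \in [0,1]$ and $1$, hence stays in $[0,1]$. Since $a_{t+1} - a_t = (R_{t+1}/N_{t+1})^2(1-a_t) \ge 0$, the sequence is nondecreasing, so $a_t \ge a_1 = R_1^2/(N+R_1)^2$, which is the claimed lower bound. For the upper bound I would drop the factor $1-a_t \le 1$ to get $a_{t+1} \le a_t + (R_{t+1}/N_{t+1})^2$, telescope from $a_1$, and use $N_n \ge N$ to obtain $a_t \le N^{-2}\sum_{n=1}^t R_n^2 \le N^{-2}\sum_{n\ge 1} R_n^2$; this last sum is finite precisely because $\alpha > 1/2$ makes $R_t = \Theta(t^{-\alpha})$ square-summable.

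For Part (2), I would follow the template of Lemma \ref{lem:asympnozero}(2): unroll the recursions \eqref{eq:generalthird} and \eqref{eq:generalfourth} from the deterministic initial condition $\mu_3(\pi_{k,0}) = \mu_4(\pi_{k,0}) = 0$, so that $\mu_3(\pi_{k,t})$ and $\mu_4(\pi_{k,t})$ are sums over $n = 1,\dots,t$ of increments, each carrying a factor $(R_n/N_n)^{\ell}$ with $\ell \ge 2$ times an expectation. The two scalar inputs I would use repeatedly are $\mathbb{E}(\pi_{k,t}) = \pi_{k,0}$ (the martingale property) and $\mathbb{E}(\pi_{k,t}^2) = \pi_{k,0}^2 + \var(\pi_{k,t}) \le C\pi_{k,0}/N^2$, where the variance bound comes from \eqref{eq:generalvart} together with Part (1), and $\pi_{k,0}^2 \le (\text{const})\,\pi_{k,0}/N^2$ uses the hypothesis $\pi_{k,0} = \mathcal{O}(1/N^2)$. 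Bounding all the bracketed factors crudely via $0 \le \pi_{k,t} \le 1$ and $|\pi_{k,t}-\pi_{k,0}| \le 1$ reduces each increment's expectation to one of $\pi_{k,0}$, $\mathbb{E}(\pi_{k,t}^2)$, or $\var(\pi_{k,t})$.

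The essential gain over Lemma \ref{lem:asympnozero} is summability: because $\alpha > 1/2$ forces $\sum_n R_n^2 < \infty$ and $R_n$ is decreasing, one has $\sum_{n \ge 1}(R_n/N_n)^{\ell} \le N^{-\ell}\sum_n R_n^{\ell} \le R_1^{\ell-2} N^{-\ell}\sum_n R_n^2 = O(N^{-\ell})$ for every $\ell \ge 2$, which is one full extra power of $N$ compared with the $O(1/N)$ bound available in the previous lemma. Multiplying the per-term expectation bounds by $(R_n/N_n)^{\ell}$ and summing then delivers $\mu_3(\pi_{k,t}) = O(\pi_{k,0}/N^3)$ and $\mu_4(\pi_{k,t}) = O(\pi_{k,0}/N^4)$. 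The one increment needing a touch more care is the middle ($R_{t+1}^3$) term of \eqref{eq:generalfourth}, whose expectation I would bound by $\mathbb{E}\bigl[|\pi_{k,t}-\pi_{k,0}|\,\pi_{k,t}\bigr] \le \mathbb{E}(\pi_{k,t}^2) + \pi_{k,0}^2 = O(\pi_{k,0}/N^2)$; paired with the $N^{-3}$ from its summation factor this is comfortably $O(\pi_{k,0}/N^4)$. I expect the main obstacle to be not any single inequality but the power-of-$N$ bookkeeping: one must verify that each of the three contributions to $\mu_4$ (and both to $\mu_3$) genuinely lands at the claimed order rather than merely $O(\pi_{k,0}/N^3)$, and this is exactly where the square-summability $\sum_n R_n^2 < \infty$ guaranteed by $\alpha > 1/2$ is indispensable.
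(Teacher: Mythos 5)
Your proof is correct and takes essentially the same route as the paper: Part (1) via monotonicity of $a_t$ together with telescoping the recursion \eqref{eq:at} and $N_n \ge N$, and Part (2) by unrolling the moment recursions \eqref{eq:generalthird}--\eqref{eq:generalfourth} using the variance bound from Part (1), the hypothesis $\pi_{k,0} = \mathcal{O}(1/N^2)$, and the square-summability $\sum_{n \ge 1} R_n^2 < \infty$. The only cosmetic differences are that you prove the monotonicity of $a_t$ directly from the recursion rather than citing \cite{RS21}, and you bound the first increment of $\mu_4$ crudely by $\var(\pi_{k,t}) = \mathcal{O}(\pi_{k,0}/N^2)$ instead of routing it through the $\mu_3$ bound as the paper implicitly does -- a weaker per-term estimate that nonetheless lands at the claimed $\mathcal{O}(\pi_{k,0}/N^4)$ here, precisely because the sum $\sum_n (R_n/N_n)^2 = \mathcal{O}(1/N^2)$ supplies the extra power of $N$ unavailable in Lemma \ref{lem:asympnozero}.
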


\begin{proof}
(1) By \cite[Lemma A.3]{RS21}, the sequence $(a_t, \, t \ge1)$ is increasing. 
So the lower bound in \eqref{eq:boundat2} follows from the fact that $a_t \ge a_1$.
Further by \eqref{eq:at}, we get
\begin{equation}
\label{eq:atupper}
a_t \le \sum_{n = 1}^t \left(\frac{R_n}{N + \sum_{k=1}^n R_k} \right)^2 \le \sum_{n = 1}^t \frac{R_n^2}{N^2}
\le \frac{\sum_{t \ge 1}R_t^2}{N^2},
\end{equation}
where the last inequality is due to the fact that $R_t = \Theta(t^{-\alpha})$ for $\alpha > \frac{1}{2}$ so $\sum_{t \ge 1}R_t^2 < \infty$.

(2) Note that $\sum_{n = 1}^t \left(\frac{R_n}{N + \sum_{k=1}^n R_k} \right)^k \le \sum_{t \ge 1} R_t^k/N^k$ for $k = 2, 3, \ldots$
Using the same argument as in Lemma \ref{lem:asympnozero} and the fact that $\pi_{k,0} = \mathcal{O}(1/N^2)$, we get
\begin{align*}
& \mu_3(\pi_{k,t}) \le \frac{C \pi_{k,0}}{N^2} \frac{1}{N^2} + C' \pi_{k,0} \frac{1}{N^3}, \\
& \mu_4(\pi_{k,t}) \le \frac{C \pi_{k,0}}{N^3} \frac{1}{N^2} +  \frac{C' \pi_{k,0}}{N^2} \frac{1}{N^3} + C'' \pi_{k,0} \frac{1}{N^4},
\end{align*}
for some $C, C', C'' > 0$ independent of $t$ and $N$.
This clearly yields the bounds \eqref{eq:bound342}.
\end{proof}

\begin{lemma}
\label{lem:smallerhalf}
Assume that the reward $R_t$ is decreasing, i.e. $R_t \ge R_{t+1}$ for each $t \ge 0$, and that $R_t = \Theta(t^{-\alpha})$ for $\alpha < \frac{1}{2}$.
\begin{enumerate}[itemsep = 3 pt]
\item
Let $a_t$ be defined by \eqref{eq:at}.  There exist $C > c > 0$ independent of $t$ and $N$ such that 
\begin{equation}
\label{eq:boundat3}
cN^{-\frac{1}{1-\alpha}} \le a_t \le CN^{-\frac{1}{1-\alpha}}, \quad \mbox{for each } t \ge N^{\frac{1}{1-\alpha}} .
\end{equation}
\item
Let $\mu_3(\pi_{k,t})$ and $\mu_4(\pi_{k,t})$ be the third and the fourth central moment of investor $k$'s share satisfying \eqref{eq:generalthird}, \eqref{eq:generalfourth} respectively.
If $\pi_{k,0} = \mathcal{O}(N^{-\frac{1}{1- \alpha}})$, there exist $C_3, C_4 > 0$ independent of $t$ and $N$ such that
\begin{equation}
\label{eq:bound343}
\mu_3(\pi_{k,t}) \le C_3 \pi_{k,0}N^{-\frac{2}{1 - \alpha}},  \quad \mu_4(\pi_{k,t}) \le C_4 \pi_{k,0}N^{-\frac{3}{1 - \alpha}} \quad \mbox{for each } t \ge 1.
\end{equation}
\end{enumerate}
\end{lemma}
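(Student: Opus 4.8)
The plan is to treat the two parts separately, with Part (1) --- the two-sided control of the variance prefactor $a_t$ --- being the substantive new estimate and Part (2) following the template already set up in Lemmas \ref{lem:asympnozero} and \ref{lem:largerhalf}. For Part (1) I would first reduce the recursion \eqref{eq:at} to the partial sum $S_t := \sum_{n=1}^t (R_n/N_n)^2$, where $N_n = N + \sum_{k=1}^n R_k$. Iterating $a_{t+1} = a_t + (R_{t+1}/N_{t+1})^2(1-a_t)$ gives immediately $a_t \le S_t$; for the matching lower bound, note that since $R_t = \Theta(t^{-\alpha})$ with $\alpha < 1/2$ the total mass $S_\infty$ tends to $0$ as $N \to \infty$, so for $N$ large $a_t \le 1/2$ uniformly in $t$, whence $a_{t+1} \ge a_t + \tfrac12(R_{t+1}/N_{t+1})^2$ and $a_t \ge \tfrac12 S_t$. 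Thus it suffices to show $S_t = \Theta(N^{-1/(1-\alpha)})$ for $t \ge N^{1/(1-\alpha)}$.

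The estimate of $S_t$ is where the exponent is born. Since $\alpha < 1$, the accumulated reward satisfies $\sum_{k=1}^n R_k = \Theta(n^{1-\alpha})$, so $N_n = \Theta(N + n^{1-\alpha})$ and $(R_n/N_n)^2 = \Theta\big(n^{-2\alpha}/(N+n^{1-\alpha})^2\big)$. Because this integrand is eventually monotone, I would compare $S_t$ with $\int_1^t x^{-2\alpha}/(N+cx^{1-\alpha})^2\,dx$ (the sum-integral trick of \cite[Lemma A.4]{RS21}) and substitute $x = N^{1/(1-\alpha)} y$. This pulls out a factor $N^{-1/(1-\alpha)}$ times $\int_{N^{-1/(1-\alpha)}}^{tN^{-1/(1-\alpha)}} y^{-2\alpha}/(1+cy^{1-\alpha})^2\, dy$. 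The limiting integral over $(0,\infty)$ converges --- at the origin because $2\alpha<1$ and at infinity because the integrand decays like $y^{-2}$ --- which is exactly where the hypothesis $\alpha<1/2$ enters. For $t \ge N^{1/(1-\alpha)}$ the upper endpoint exceeds $1$ and the lower endpoint tends to $0$, so the truncated integral is bounded above by the (finite) full integral and below by $\int_{1/2}^1 y^{-2\alpha}/(1+cy^{1-\alpha})^2\,dy$, giving two positive constants independent of $N$ and $t$. Combined with the previous paragraph this yields \eqref{eq:boundat3}.

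For Part (2) I would run the same argument as in Lemmas \ref{lem:asympnozero} and \ref{lem:largerhalf}. Starting from the recursions \eqref{eq:generalthird}, \eqref{eq:generalfourth}, I would bound the appearing conditional expectations crudely by moments of $\pi_{k,t}$: using $\var(\pi_{k,t}) = a_t\pi_{k,0}(1-\pi_{k,0})$ from \eqref{eq:generalvart} together with the upper bound $a_t \le S_\infty = \mathcal{O}(N^{-1/(1-\alpha)})$ (valid for all $t$) and the hypothesis $\pi_{k,0} = \mathcal{O}(N^{-1/(1-\alpha)})$, one gets $\mathbb{E}(\pi_{k,t}^2) \le C N^{-1/(1-\alpha)}\pi_{k,0}$, while $\mathbb{E}[\pi_{k,t}(1-\pi_{k,t})(1-2\pi_{k,t})] \le \mathbb{E}(\pi_{k,t}) = \pi_{k,0}$, and similarly for the terms feeding $\mu_4$. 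Telescoping then expresses $\mu_3(\pi_{k,t})$ and $\mu_4(\pi_{k,t})$ as linear combinations of $\pi_{k,0}\sum_{n\le t}(R_n/N_n)^j$ for $j=2,3,4$, each prefactored by an appropriate power of $N^{-1/(1-\alpha)}$; I would estimate these higher sums by the same substitution $x = N^{1/(1-\alpha)}y$ as in Part (1), and the dominant contribution produces the powers recorded in \eqref{eq:bound343}.

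The main obstacle is the two-regime analysis of Part (1): the exponent $-1/(1-\alpha)$ does not come from either regime alone but from balancing the small-index range $n \lesssim N^{1/(1-\alpha)}$ (where $N_n \approx N$ and the summand is $\approx n^{-2\alpha}/N^2$) against the large-index range $n \gtrsim N^{1/(1-\alpha)}$ (where $N_n \approx n^{1-\alpha}$ and the summand is $\approx n^{-2}$), and the delicate point is to make the two-sided bound uniform in $N$ and in $t \ge N^{1/(1-\alpha)}$ rather than merely asymptotic. A secondary bookkeeping difficulty in Part (2) is tracking which power of $N$ dominates each sum $\sum(R_n/N_n)^j$, since the crossover at $n \approx N^{1/(1-\alpha)}$ contributes differently for different $j$.
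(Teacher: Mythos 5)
Your Part (1) is correct and is essentially the paper's own proof: the two-sided reduction of $a_t$ to $S_t=\sum_{n\le t}(R_n/N_n)^2$ (the paper uses the factor $1-R_1/N$ where you use $1-a_t\ge\tfrac12$, which is equally valid since you first show $S_\infty=o(1)$), followed by the sum--integral comparison for $S_t$. Your substitution $x=N^{1/(1-\alpha)}y$ is a clean repackaging of the paper's explicit splitting of $\int_1^t(Ns^\alpha+s)^{-2}\,ds$ at $s=N^{1/(1-\alpha)}$ in \eqref{eq:134}--\eqref{eq:136}; both arguments invoke $2\alpha<1$ at the identical spot (integrability of $y^{-2\alpha}$ near the origin), so the crossover analysis you describe as the ``main obstacle'' is exactly the paper's.

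Part (2), however, has a genuine gap, and it sits exactly where you deferred the ``secondary bookkeeping difficulty.'' Your plan requires $\sum_{n}(R_n/N_n)^j\le CN^{-(j-1)/(1-\alpha)}$ for $j=3,4$, but this is false on part of the admissible range $\alpha<1/2$: the $n=1$ term alone is $\Theta(N^{-j})$, and $N^{-j}\gg N^{-(j-1)/(1-\alpha)}$ precisely when $j\alpha>1$. Your own substitution detects this: the rescaled integral $\int y^{-j\alpha}(1+cy^{1-\alpha})^{-j}\,dy$ diverges at the origin once $j\alpha\ge 1$, so the dominant contribution is then $N^{-j}$, coming from the small-index regime, not $N^{-(j-1)/(1-\alpha)}$. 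Consequently your route delivers \eqref{eq:bound343} only for $\alpha\le 1/3$ (third moment) and $\alpha\le 1/4$ (fourth moment). Moreover, the defect is not repairable by a finer estimate: in \eqref{eq:generalthird} the coefficient $\mathbb{E}[\pi_{k,t}(1-\pi_{k,t})(1-2\pi_{k,t})]$ equals $(1+o(1))\pi_{k,0}>0$ while the other term is $O(\pi_{k,0}N^{-2/(1-\alpha)})$, so for $\alpha\in(1/3,1/2)$ one actually has $\mu_3(\pi_{k,\infty})=\Theta(\pi_{k,0}N^{-3})$, which contradicts the stated bound \eqref{eq:bound343}. In fairness, the paper's own proof makes the identical unjustified assertion (``Similar to \eqref{eq:133}--\eqref{eq:135}, $\sum_n(R_n/N_n)^k\le CN^{-(k-1)/(1-\alpha)}$ for $k=2,3,\ldots$''), so you have faithfully reproduced the published argument, weak point included; but a complete treatment would have to restrict the range of $\alpha$, or replace these bounds by $\Theta\bigl(\max\{N^{-(k-1)/(1-\alpha)},N^{-k}\}\bigr)$ and restate Part (2) of the lemma accordingly.
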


\begin{proof}
(1) It follows from \eqref{eq:atlow} and \eqref{eq:atupper} that 
\begin{equation*}
\frac{a_t}{\sum_{n = 1}^t \left(\frac{R_n}{N + \sum_{k=1}^n R_k} \right)^2} \to 1 \quad \mbox{as }N \to \infty.
\end{equation*}
Thus, we need to study the asymptotic behavior of $\sum_{n = 1}^t \left(\frac{R_n}{N + \sum_{k=1}^n R_k} \right)^2$ as $N \to \infty$.
Since $R_t = \Theta(t^{-\alpha})$ for $\alpha < \frac{1}{2}$, we have:
\begin{equation}
\label{eq:133}
c  \sum_{n = 1}^t \frac{1}{(N n^{\alpha} + n)^2}.\le \sum_{n = 1}^t \left(\frac{R_n}{N + \sum_{k=1}^n R_k} \right)^2 \le C \sum_{n = 1}^t \frac{1}{(N n^{\alpha} + n)^2}.
\end{equation}
for some $C > c > 0$ independent of $t$ and $N$.
Again by the sum-integral trick, we get
\begin{equation}
\label{eq:134}
\int_{1}^{t+1} \frac{ds}{(N s^{\alpha} + s)^2} \le \sum_{n = 1}^t \frac{1}{(N n^{\alpha} + n)^2} \le \frac{1}{N^2} + \int_1^{\infty}\frac{ds}{(Ns^{\alpha} + s)^2}.
\end{equation}
Now for $t \ge N^{\frac{1}{1-\alpha}}$, we have
\begin{align}
\label{eq:135}
\int_1^{t}\frac{ds}{(Ns^{\alpha} + s)^2} & = \int_1^{N^{\frac{1}{1-\alpha}}}\frac{ds}{(Ns^{\alpha} + s)^2} + \int_{N^{\frac{1}{1-\alpha}}}^{t}\frac{ds}{(Ns^{\alpha} + s)^2} \notag \\
& \le \frac{1}{N^2} \int_1^{N^{\frac{1}{1-\alpha}}} \frac{ds}{s^{2 \alpha}} + \int_{N^{\frac{1}{1-\alpha}}}^{t} \frac{ds}{s^2} \le C' N^{-\frac{1}{1-\alpha}},
\end{align}
and
\begin{align}
\label{eq:136}
\int_1^{t}\frac{ds}{(Ns^{\alpha} + s)^2} & = \int_1^{N^{\frac{1}{1-\alpha}}}\frac{ds}{(Ns^{\alpha} + s)^2} + \int_{N^{\frac{1}{1-\alpha}}}^{t}\frac{ds}{(Ns^{\alpha} + s)^2} \notag \\
& \ge \frac{1}{4 N^2} \int_1^{N^{\frac{1}{1-\alpha}}} \frac{ds}{s^{2 \alpha}} + \int_{N^{\frac{1}{1-\alpha}}}^{t} \frac{ds}{4 s^2} \ge C'' N^{-\frac{1}{1-\alpha}},
\end{align}
Combining \eqref{eq:133}, \eqref{eq:134}, \eqref{eq:135} and \eqref{eq:136} yields 
\begin{equation*}
cC' N^{-\frac{1}{1-\alpha}} \le \sum_{n = 1}^t \left(\frac{R_n}{N + \sum_{k=1}^n R_k} \right)^2 \le CC' N^{-\frac{1}{1-\alpha}}
\quad \mbox{for } t \ge N^{\frac{1}{1-\alpha}}.
\end{equation*}
This leads to the bounds \eqref{eq:boundat3}.

(2) Similar to \eqref{eq:133}--\eqref{eq:135}, we have $\sum_{n = 1}^t \left(\frac{R_n}{N + \sum_{k=1}^n R_k} \right)^{k} \le C N^{-\frac{k-1}{1 - \alpha}}$ for $k = 2, 3, \ldots$
Using the same argument as in Lemma \ref{lem:asympnozero} and the fact that $\pi_{k,0} = \mathcal{O}(N^{-\frac{1}{1-\alpha}})$, we get
\begin{align*}
& \mu_3(\pi_{k,t}) \le C \pi_{k,0} N^{-\frac{1}{1-\alpha}} N^{-\frac{1}{1-\alpha}}+ C' \pi_{k,0} N^{-\frac{2}{1-\alpha}}, \\
& \mu_4(\pi_{k,t}) \le C \pi_{k,0} N^{-\frac{2}{1-\alpha}} N^{-\frac{1}{1-\alpha}}+ C' \pi_{k,0} N^{-\frac{1}{1-\alpha}} N^{-\frac{2}{1-\alpha}} + C''\pi_{k,0} N^{-\frac{3}{1-\alpha}},
\end{align*}
for some $C, C', C'' > 0$ independent of $t$ and $N$.
This gives the bounds \eqref{eq:bound343}.
\end{proof}

\begin{proof}[Proof of Theorem \ref{thm:2}]
(1) ($i$) By Chebyshev' inequality and the upper bound in \eqref{eq:boundat}, we get
\begin{equation*}
\mathbb{P}\left(\left|\frac{\pi_{k, t}}{\pi_{k,0}} - 1\right| > \varepsilon \right) \le \frac{a_t(1-\pi_{k,0})}{\pi_{k,0} \varepsilon^2}
\le \frac{R_1}{N \pi_{k, 0} \varepsilon^2} = \frac{R_1}{n_{k,0} \varepsilon^2}.
\end{equation*}
This proves \eqref{eq:decinter1}: the ratio $\pi_{k,t}/\pi_{k,0}$ converges in probability to $1$ as $n_{k,0} = f(N) \to \infty$.

($ii$) Note that $\var \left(\frac{\pi_{k, \infty}}{\pi_{k,0}}\right) = \frac{N a_{\infty} (1 - \pi_{k,0})}{n_{k,0}}$.
It follows from \eqref{eq:boundat} that
\begin{equation}
\label{eq:Nainfbound}
\frac{\underline{R}^2}{R_1} \le \liminf_{N \to \infty} N a_{\infty} \le \limsup_{N \to \infty} N a_{\infty} \le R_1.
\end{equation}
Since $n_{k,0} = \Theta(1)$, we deduce from the lower bound of \eqref{eq:Nainfbound} that there exists $c > 0$ independent of $N$ such that $\var \left(\frac{\pi_{k, \infty}}{\pi_{k,0}}\right) \ge c$.
For $\varepsilon > 0$, we have
\begin{align}
\label{eq:128}
c \le \var \left(\frac{\pi_{k, \infty}}{\pi_{k,0}}\right) & = \mathbb{E}\left(\left(\frac{\pi_{k, \infty}}{\pi_{k,0}} - 1\right)^2 1_{\left\{\left|\frac{\pi_{k, \infty}}{\pi_{k,0}} - 1 \right| > \varepsilon\right\}} \right) + \mathbb{E}\left(\left(\frac{\pi_{k, \infty}}{\pi_{k,0}} - 1\right)^2 1_{\left\{\left|\frac{\pi_{k, \infty}}{\pi_{k,0}} - 1 \right| \le \varepsilon \right\}} \right) \notag \\
& \le \sqrt{\mu_4 \left(\frac{\pi_{k, \infty}}{\pi_{k,0}} \right)  \mathbb{P}\left(\left|\frac{\pi_{k, \infty}}{\pi_{k,0}} - 1 \right| > \varepsilon\right)}  + \varepsilon^2.
\end{align}
By the upper bound \eqref{eq:bound34}, we get $\mu_4 \left(\frac{\pi_{k, \infty}}{\pi_{k,0}} \right) \le \frac{C_4}{n_{k,0}^3}: = C$.
As a result, for $\varepsilon < \sqrt{c}$, 
\begin{equation*}
 \mathbb{P}\left(\left|\frac{\pi_{k, \infty}}{\pi_{k,0}} - 1 \right| > \varepsilon\right) \ge \frac{(c - \varepsilon^2)^2}{C},
\end{equation*}
which leads to the anti-concentration bound \eqref{eq:decsmall1}.

($iii$) Again from \eqref{eq:Nainfbound} we obtain
\begin{equation*}
\frac{\underline{R}^2}{R_1} \le \liminf_{N \to \infty} n_{k, 0} \var \left(\frac{\pi_{k, \infty}}{\pi_{k,0}}\right) \le \limsup_{N \to \infty} n_{k, 0} \var \left(\frac{\pi_{k, \infty}}{\pi_{k,0}}\right) \le R_1,
\end{equation*}
which implies $\var \left(\frac{\pi_{k, \infty}}{\pi_{k,0}}\right) \to \infty$ since $n_{k,0} = o(1)$.

(2) and (3) follow from Lemma \ref{lem:largerhalf} and \ref{lem:smallerhalf} exactly in the same way as (1) is derived from Lemma \ref{lem:asympnozero}.
\end{proof}

\section{Proof of Theorem \ref{thm:3}}
\label{scC}

\quad To prove Theorem \ref{thm:3}, we need the following lemma.
\begin{lemma}
\label{lem:polyat}
Assume that the coin reward $R_t = \rho N_{t-1}^{\gamma}$ for some $\rho > 0$ and $\gamma \in [0,1)$.
\begin{enumerate}[itemsep = 3 pt]
\item
Let $a_t$ be defined by \eqref{eq:at}.  There exist $c > 0$ independent of $t$ and $N$ such that 
\begin{equation}
\label{eq:boundat4}
cN^{\gamma-1} \le a_t \le \frac{\rho}{1- \gamma}N^{\gamma-1}, \quad \mbox{for } t \mbox{ sufficiently large} .
\end{equation}
\item
Let $\mu_3(\pi_{k,t})$ and $\mu_4(\pi_{k,t})$ be the third and the fourth central moment of investor $k$'s share satisfying \eqref{eq:generalthird}, \eqref{eq:generalfourth} respectively.
If $\pi_{k,0} = \mathcal{O}(N^{\gamma-1})$, there exist $C_3, C_4 > 0$ independent of $t$ and $N$ such that
\begin{equation}
\label{eq:bound344}
\mu_3(\pi_{k,t}) \le C_3 \pi_{k,0}N^{2\gamma -2},  \quad \mu_4(\pi_{k,t}) \le C_4 \pi_{k,0}N^{3\gamma-3} \quad \mbox{for each } t \ge 1.
\end{equation}
\end{enumerate}
\end{lemma}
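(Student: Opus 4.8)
\emph{The plan is to} follow the same two-step scheme used for Lemmas \ref{lem:asympnozero}--\ref{lem:smallerhalf}: first pin down the growth of the total supply $N_t$ and of the one-step ratio $R_t/N_t$, then feed these into the recursion \eqref{eq:at} for $a_t$ and into the moment recursions \eqref{eq:generalthird}--\eqref{eq:generalfourth}. Since $R_t=\rho N_{t-1}^{\gamma}$ gives $N_t-N_{t-1}=\rho N_{t-1}^{\gamma}$, the continuous surrogate $\frac{d}{dt}N_t^{1-\gamma}=(1-\gamma)\rho$ shows $N_t^{1-\gamma}=N^{1-\gamma}+\Theta(t)$, hence $N_t=\Theta(t^{1/(1-\gamma)})$ and $\frac{R_n}{N_n}=\frac{\rho N_{n-1}^{\gamma-1}}{1+\rho N_{n-1}^{\gamma-1}}=\Theta(N_{n-1}^{\gamma-1})$ with the elementary bound $R_n/N_n\le R_n/N_{n-1}=\rho N_{n-1}^{\gamma-1}$; this last inequality drives every estimate below.

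For the upper bound in \eqref{eq:boundat4} I would iterate \eqref{eq:at} to get $a_t\le\sum_{n=1}^t (R_n/N_n)^2$ (valid since $0\le a_t\le 1$), and then apply a sharpened sum--integral comparison. The exact constant $\frac{\rho}{1-\gamma}$ comes from the exact telescoping identity $\frac{1}{N_{n-1}}-\frac{1}{N_n}=\frac{R_n}{N_{n-1}N_n}$, which gives $\left(\frac{R_n}{N_n}\right)^2\le\frac{R_n}{N_{n-1}}\cdot\frac{R_n}{N_n}=\rho N_{n-1}^{\gamma}\left(\frac{1}{N_{n-1}}-\frac{1}{N_n}\right)$; bounding $N_{n-1}^{\gamma}\le x^{\gamma}$ on $[N_{n-1},N_n]$ and telescoping yields
\begin{equation*}
a_t\le\rho\sum_{n=1}^t\int_{N_{n-1}}^{N_n}x^{\gamma-2}\,dx=\rho\int_{N}^{N_t}x^{\gamma-2}\,dx=\frac{\rho}{1-\gamma}\bigl(N^{\gamma-1}-N_t^{\gamma-1}\bigr)\le\frac{\rho}{1-\gamma}N^{\gamma-1}.
\end{equation*}
For the lower bound, $a_t$ is increasing in $t$ and, by the upper bound, $a_t\le\tfrac12$ once $N$ is large, so \eqref{eq:at} gives $a_t\ge\tfrac12\sum_{n=1}^t (R_n/N_n)^2$; bounding $(R_n/N_n)^2\ge\tfrac14\rho^2 N_{n-1}^{2\gamma-2}$ (using $N_n\le 2N_{n-1}$ for $N$ large) and comparing the resulting sum from below to $\rho\int_{N}^{N_t}x^{\gamma-2}\,dx=\frac{\rho}{1-\gamma}(N^{\gamma-1}-N_t^{\gamma-1})$ yields $a_t\ge cN^{\gamma-1}$ for $t$ sufficiently large (so that $N_t^{\gamma-1}\le\tfrac12 N^{\gamma-1}$), with $c>0$ independent of $t$ and $N$.

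For part (2) I would reproduce the inductive argument of Lemma \ref{lem:asympnozero}(2) verbatim, the only new inputs being the tail sums $\sum_{n\ge1}(R_n/N_n)^k\le C_k N^{(k-1)(\gamma-1)}$ for $k=2,3,4$. These follow from $(R_n/N_n)^k\le\rho^k N_{n-1}^{k(\gamma-1)}$ together with the same integral comparison, where the substitution $u=N_s^{1-\gamma}$ turns $\int N_s^{k(\gamma-1)}\,ds$ into $\frac{1}{(1-\gamma)\rho}\int u^{-k}\,du$, of order $N^{(k-1)(\gamma-1)}$. Feeding $\var(\pi_{k,t})\le a_t\pi_{k,0}\le\frac{\rho}{1-\gamma}N^{\gamma-1}\pi_{k,0}$ and $\pi_{k,0}=\mathcal{O}(N^{\gamma-1})$ into \eqref{eq:generalthird} gives $\mu_3(\pi_{k,t})\le 3C N^{\gamma-1}\pi_{k,0}\sum(R_n/N_n)^2+\pi_{k,0}\sum(R_n/N_n)^3\le C_3\pi_{k,0}N^{2\gamma-2}$; the bound on $\mu_4$ is identical, using the decomposition $\mathbb{E}[(\pi_{k,t}-\pi_{k,0})^2\pi_{k,t}(1-\pi_{k,t})]\le\pi_{k,0}^2 a_t+\mu_3(\pi_{k,t})$ to control the leading term of \eqref{eq:generalfourth}.

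The main obstacle is the first step: obtaining sharp, two-sided control of $a_t$ with the correct power $N^{\gamma-1}$ and, for the upper bound, the exact constant $\frac{\rho}{1-\gamma}$, uniformly in $t$. Everything hinges on replacing the discrete sums $\sum(R_n/N_n)^k$ by the integrals $\rho\int x^{\gamma-2}\,dx$ (and their higher-$k$ analogues) without losing the $N$-dependence; the telescoping identity for $\frac{1}{N_{n-1}}-\frac{1}{N_n}$ and the monotonicity $N_{n-1}^{\gamma}\le x^{\gamma}\le N_n^{\gamma}$ are precisely what make this comparison tight. Once $a_t$ is controlled, the moment bounds and the eventual deduction of Theorem \ref{thm:3}(2) are routine repetitions of the constant-reward analysis.
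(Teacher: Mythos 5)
Your proposal is correct and follows essentially the same route as the paper: two-sided control of $a_t$ via increment-to-integral comparison (sum--integral trick) giving $a_t = \Theta(N^{\gamma-1})$, then the tail-sum bounds $\sum_{n}(R_n/N_n)^k \le C_k N^{(k-1)(\gamma-1)}$ fed into the moment recursions \eqref{eq:generalthird}--\eqref{eq:generalfourth} exactly as in Lemma \ref{lem:asympnozero}(2). The only cosmetic difference is that the paper imports the upper bound $a_t \le \frac{\rho}{1-\gamma}N^{\gamma-1}$ from \cite[Lemma A.5]{RS21}, whereas you re-derive it self-containedly by the telescoping identity $\frac{1}{N_{n-1}}-\frac{1}{N_n}=\frac{R_n}{N_{n-1}N_n}$ with the same constant.
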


\begin{proof}
(1) The upper bound $a_t \le \frac{\rho}{1-\gamma} N^{\gamma -1}$ follows from \cite[Lemma A.5]{RS21}.
Recall that $N_{t+1} = N_t + \rho N_t^{\gamma}$, and thus $N_t$ behaves asymptotically as $t^{\frac{1}{1-\gamma}}$.
As a result, there is $c> 0$ independent of $t$ and $N$ such that $N_{t} \ge c N_{t+1}$ for $t$ sufficiently large.
This also implies that $R_{t} \ge c^{\gamma} R_{t+1}$ for $t$ sufficiently large.
By \eqref{eq:atlow}, we have
\begin{align}
\label{eq:145}
a_t & \ge \left(1 - \rho N^{\gamma-1}\right) \sum_{n = 1}^t \left(\frac{R_n}{N_n} \right)^2 \notag\\
& \ge c' \sum_{n = 1}^t \left(\frac{R_{n+1}}{N_n} \right)^2 =  c' \sum_{n = 1}^t \frac{N_{n+1} - N_n}{N_n^{2 -\gamma}},
\end{align}
where the last equality is due to the fact that $R_{n+1}^2 =  \rho N_n^{\gamma}(N_{n+1} - N_n)$.
Using the sum-integral trick, we get
\begin{equation}
\label{eq:146}
\sum_{n = 1}^t \frac{N_{n+1} - N_n}{N_n^{2 -\gamma}} \ge c'' \int_N^{N_t} \frac{ds}{s^{2 - \gamma}} \ge c''' N^{\gamma -1},
\end{equation}
for $t$ sufficiently large.
Combining \eqref{eq:145} and \eqref{eq:146} yields the lower bound in \eqref{eq:boundat4}.

(2) Similar to the proof of the upper bound in \eqref{eq:boundat4}, we can show that
$\sum_{n = 1}^t \left(\frac{R_n}{N_n} \right)^k \le \rho^{k-1} N^{(k-1)(\gamma - 1)}$ for $k = 2, 3,\ldots$
Using the same argument as in Lemma \ref{lem:asympnozero} and the fact that $\pi_{k,0} = \mathcal{O}(N^{\gamma - 1})$,
we obtain:
\begin{align*}
& \mu_3(\pi_{k,t}) \le C \pi_{k,0} N^{\gamma - 1} N^{\gamma -1}+ C' \pi_{k,0} N^{2(\gamma - 1)}, \\
& \mu_4(\pi_{k,t}) \le C \pi_{k,0} N^{2(\gamma - 1)} N^{\gamma -1}+ C' \pi_{k,0} N^{\gamma -1} N^{2(\gamma -1)} + C''\pi_{k,0} N^{3(\gamma -1)},
\end{align*}
for $C, C', C'' > 0$ independent of $t$ and $N$.
 This leads to the bounds \eqref{eq:bound344}.
\end{proof}

\begin{proof}[Proof of Theorem \ref{thm:3}]
(1) It follows from the proof of Proposition 4 of \cite{RS21} that for $R_t = \rho N_{t-1}^{\gamma}$ with $\gamma > 1$, 
the sequence $a_t$ increases to the limit $a_{\infty} = 1$.
Consequently, 
\begin{equation}
\mathbb{E}(\pi_{k,t}^2) \to \mathbb{E}(\pi_{k, \infty}^2) = \pi_{k,0} \quad \mbox{as } t \to \infty.
\end{equation}
Note that $\mathbb{E}(\pi_{k,\infty}) = \pi_{k,0}$, so $\mathbb{E}(\pi_{k, \infty} (1 - \pi_{k, \infty})) = 0$.
Since $\pi_{k, \infty} (1 - \pi_{k, \infty}) \ge 0$, we get 
$\pi_{k,\infty} \in \{0,1\}$ and \eqref{eq:incextreme} holds.

(2) follows from Lemma \ref{lem:polyat} exactly in the same way as Theorem \ref{thm:1} follows from Lemmas \ref{lem:asympnozero}--\ref{lem:smallerhalf}.
\end{proof}

\section{Proof of Theorems \ref{thm:4} -- \ref{thm:6}}
\label{scD}

\subsection{Proof of Theorem \ref{thm:4}}
\label{scD1}

To prove Theorem \ref{thm:4}, we need the following result of the Blackwell-MacQueen urn scheme which generalizes the P\'olya urn scheme.

\begin{lemma}[\cite{BM73}]
\label{lem:BMscheme}
Let $\mu$ be a positive and finite measure on a Polish space $(S, \mathcal{S})$.
Define a sequence $(X_t, \, t = 1,2, \ldots)$ as follows: 
$X_1$ is distributed as $\mu(\cdot)/\mu(S)$, and for $t \ge 1$,
\begin{equation}
\label{eq:prediction}
\mathbb{P}(X_{t+1} \in \cdot | X_1, \ldots, X_t) = \frac{\mu(\cdot) + \sum_{n = 1}^t \delta_{X_n}(\cdot)}{\mu(S) + t}:= F_t(\cdot),
\end{equation}
where $\delta_X(\cdot)$ is the Dirac mass at point $X$.
Then
\begin{itemize}[itemsep = 3 pt]
\item
$F_t$ converges in total variation (and thus in distribution) almost surely to a random discrete distribution $F$, which has $\Dir(\mu)$ distribution.
 \item
Conditional given $F$, $X_1, X_2, \ldots$ are independent and identically distributed as $F$.
 \end{itemize}
\end{lemma}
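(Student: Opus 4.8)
The plan is to recover the classical Blackwell--MacQueen argument in three stages: establish that $(X_t)_{t\ge1}$ is exchangeable, invoke de Finetti's theorem to obtain a directing random measure $F$ with the stated conditional i.i.d.\ property, and then pin down the law of $F$ as $\Dir(\mu)$ by reducing to finite partitions where the classical finite P\'olya urn applies.

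First I would verify exchangeability. Expanding the joint law of $(X_1,\dots,X_n)$ by the chain rule and the predictive rule \eqref{eq:prediction}, the denominators multiply to $\prod_{t=0}^{n-1}(\mu(S)+t)$ regardless of the order, while the numerators combine into a quantity depending only on the multiset $\{X_1,\dots,X_n\}$ with its multiplicities and not on the order (each successive draw contributes either a $\mu$-factor for a value not yet seen or a count for a repeat). Hence the joint law is invariant under permutations of the indices, so $(X_t)_{t\ge1}$ is exchangeable. Since $S$ is Polish, de Finetti's theorem then furnishes a random probability measure $F$ such that, conditionally on $F$, the $X_t$ are i.i.d.\ with law $F$; this is the second bullet of the lemma, and $F$ is the almost sure weak limit of the empirical measures $\frac1t\sum_{n\le t}\delta_{X_n}$. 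Writing
\[
F_t=\frac{\mu(S)}{\mu(S)+t}\,\frac{\mu}{\mu(S)}+\frac{t}{\mu(S)+t}\,\frac1t\sum_{n=1}^{t}\delta_{X_n},
\]
the first term vanishes and the second converges to $F$, giving $F_t\to F$; coordinatewise this also follows from the fact that $t\mapsto F_t(A)$ is a bounded martingale for each fixed $A$.

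To identify the law of $F$, I would fix an arbitrary finite measurable partition $B_1,\dots,B_m$ of $S$ and coarsen the sequence by recording only the index $j$ with $X_t\in B_j$. The rule \eqref{eq:prediction} then collapses exactly to a finite P\'olya urn with initial composition $(\mu(B_1),\dots,\mu(B_m))$, whose normalized cell counts converge almost surely to a $\Dir(\mu(B_1),\dots,\mu(B_m))$ random vector --- the classical limit already used in Appendix~\ref{scA} (see \cite[Section~6.3]{JK77}). Since those normalized counts equal $\frac1t\sum_{n\le t}\mathbf 1(X_n\in B_j)\to F(B_j)$, we obtain $(F(B_1),\dots,F(B_m))\stackrel{d}{=}\Dir(\mu(B_1),\dots,\mu(B_m))$, and as this holds for every finite measurable partition, Definition~\ref{def:Dirichlet} yields $F\stackrel{d}{=}\Dir(\mu)$.

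The hard part will be the two refinements beyond this coordinatewise picture: upgrading $F_t\to F$ to convergence in total variation, and showing that $F$ is almost surely discrete. For discreteness I would use the finite-partition identification together with the stick-breaking representation of the Dirichlet process (as in Theorem~\ref{thm:5}), which writes $F=\sum_j P_j\delta_{Z_j}$ with $\sum_j P_j=1$ almost surely. For the total-variation claim, the point is that once $F$ is discrete the empirical measure eventually concentrates on a stable set of atoms, so $\|F_t-F\|_{\mathrm{TV}}$ splits into a contribution from finitely many dominant atoms, controlled by the coordinatewise convergence above, and a tail that the concentration of the atom masses forces to zero; this is precisely the delicate estimate of \cite{BM73}, whose argument I would follow.
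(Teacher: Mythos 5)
The paper does not prove Lemma \ref{lem:BMscheme} at all: it is stated as a quoted classical result with the attribution to \cite{BM73}, so there is no internal proof to compare your attempt against. Your sketch is a correct outline of the standard argument for that result: the exchangeability computation (order-invariance of the product of predictive probabilities), de Finetti's theorem on a Polish space to produce the directing measure $F$ with the conditional i.i.d.\ property, and the coarsening to a finite measurable partition, which turns \eqref{eq:prediction} into a finite P\'olya urn and identifies the finite-dimensional laws of $F$ as Dirichlet, hence $F \stackrel{d}{=} \Dir(\mu)$ by Definition \ref{def:Dirichlet}; the martingale property of $t \mapsto F_t(A)$ and the conditional law of large numbers correctly glue the urn limit to $F(A)$. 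The only caveat is the one you identify yourself: the almost sure discreteness of $F$ and the upgrade from setwise to total-variation convergence are not proved in your sketch but delegated to \cite{BM73} (and, for discreteness, to a stick-breaking representation that you should make sure is established independently of the present lemma, e.g.\ via Sethuraman's construction, to avoid circularity with Theorem \ref{thm:5}). Since the paper itself treats the whole lemma as a citation, your proposal actually supplies more detail than the paper does, and what remains unproved in it is exactly the part that genuinely requires the original source.
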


\begin{proof}[Proof of Theorem \ref{thm:4}]
First assume that $R_t \equiv R$, and let $\mu$ be a positive measure on $S = \mathbb{N}$ such that $\mu(\{k\}) = \frac{n_{k,0}}{R}$, 
so $\mu(S) = \frac{N}{R}$.
Let $X_t, \, t \ge 1$ be the index of the investor who is selected by the PoS protocol at time $t$.
By definition of the PoS scheme, we have
\begin{align*}
& \mathbb{P}(X_1 = k) = \frac{n_{k,0}}{N} = \frac{\mu(\{k\})}{\mu(S)}, \\
& \mathbb{P}(X_{t+1}= k | X_1, \ldots, X_t) = \frac{n_{k,0} + R \sum_{n = 1}^t 1_{\{X_n = k\}}}{N + Rt} = \frac{\mu(\{k\}) + \sum_{n = 1}^t \delta_{X_n}(\{k\})}{\mu(S) + t}.
\end{align*}
Lemma \ref{lem:BMscheme} then implies the identity in distribution \eqref{eq:limDirinf}.
By Definition \ref{def:Dirichlet}, we get $\pi_{k,0} \stackrel{d}{=} \bet(\frac{n_{k,0}}{R}, \frac{N-n_{k,0}}{R})$ for each fixed $k$, and thus
the results in Theorem \ref{thm:1} hold.
The results in Theorems \ref{thm:2} and \ref{thm:3} are stated for each investor $k$, and do not depend on the number $K$ of investors.
So these results also hold in the infinite population setting. 
\end{proof}

\subsection{Proof of Theorem \ref{thm:5}}
\label{scD2}

\begin{proof}
The stick-breaking representation \eqref{eq:stickbreaking} follows from the Blackwell-MacQueen urn construction (Lemma \ref{lem:BMscheme}), along with various constructions of the Dirichlet measure by \cite{Fer73, Mc65}.
See e.g. \cite[Section 2.2]{Pitman96} for a review of the circle of ideas.
The fact that $K_t$ behaves asymptotically as $\frac{N}{R} \log t$ is read from \cite[Theorem 2.3 ]{KH73}.
\end{proof}

\subsection{Proof of Theorem \ref{thm:6}}
\label{scD3}
\begin{proof}
Note that $n_{k,t+1} = n_{k,t} + R_t$ with probability $\frac{n_{k,t}}{N_t + \theta}$ 
and $n_{k,t+1} = n_{k,t}$ with probability $1-\frac{n_{k,t}}{N_t + \theta}$.
As a result,
\begin{align*}
\mathbb{E}(\pi_{k,t+1}|\mathcal{F}_t) &= \frac{n_{k,t} + R_t}{N_t + R_t} \frac{n_{k,t}}{N_t + \theta} + \frac{n_{k,t}}{N_t + R_t} 
\left(1-\frac{n_{k,t}}{N_t + \theta} \right) \\
& = \frac{n_{k,t}(N_t + \theta + R_t)}{(N_t + R_t)(N_t + \theta)} < \frac{n_{k,t}}{N_t} = \pi_{k,t}.
\end{align*}
So $(\pi_{k,t}, \, t \ge 0)$ is a supermartingale. 
By the martingale convergence theorem, $(\pi_{1,t}, \ldots, \pi_{K,t})$ converges almost surely to a random vector 
$(\pi_{1,\infty}, \ldots, \pi_{K,\infty})$.
Observe that $\mathbb{E}(\pi_{k,t+1}) = \frac{N_t(N_{t+1} + \theta)}{N_{t+1}(N_t + \theta)} \mathbb{E}(\pi_{k,t})$ which implies that
\begin{equation}
\label{eq:infprod}
\mathbb{E}(\pi_{k,\infty}) = \pi_{k,0} \prod_{t = 1}^{\infty}\left(1 -  \frac{\theta R_{t+1}}{ N_{t+1}(N_t + \theta)}\right).
\end{equation}
(1) Assume that $R_t$ is decreasing.
If $\lim_{t \to \infty} R_t = R > 0$, we have $\frac{R_{t+1}}{ N_{t+1}(N_t + \theta)} = \mathcal{O}(t^{-2})$, and 
if $R_t = \Theta(t^{-\alpha})$ for $\alpha < 1$, we have $\frac{R_{t+1}}{ N_{t+1}(N_t + \theta)} = \mathcal{O}(t^{-2 + \alpha})$.
In both cases, we get $\sum_{t = 1}^{\infty} \frac{R_{t+1}}{ N_{t+1}(N_t + \theta)} < \infty$ and thus the infinite product in \eqref{eq:infprod} converges to some number in $(0,1)$.
If $R_t =\Theta(t^{-\alpha})$ for $\alpha > 1$, we have $\frac{R_{t+1}}{ N_{t+1}(N_t + \theta)} \ge C t^{-1}$ for $t$ sufficiently large.
In this case, $\sum_{t = 1}^{\infty} \frac{R_{t+1}}{ N_{t+1}(N_t + \theta)} = \infty$.
Consequently, we get $\mathbb{E}(\pi_{k,\infty}) = 0$ which implies that $\pi_{k,\infty} = 0$ almost surely.

(2) Assume that $R_t = \rho N_{t-1}^{\gamma}$ for $\rho, \gamma > 0$.
If $\gamma < 1$, it follows from the proof of Lemma \ref{lem:polyat} that $R_t = \Theta\left(t^{\frac{\gamma}{1 - \gamma}}\right)$.
We have $\frac{R_{t+1}}{ N_{t+1}(N_t + \theta)} = \mathcal{O}\left(t^{-2 - \frac{\gamma}{1 - \gamma}}\right)$.
If $\gamma > 1$, then $\frac{R_{t+1}}{ N_{t+1}(N_t + \theta)} = \Theta(N_t^{-1})$ and $N_t$ grows exponentially in $t$.
In both cases, we have $\sum_{t = 1}^{\infty} \frac{R_{t+1}}{ N_{t+1}(N_t + \theta)} < \infty$ 
which implies that the infinite product in \eqref{eq:infprod} converges to some number in $(0,1)$. 

(3) It is easily checked that the PoS scheme \eqref{eq:selection1f}--\eqref{eq:selectiontf} with a constant reward is just the Blackwell-MacQueen urn with $\mu:= (\sum_{k = 1}^K n_{k,0} \delta_k + \theta \nu)/R$.
It follows from Lemma \ref{lem:BMscheme} that selection probability \eqref{eq:selectiontf} converges almost surely to a random discrete distribution $F \stackrel{d}{=} Dir(\mu)$, 
and given $F$ the indices of investors selected are independent and identically distributed as $F$.
This implies that the limiting share of investor $k$ is $\bet\left(\frac{n_{k,0}}{R}, \frac{N + \theta - n_{k,0}}{R}\right)$, and the results in Theorem \ref{thm:1} follow.
\end{proof}

\section{Numerical illustrations for Theorems \ref{thm:2} and \ref{thm:3}}
\label{scE}

\subsection{Numerical illustrations for Theorem \ref{thm:2}: decreasing reward} \hfill\par
\label{scE1}

\smallskip
(1) $R_t$ is bounded from $0$:
Figure \ref{fig:3} shows the concentration bound \eqref{eq:decinter1} for large investors;
Figure \ref{fig:4A} shows the bounded variance and the anti-concentration bound \eqref{eq:decsmall1} for medium investors;
Figure \ref{fig:4B} shows the exploding variance for small investors.

\begin{figure}[htb]
 \includegraphics[width= 0.45\textwidth]{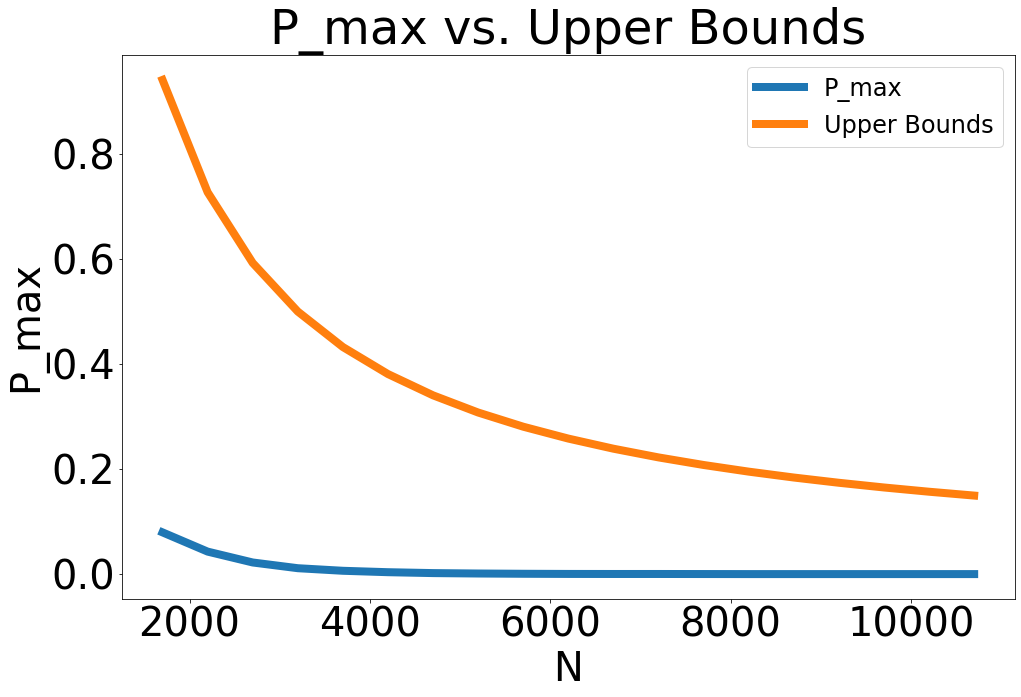}
\caption{Decreasing reward: stability of $\pi_{k,t}/\pi_{k,0}$ for large investors.
Blue curve: $P_{\max}$ is a MC estimate of $\max_{1 \le t \le 50000} \mathbb{P}\left(\left| \frac{\pi_{k,t}}{\pi_{k,0}}\right| > 0.05 \right)$.
Orange curve: right side upper bound in \eqref{eq:decinter1} with $R_t = 1 + 0.999^t$, $\varepsilon = 0.05$, $n_{k,0} = N/2$ and $N \in \{1700, 2200, 2700, ..., 10700\}$.}
\label{fig:3}
\end{figure}

\begin{figure}[htb]
    \centering
\begin{subfigure}{0.45\textwidth}
  \includegraphics[width=\linewidth]{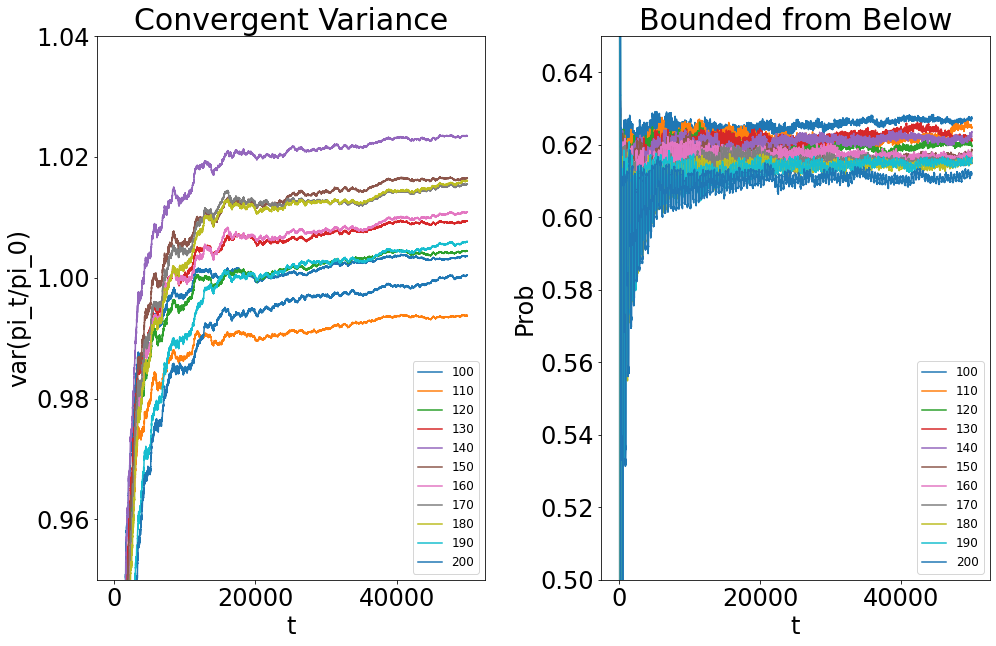}
  \caption{MC estimates of  $\var(\pi_{k,t}/\pi_{k,0})$ and $\mathbb{P}\left( \left|\frac{\pi_{k,t}}{\pi_{k,0}} - 1 \right| >   0.5\right)$
  with $R_t = 1 + t^{-1}$, $n_{k,0} =1$ and $N \in \{100, 110, 120, ..., 200\}$.}
    \label{fig:4A}
\end{subfigure}\hfil
\begin{subfigure}{0.45\textwidth}
  \includegraphics[width=\linewidth]{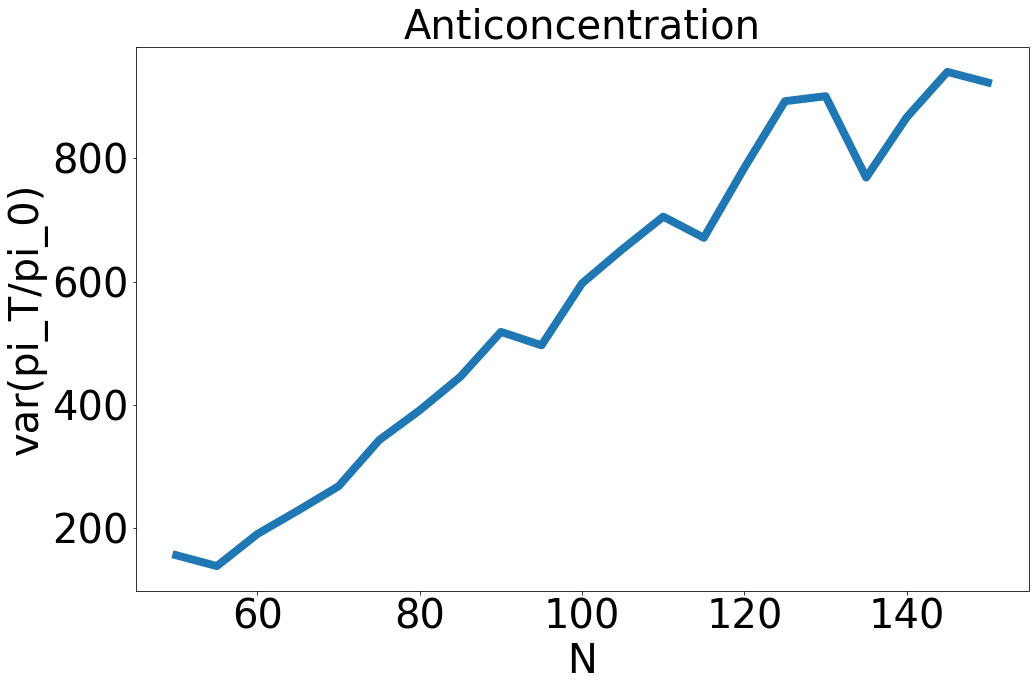}
  \caption{MC estimate of $\var(\pi_{k,50000}/\pi_{k,0})$ with $R_t = 1 + t^{-1}$, $n_{k,0} = N^{-1.1}$ and $N \in \{100, 110, 120, ..., 300\}$.}
  \label{fig:4B}
\end{subfigure}\hfil
\caption{Decreasing reward: instability of $\pi_{k,t}/\pi_{k,0}$ for medium and small investors.}
\end{figure}

(2) $R_t = \Theta(t^{-\alpha})$ for $\alpha > 1/2$:
Figure \ref{fig:5} shows the concentration bound \eqref{eq:decinter2} for large investors;
Figure \ref{fig:6A} shows the bounded variance for medium investors;
Figure \ref{fig:6B} shows the exploding variance for small investors.

\begin{figure}[htb]
  \includegraphics[width=0.45 \textwidth]{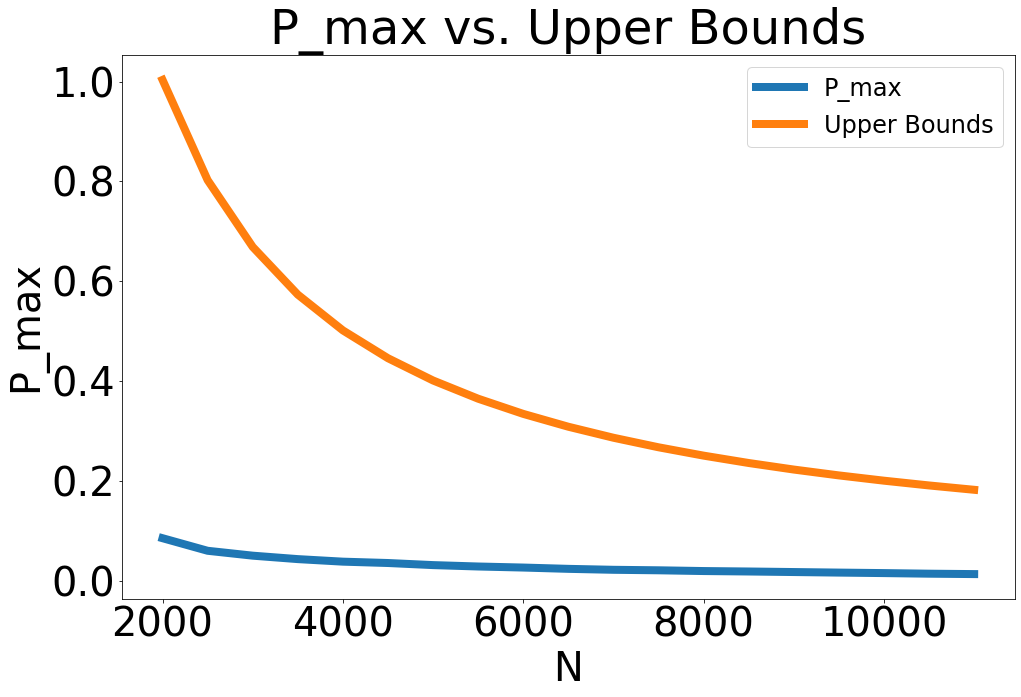}
\caption{Decreasing reward: stability of $\pi_{k,t}/\pi_{k,0}$ for large investors.
Blue curve: $P_{\max}$ is a MC estimate of $\max_{1 \le t \le 50000} \mathbb{P}\left(\left| \frac{\pi_{k,t}}{\pi_{k,0}}\right| > 0.05 \right)$.
Orange curve: right side upper bound in \eqref{eq:decinter2} with $R_t = t^{-0.6}$, $\varepsilon = 0.05$, $n_{k,0} = 1$ and $N \in \{2000, 2500, 3000, ..., 11000\}$.}
\label{fig:5}
\end{figure}

\begin{figure}[htb]
    \centering
\begin{subfigure}{0.42\textwidth}
  \includegraphics[width=\linewidth]{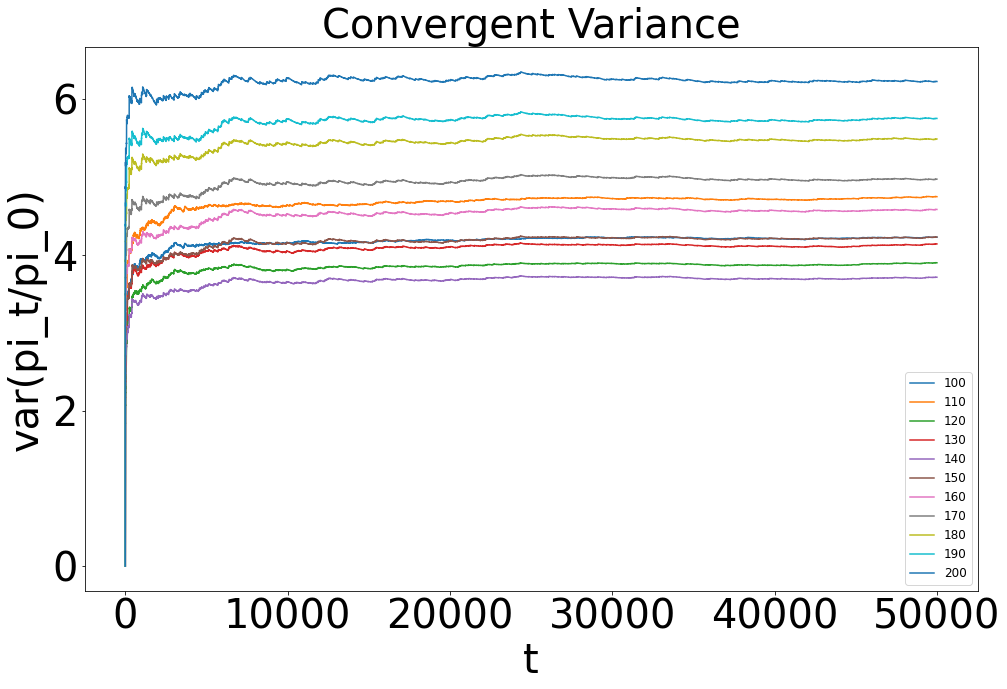}
  \caption{MC estimate of $\var(\pi_{k,t}/\pi_{k,0})$ with $R_t = t^{-0.6}$, $n_{k,0} = 1/N$ and $N \in \{100, 110, 120, ..., 200\}$.}
   \label{fig:6A}
\end{subfigure}\hfil
\begin{subfigure}{0.45\textwidth}
  \includegraphics[width=\linewidth]{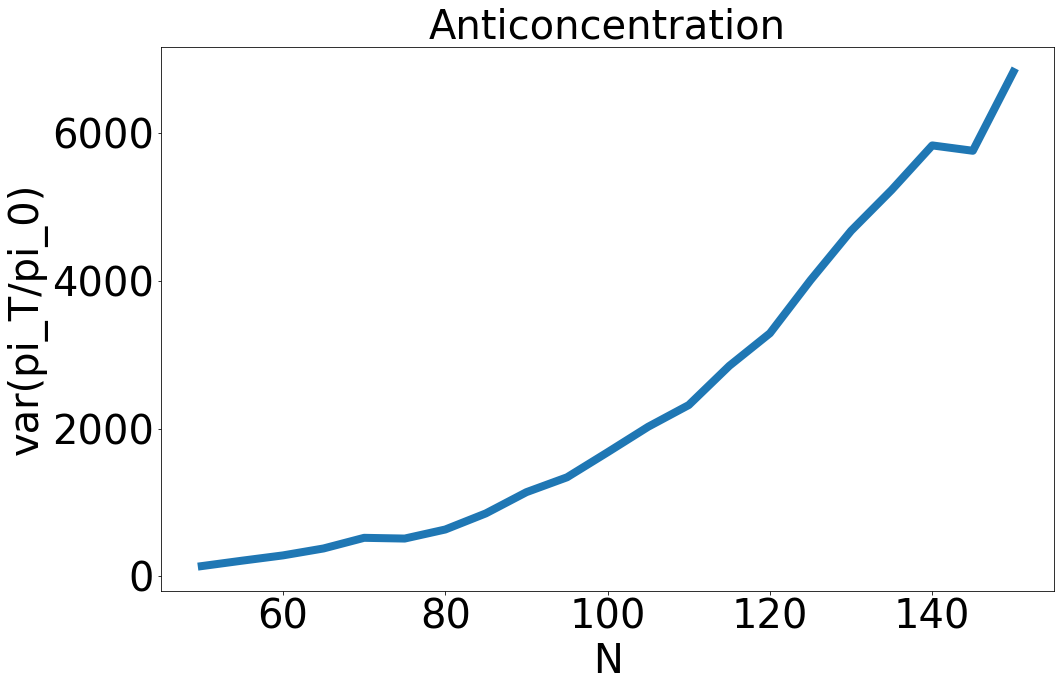}
  \caption{MC estimate of $\var(\pi_{k,50000}/\pi_{k,0})$ with $R_t = t^{-0.6}$, $n_{k,0} = 1/N^2$ and $N \in \{50, 55, 60, ..., 150\}$.}
  \label{fig:6B}
\end{subfigure}\hfil
\caption{Decreasing reward: instability of $\pi_{k,t}/\pi_{k,0}$ for medium and small investors.}
\end{figure}

(3) $R_t = \Theta(t^{-\alpha})$ for $\alpha < 1/2$:
Figure \ref{fig:7} shows the concentration bound \eqref{eq:decinter3} for large investors;
Figure \ref{fig:8A} shows the bounded variance and the anti-concentration bound \eqref{eq:decsmall3} for medium investors;
Figure \ref{fig:8B} shows the exploding variance for small investors.

\begin{figure}[htb]
  \includegraphics[width=0.45 \textwidth]{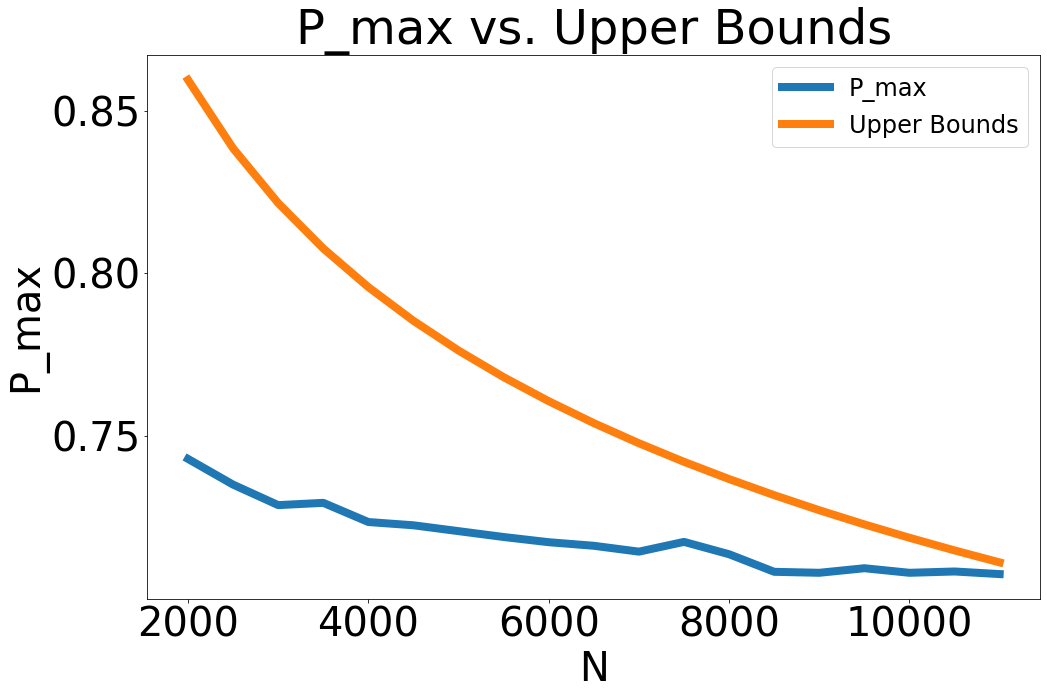}
\caption{Decreasing reward: stability of $\pi_{k,t}/\pi_{k,0}$ for large investors.
Blue curve: $P_{\max}$ is a MC estimate of $\max_{1 \le t \le 50000} \mathbb{P}\left(\left| \frac{\pi_{k,t}}{\pi_{k,0}}\right| > 0.25 \right)$.
Orange curve: right side upper bound in \eqref{eq:decinter3} with $R_t = t^{-0.1}$, $\varepsilon = 0.25$, $n_{k,0} = 1$ and $N \in \{2000, 2500, 3000, ..., 11000\}$.}
\label{fig:7}
\end{figure}

\begin{figure}[htb]
    \centering
\begin{subfigure}{0.45\textwidth}
  \includegraphics[width=\linewidth]{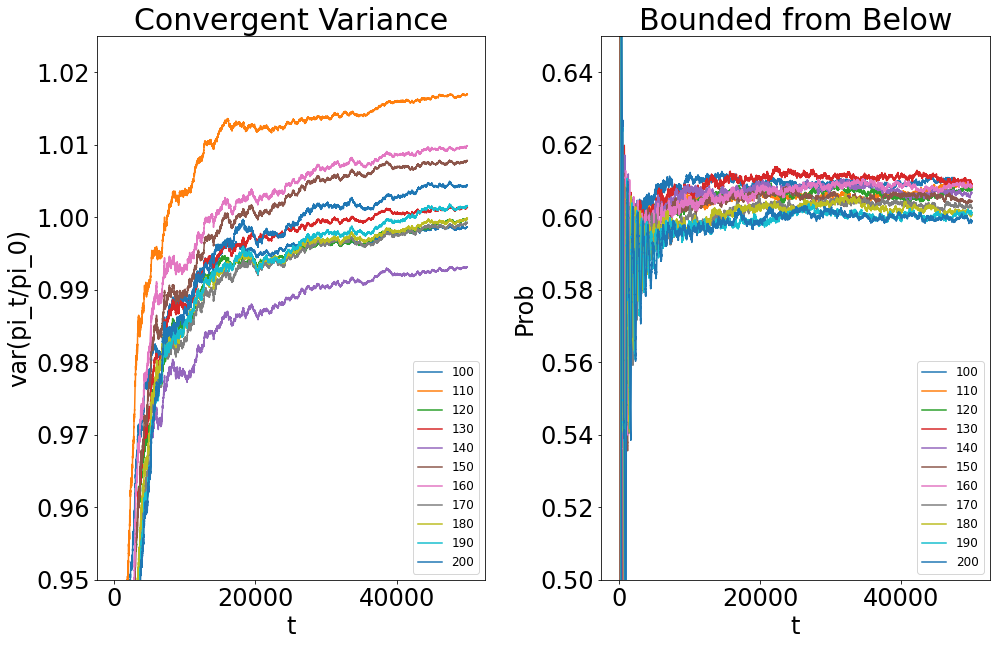}
  \caption{MC estimates of  $\var(\pi_{k,t}/\pi_{k,0})$ and $\mathbb{P}\left( \left|\frac{\pi_{k,t}}{\pi_{k,0}} - 1 \right| >   0.5\right)$
  with $R_t = t^{-0.1}$, $n_{k,0} =N^{-1/9}$ and $N \in \{100, 110, 120, ..., 200\}$.}
    \label{fig:8A}
\end{subfigure}\hfil
\begin{subfigure}{0.45\textwidth}
  \includegraphics[width=\linewidth]{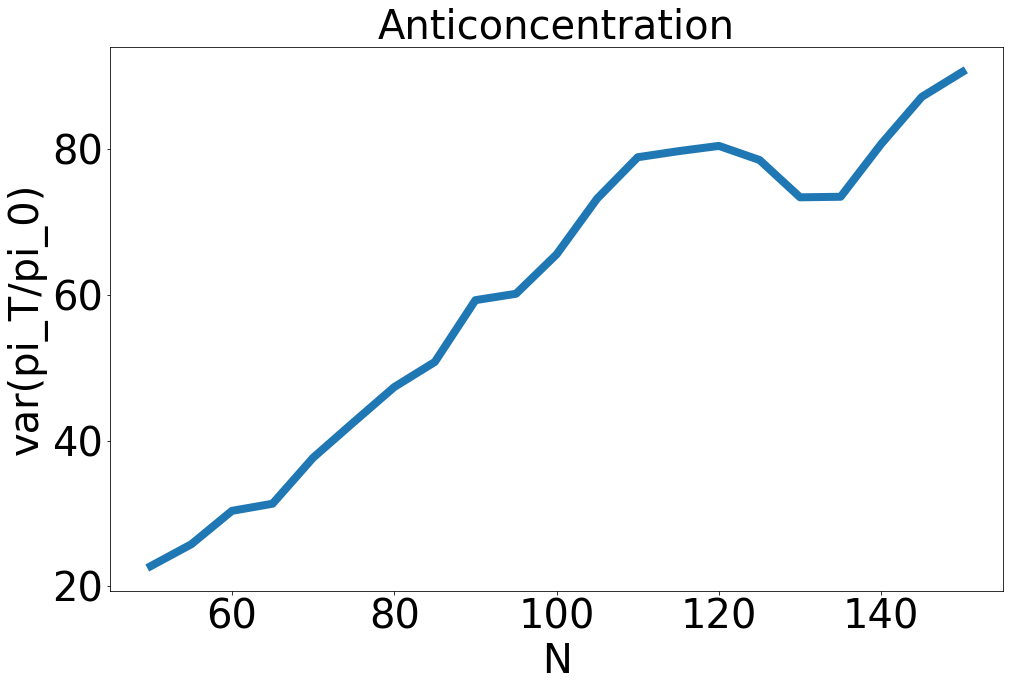}
  \caption{MC estimate of $\var(\pi_{k,50000}/\pi_{k,0})$ with $R_t = t^{-0.1}$, $n_{k,0} = 1/N$ and $N \in \{50, 55, 60, ..., 150\}$.}
  \label{fig:8B}
\end{subfigure}\hfil
\caption{Decreasing reward: instability of $\pi_{k,t}/\pi_{k,0}$ for medium and small investors.}
\end{figure}

\subsection{Numerical illustrations for Theorem \ref{thm:3}: increasing reward}
\label{scE2}
\hfill\par

(1) $R_t = \rho N_{t-1}^\gamma$ for $\gamma < 1$: 
Figure \ref{fig:9} shows the chaotic centralization under a geometric reward.

\begin{figure}[htb]
    \centering
\begin{subfigure}{0.3\textwidth}
  \includegraphics[width=\linewidth]{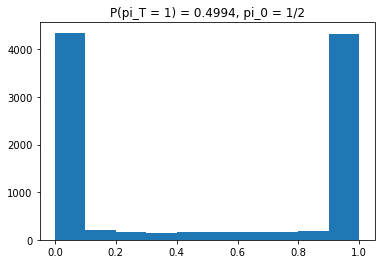}
  \caption{$\pi_{k,0} = \frac{1}{2}$}
\end{subfigure}\hfil
\begin{subfigure}{0.3\textwidth}
  \includegraphics[width=\linewidth]{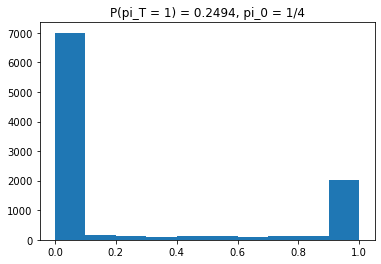}
  \caption{$\pi_{k,0} = \frac{1}{4}$}
\end{subfigure}\hfil
\begin{subfigure}{0.3\textwidth}
  \includegraphics[width=\linewidth]{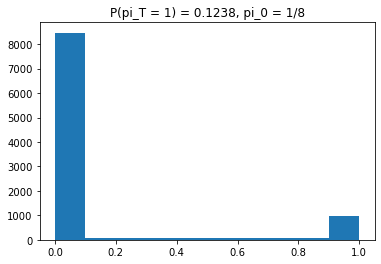}
  \caption{$\pi_{k,0} = \frac{1}{8}$}
\end{subfigure}
\caption{Increasing reward: histogram of $\pi_{k, 5000}$ with $\rho = 0.001$, $\gamma = 1.1$, $N = 1000$ and $\pi_{k,0} \in \{1/2, 1/4, 1/8\}$}
\label{fig:9}
\end{figure}

(2) $R_t = \rho N_{t-1}^\gamma$ for $\gamma < 1$: 
Figure \ref{fig:10} shows the concentration bound \eqref{eq:incinter} for large investors;
Figure \ref{fig:11A} shows the bounded variance and the anti-concentration bound \eqref{eq:incsmall} for medium investors;
Figure \ref{fig:11B} shows the exploding variance for small investors.

\begin{figure}[htb]
  \includegraphics[width=0.45\textwidth]{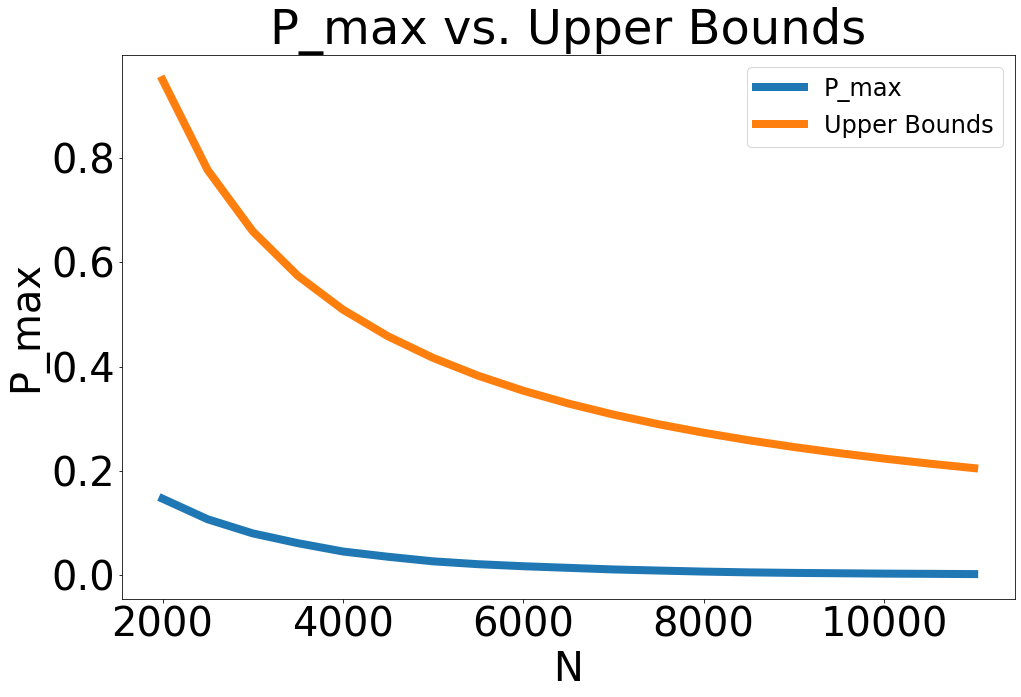}
\caption{Increasing reward: stability of $\pi_{k,t}/\pi_{k,0}$ for large investors.
Blue curve: $P_{\max}$ is a MC estimate of $\max_{1 \le t \le 50000} \mathbb{P}\left(\left| \frac{\pi_{k,t}}{\pi_{k,0}}\right| > 0.05 \right)$.
Orange curve: right side upper bound in \eqref{eq:decinter2} with $\rho = 1$, $\gamma = 0.1$, $\varepsilon = 0.05$, $n_{k,0} = N/2$ and $N \in \{2000, 2500, 3000, ..., 11000\}$.}
\label{fig:10}
\end{figure}

\begin{figure}[htb]
    \centering
\begin{subfigure}{0.45\textwidth}
  \includegraphics[width=\linewidth]{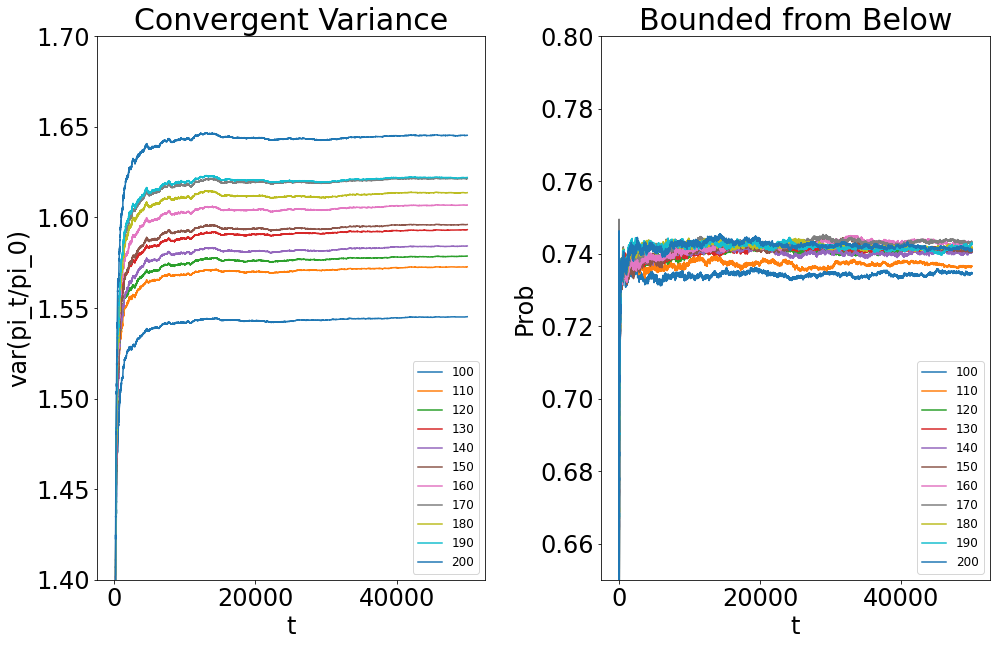}
  \caption{MC estimates of  $\var(\pi_{k,t}/\pi_{k,0})$ and $\mathbb{P}\left( \left|\frac{\pi_{k,t}}{\pi_{k,0}} - 1 \right| >   0.5\right)$
  with $\rho = 1$, $\gamma = 0.1$, $n_{k,0} =N^{1/2}$ and $N \in \{100, 110, 120, ..., 200\}$.}
  \label{fig:11A}
\end{subfigure}\hfil
\begin{subfigure}{0.45\textwidth}
  \includegraphics[width=\linewidth]{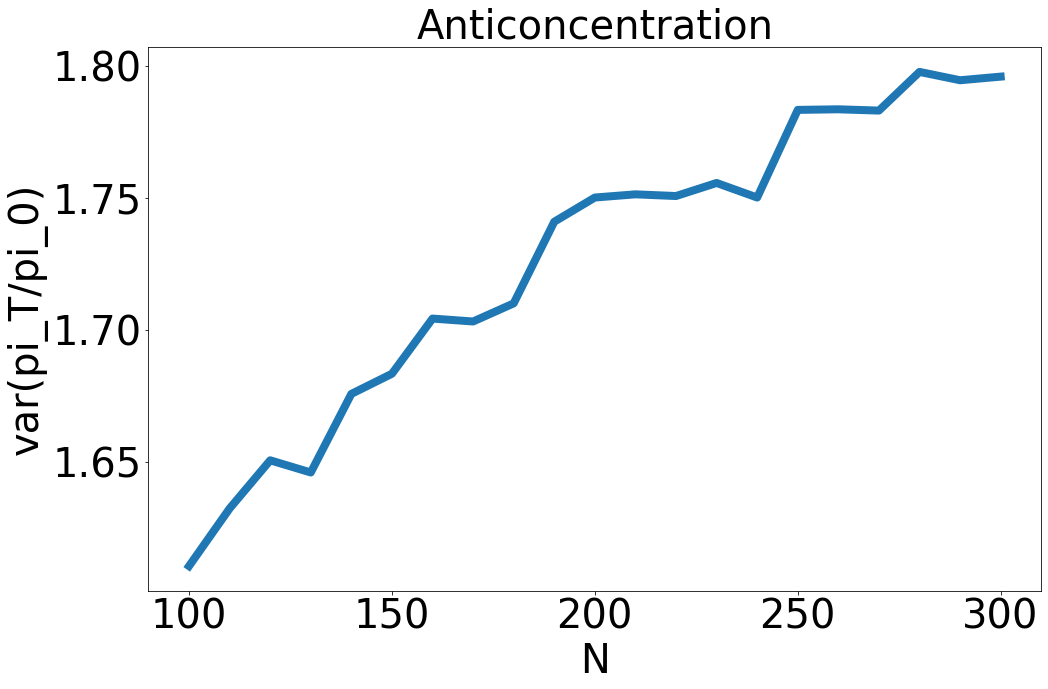}
  \caption{MC estimate of $\var(\pi_{k,50000}/\pi_{k,0})$ with $\rho = 1$, $\gamma = 0.1$, $n_{k,0} = N^{0.01}$ and $N \in \{100, 110, 120, ..., 300\}$.}
  \label{fig:11B}
\end{subfigure}\hfil
\caption{Increasing reward: case $(ii)$ and case $(iii)$ investors}
\end{figure}

\clearpage

\bibliographystyle{abbrvnat}
\bibliography{unique}
\end{document}